\def\csname ver@etex.sty\endcsname{3000/12/31}
\newcommand{\detsum}{\mathbin{\nabla}}
\DeclareMathOperator{\Const}{Const}
\DeclareMathOperator{\E}{E}
\newcommand{\vdet}{\zeta}
\newcommand{\tvdet}{\hat{\zeta}}
\newcommand{\D}{\mathrm{D}}
\newcommand{\fsbr}[1]{\text{\textup{$\mathrm{<}\mkern-2.3mu\llap(\,#1\mathrm{>}\mkern-4.6mu\llap)$}}\,}
\title{\texorpdfstring{%
  Computing Valuations of the Dieudonné Determinants\thanks{%
    A preliminary version of the part of this paper about Edmonds' problem has been appeared at the 47th International Colloquium on Automata, Languages and Programming (ICALP ’20), July 2020, under the title of ``On solving (non)commutative weighted Edmonds’ problem''.
    The previous version of this paper was titled ``Computing the maximum degree of minors in skew polynomial matrices''.
  }%
}{%
  Computing Valuations of the Dieudonné Determinants
}}
\author{Taihei Oki%
  \texorpdfstring{\thanks{
    Department of Mathematical Informatics, Graduate School of Information Science and Technology, University of Tokyo, Tokyo 113-8656, Japan.
    E-mail: \href{mailto:taihei_oki@mist.i.u-tokyo.ac.jp}{\nolinkurl{taihei_oki@mist.i.u-tokyo.ac.jp}}
  }}{}
}
\newcommand{\mykeywords}{valuation skew fields, Dieudonné determinants, skew polynomials, differential equations, difference equations, combinatorial relaxation, matrix expansion}
\begin{document}
\maketitle

\begin{abstract}

This paper addresses the problem of computing valuations of the Dieudonné determinants of matrices over discrete valuation skew fields (DVSFs).
Under a reasonable computational model, we propose two algorithms for a class of DVSFs, called split.
Our algorithms are extensions of the combinatorial relaxation of Murota~(1995) and the matrix expansion by Moriyama--Murota~(2013), both of which are based on combinatorial optimization.
While our algorithms require an upper bound on the output, we give an estimation of the bound for skew polynomial matrices and show that the estimation is valid only for skew polynomial matrices.

We consider two applications of this problem.
The first one is the noncommutative weighted Edmonds' problem (nc-WEP), which is to compute the degree of the Dieudonné determinants of matrices having noncommutative symbols.
We show that the presented algorithms reduce the nc-WEP to the unweighted problem in polynomial time.
In particular, we show that the nc-WEP over the rational field is solvable in time polynomial in the input bit-length.
We also present an application to analyses of degrees of freedom of linear time-varying systems by establishing formulas on the solution spaces of linear differential/difference equations.

  \bigskip\noindent\textbf{Keywords:} \mykeywords{}
\end{abstract}

\newpage
\tableofcontents

\newpage

\section{Introduction}\label{sec:introduction}
A (\emph{real}) \emph{valuation} on a field $F$ is a map $\funcdoms{v}{F}{\setR \cup \set{+\infty}}$ such that
\begin{enumerate}[label={\upshape{(V\arabic*)}}]
  \item $v(ab) = v(a) + v(b)$ for $a, b \in F$,\label{item:V1}
  \item $v(a+b) \ge \min\set{v(a), v(b)}$ for $a, b \in F$,\label{item:V2}
  \item $v(1) = 0$,\label{item:V3}
  \item $v(0) = +\infty$.\label{item:V4}
\end{enumerate}
A valuation is called \emph{discrete} if $v(F) = \setZ \cup \set{+\infty}$.
For example, the minus of the degree is a discrete valuation on the rational function field $K(s)$ over a field $K$, where $\deg p/q \defeq \deg p - \deg q$ for $p, q \in K[s]$.
The $p$-adic valuation on rationals $\setQ$ is another example.
A field equipped with a discrete valuation is called a \emph{discrete valuation field} (DVF).

Valuations of determinants of matrices over a DVF often appear as matrix formulations of combinatorial optimization problems.
For example, \emph{weighted Edmonds' problem}~(WEP), which is to compute the degree of the determinant of a polynomial matrix having symbols, reduces to the weighted bipartite matching problem and the weighted linear matroid intersection and parity problems depending on symbols' pattern~\cite{Hirai2019}.
Conversely, the degree of the determinant of an arbitrary polynomial matrix serves as a lower bound on the maximum weight of a perfect matching in the associated edge-weighted bipartite graph.
Based on this relation, the \emph{combinatorial relaxation} algorithm of Murota~\cite{Murota1995a} computes the degree of the determinant of a polynomial matrix by iteratively solving the weighted bipartite matching problem.

Computing valuations of determinants is also applied to linear differential equations.
Consider a linear differential equation
\begin{align}\label{eq:time-invariant-linear-differential-equation}
  A_0 y + A_1 y' + \dotsb + A_l y^{(l)} = 0
\end{align}
for $\funcdoms{y}{\setR}{\setC^n}$, where $A_0, \dotsc, A_l \in \setC^{n \times n}$.
The set of all solutions of~\eqref{eq:time-invariant-linear-differential-equation} forms a vector space over $\setC$.
Classical Chrystal's theorem~\cite{Chrystal1897} states that the dimension of the solution space of~\eqref{eq:time-invariant-linear-differential-equation} is equal to the degree of the determinant of $A_0 + A_1 s + \dotsb + A_l s^l \in {\setC[s]}^{n \times n} \hookrightarrow {\setC(s)}^{n \times n}$.
Hence one can analyze the degrees of freedom of linear time-invariant systems by computing valuations of determinants of matrices over a DVF.

This paper addresses a noncommutative generalization of computing valuations of determinants.
A \emph{discrete valuations skew field} (DVSF) is naturally defined as in the commutative case~\cite{Warner1993}.
The \emph{Dieudonné determinant}~\cite{Dieudonne1943}, denoted by $\Det$, is a generalization of the determinant for matrices over skew fields (see \cref{sec:matrices-over-skew-fields} for definition).
The Dieudonné determinant retains useful properties of the usual determinant such as $\Det AB = \Det A \Det B$.
While $\Det A$ for $A \in F^{n \times n}$ is no longer an element in a skew field $F$, when $F$ is a DVSF, its valuation $\vdet(A) \defeq v\prn{\Det A}$ is well-defined.

In the following of this introduction, we first describe applications of valuations of the Dieudonné determinants in \cref{sec:weighted-edmonds-problem,sec:linear-differential-difference-equations}.
Then \cref{sec:computational-model} states a computational model which we use and \cref{sec:contributions} presents our contributions.
Related work and organization of this paper are described in \cref{sec:related-work} and \cref{sec:organization}, respectively.

\subsection{Weighted Edmonds' Problem}\label{sec:weighted-edmonds-problem}
In 1967, Edmonds~\cite{Edmonds1967} posed a question whether there exists a polynomial-time algorithm to compute the rank of a \emph{linear} (\emph{symbolic}) \emph{matrix} $B$ over a field $K$, which is in the form
\begin{align} \label{eq:linear_matrix}
  B = B_0 + B_1 x_1 + \cdots + B_m x_m,
\end{align}
where $B_0, B_1 \ldots, B_m \in K^{n \times n}$ and $x_1, \ldots, x_m$ are commutative symbols.
Here, $B$ is regarded as a matrix over the polynomial ring $K[x_1, \ldots, x_m]$ or the rational function field $K(x_1, \ldots, x_m)$.
In case where $B$ is the Edmonds or Tutte matrix of a bipartite or nonbipartite graph $G$, the rank computation for $B$ corresponds to solving the maximum matching problem on $G$.
More generally, Lovász~\cite{Lovasz1989} showed that Edmonds' problem is equivalent to a linear matroid intersection problem if all $B_i$ are of rank 1, and to a linear matroid parity problem if all $B_i$ are skew-symmetric matrices of rank 2.
For general linear matrices, the celebrated Schwartz--Zippel lemma~\cite{Schwartz1980} provides a simple randomized algorithm if $\card{K}$ is large enough~\cite{Lovasz1989}.
However, no deterministic polynomial-time algorithm still has been known; the existence of such an algorithm would imply nontrivial circuit complexity lower bounds~\cite{Kabanets2004,Valiant1979}.

Recent studies~\cite{Garg2016,Hamada2020,Ivanyos2018} address the noncommutative version of Edmonds' problem (nc-Edmonds' problem).
This is a problem of computing the \emph{noncommutative rank} (nc-rank) of $B$, which is the rank defined by regarding $x_1, \ldots, x_m$ as pairwise noncommutative, i.e., $x_i x_j \neq x_j x_i$ if $i \neq j$.
In this way, $B$ is viewed as a matrix over the free ring $K \agbr{x_1, \ldots, x_m}$ generated by noncommutative symbols $x_1, \ldots, x_m$.
The nc-rank of $B$ is precisely the rank of $B$ over a skew (noncommutative) field $K \fsbr{x_1, \ldots, x_m}$, called a $\emph{free skew field}$, which is the quotient of $K \agbr{x_1, \ldots, x_m}$ defined by Amitsur~\cite{Amitsur1966}.
We call a linear matrix over $K$ having noncommutative symbols an \emph{nc-linear matrix} over $K$.
The recent studies~\cite{Garg2016,Hamada2020,Ivanyos2018} revealed that nc-Edmonds' problem is deterministically tractable.
For the case where $K$ is the set $\setQ$ of rational numbers, Garg et al.~\cite{Garg2016} proved that Gurvits' \emph{operator scaling algorithm}~\cite{Gurvits2004} deterministically computes the nc-rank of $B$ in $\poly(n, m)$ arithmetic operations on $\setQ$.
Algorithms over general field $K$ were later given by Ivanyos et al.~\cite{Ivanyos2018} and Hamada--Hirai~\cite{Hamada2020} exploiting the min-max theorem established for nc-rank.
When $K = \setQ$, these algorithms run in time polynomial in the bit-length of the input.

Hirai~\cite{Hirai2019} introduced a weighted version of Edmonds' problem.
First, consider commutative symbols $x_1, \ldots, x_m$ and an extra commutative symbol $s$.
Define a matrix
\begin{align}\label{eq:linear_polynomial_matrix}
  A = A_l + A_{l-1} s + \cdots + A_0 s^l,
\end{align}
where $A_d = A_{d,0} + A_{d,1}x_1 + \cdots + A_{d,m}x_m \in {K[x_1, \ldots, x_m]}^{n \times n}$ is a linear matrix over $K$ for $d = 0, \ldots, l$.
We call~\eqref{eq:linear_polynomial_matrix} a \emph{linear polynomial matrix} over $K$.
The \emph{weighted Edmonds' problem} (WEP) is the problem to compute the degree (in $s$) of the determinant of $A$.
Analogously to Edmonds' problem, WEP includes a bunch of weighted combinatorial optimization problems as special cases, such as a maximum weighted perfect matching problem, a weighted linear matroid intersection problem and a weighted linear matroid parity problem; see~\cite[Section~5]{Hirai2019}.

Next, let $x_1, \ldots, x_m$ be noncommutative symbols and $s$ an extra symbol that commutes with any element in $K\agbr{x_1, \ldots, x_m}$.
An \emph{nc-linear polynomial matrix} $A$ over $K$ is a matrix in the form of~\eqref{eq:linear_polynomial_matrix} with each $A_d$ regarded as an nc-linear matrix.
Then $A$ can be viewed as a matrix over the rational function (skew) field $F \defeq K\fsbr{x_1, \ldots, x_m}(s)$.
Now $F$ is a DVSF equipped with discrete valuation $-\deg$.
\emph{Noncommutative weighted Edmonds' problem} (nc-WEP) is the problem to compute $\deg \Det$ of a given nc-linear polynomial matrix.
Hirai~\cite{Hirai2019} formulated the dual problem of nc-WEP as the minimization of an \emph{L-convex function} on a \emph{uniform modular lattice}, and gave an algorithm based on the steepest gradient descent.
Hirai's algorithm uses $\poly(n,m,l)$ arithmetic operations on $K$ while no bit-length bound has been given for $K = \setQ$.

\subsection{Linear Differential/Difference Equations}\label{sec:linear-differential-difference-equations}
Polynomials in differential or difference operators give rise to noncommutative valuations.
Let $K$ be a skew field, $\funcdoms{\sigma}{K}{K}$ a ring automorphism, and $\funcdoms{\delta}{K}{K}$ a (\emph{left}) \emph{$\sigma$-derivation}; that is, it is additive, i.e., $\delta(a+b) = \delta(a) + \delta(b)$, and $\delta(ab) = \sigma(a)\delta(b) + \delta(a)b$ for $a,b \in K$.
A \emph{skew polynomial}, or an \emph{Ore polynomial} due to Ore~\cite{Ore1933}, over $(K, \sigma, \delta)$ in indeterminate $s$ is a polynomial over $K$ with the usual addition and a twisted multiplication defined by the commutation rule
\begin{align}\label{eq:skew_polynomial_commutation_rule}
  sa = \sigma(a)s + \delta(a)
\end{align}
for $a \in K$.
The \emph{skew polynomial ring} over $(K, \sigma, \delta)$ is denoted by $K[s; \sigma, \delta]$.
Besides the polynomial ring $K[s]$, the ring $\setC(t)[\partial; \id, ']$ of differential operators is an example of a skew polynomial ring, where $\funcdoms{'}{\setC(t)}{\setC(t)}$ is the usual differentiation.
Another example is the ring $\setC(t)[S; \tau, 0]$ of shift operators, where $\funcdoms{\tau}{\setC(t)}{\setC(t)}$ is defined by $f(t) \mapsto f(t+1)$ for $f \in \setC(t)$.
The degree of a skew polynomial is naturally defined and it extends to the \emph{skew rational function field} $K(s; \sigma, \delta)$, which is the \emph{Ore quotient skew field} of $K[s; \sigma, \delta]$.
Then $K(s; \sigma, \delta)$ is a DVSF with valuation $-\deg$.

Let $K$ be a field of characteristic $0$ equipped with an ($\id$-)derivation $\delta$.
Consider a linear differential equation
\begin{align}\label{eq:time-varying-linear-differential-equation}
  A_0 y + A_1 \delta(y) + \dotsb + A_l \delta^l(y) = 0
\end{align}
for $y \in K^n$ with $A_0, \dotsc, A_l \in K^{n \times n}$.
Taelman~\cite{Taelman2006} showed that the dimension of the solution space (over an adequate field extension of $K$) of~\eqref{eq:time-varying-linear-differential-equation} is equal to $\deg \Det A$ with $A \defeq A_0 + A_1 s + \dotsb + A_l s^l \in {K[s; \id, \delta]}^{n \times n}$.
This is a ``time-varying'' generalization of Chrystal's theorem.
We show that the assumption on the characteristic can be removed and a similar formula holds for linear difference equations using two kinds of valuations (see \cref{sec:application2}).
In this way, computing valuations of the Dieudonné determinants of matrices over DVSFs can be applied to analysis of time-varying linear differential or difference equations.

\subsection{Computational Model}\label{sec:computational-model}
We design algorithms to compute $\vdet(A)$ for a matrix $A$ over a DVSF $F$ without restricting $F$ to a skew rational function field so that the algorithms can be applied as widely as possible.
Here we need to clarify a computational model to deal with representation of elements in $F$ and operations on $F$.
The simplest model is the arithmetic model on $F$, i.e., an element in $F$ is stored in a unit memory cell and we can perform arithmetic operations on $F$ in constant time.
In this model, one can compute $\zeta(A)$ in $\Order(n^\omega)$-time by the Gaussian elimination, where $\omega$ is the exponent in the time complexity of multiplying two matrices.
However, this model is too simplified and cannot catch the computational cost needed in the standard representation of some DVSF like $F = K(s)$.

As a representation of elements in $F$, we adopt the \emph{$\pi$-adic expansion}, in which each $a \in F$ is expressed as a formal Laurent series
\begin{align}
  a = \sum_{d=l}^\infty a_d \pi^d.
\end{align}
Here, $l \in \setZ$, $\pi \in F$ is a fixed element with $v(\pi) = 1$ called a \emph{uniformizer}, and $a_l, a_{l+1}, \dotsc$ are elements in a fixed subset $Q \subseteq F$, called a \emph{representative set}.
The representative set is selected so that the $\pi$-adic expansion is unique; such $Q$ exists for any DVSF.
While we would like to adopt the ``arithmetic model on $Q$'', the set $Q$ might not be a skew field, i.e., arithmetic operations on $Q$ might not be closed.
We thus require $F$ to have a representative set that is a skew subfield of $F$.
Such a DVSF is called \emph{split}~\cite{Dumas1992}.

Let $F$ be a split DVSF with a closed representative set $K$, called the \emph{coefficient skew subfield}.
The ring structure of $F$ is completely determined from the commutation rule between a uniformizer $\pi \in F$ and each $a \in K$.
The element $\pi a$ is uniquely expressed as
\begin{align}\label{eq:higher_sigma_derivation_pi}
  \pi a = \sum_{d=0}^\infty \delta_d(a) \pi^{d+1},
\end{align}
where $\funcdoms{\delta_d}{K}{K}$ is a map satisfying the axioms of \emph{higher $\sigma$-derivations}~\cite{Roux1986}.
We also assume the oracle access to each $\delta_d$, i.e., we can compute $\delta_d(a)$ in constant time for each $d \in \setN$ and $a \in K$.

\subsection{Contributions}\label{sec:contributions}
Under the above setting, this paper presents two algorithms to compute $\vdet(A)$ for $A \in F^{n \times n}$, both of which are based on combinatorial optimization.
The first algorithm is a generalization of the \emph{combinatorial relaxation} of Murota~\cite{Murota1995a} that computes $\deg \det$ of polynomial matrices over a field.
Constructing an edge-weighted bipartite graph $G(A)$ from $A$ reflecting the valuation of each entry, one can show that $\vdet(A)$ is lower bounded by the minimum weight of a perfect matching of $G(A)$.
Based on this relation, the combinatorial relaxation algorithm computes $\vdet(A)$ by iteratively solving the weighted matching problem.

The second algorithm generalizes the \emph{matrix expansion}, which reduces the computation of $\vdet(A)$ to the rank computation of a block matrix over $K$ obtained by arranging coefficient matrices of $\pi^i A$ with $i \in \setN$.
The correctness of the matrix expansion essentially relies on the \emph{Legendre conjugacy} between integer sequences of the valuations of minors of $A$ and ranks of block matrices.
The Legendre conjugacy is an important duality relation on discrete convex and concave functions treated in \emph{discrete convex analysis}~\cite{Murota2003}.
Our matrix expansion generalizes algorithms of Van Dooren et al.~\cite{Vandooren1979} for $\deg \det$ on $\setC(s)$ and Moriyama--Murota~\cite{Moriyama2013} for $\deg \det$ on $K(s)$ with a field $K$.

The running times of our algorithms are estimated as follows.

\begin{theorem}\label{thm:complexity-of-combinatorial-relaxation}
  Let $F$ be a split DVSF with uniformizer $\pi$ and coefficient skew subfield $K$.
  Let $A = \sum_{d=0}^l A_d \pi^d \in F^{n \times n}$ be a square matrix over $F$ with $A_0, \dotsc, A_l \in K^{n \times n}$.
  Given $A_0, \dotsc, A_l$ and $M \in \setN$ such that $\vdet(A) \le M$ or $A$ is singular, we can compute $\vdet(A)$ by the combinatorial relaxation algorithm in $\Order\prn{M^3n^2 + M^2n^\omega + Mn^{2.5}}$-time and by the matrix expansion algorithm in $\Order\prn{M^3n^2 + M^\omega n^\omega}$-time.
\end{theorem}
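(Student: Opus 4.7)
The plan is to analyze both algorithms phase by phase under a common invariant: throughout the execution, every intermediate matrix is stored as its first $\Order(M)$ $\pi$-adic coefficients $A_0, A_1, \dotsc, A_{\Order(M)} \in K^{n\times n}$. The hypothesis that $\vdet(A) \le M$ or $A$ is singular will enter at exactly two places: to cap the number of outer iterations of the combinatorial relaxation at $\Order(M)$, and to guarantee that the relevant $\pi$-adic expansions truncate after $\Order(M)$ terms without affecting the output. Under this invariant, each of the three terms in the two bounds corresponds to a distinct source of cost: coefficient bookkeeping via the higher $\sigma$-derivations $\delta_d$, a linear-algebra subroutine over $K$, and (for combinatorial relaxation only) a weighted bipartite matching subroutine.

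For the combinatorial relaxation algorithm, each outer iteration performs: (a)~solve a minimum-weight perfect matching on the bipartite graph $G(A)$ whose edge weights are the entry-valuations of $A$; (b)~test whether the matching lower bound is tight by a rank computation of a coefficient matrix over $K$; and (c)~if not tight, apply a ``biproper'' modification that strictly decreases the integral upper bound $\tvdet(A)$ while preserving $\vdet(A)$. Since $\tvdet(A)$ is integer-valued and can be assumed to start at $\Order(M)$, the loop terminates in $\Order(M)$ iterations. The matching in (a) has weights bounded by $\Order(M)$, hence a scaling-based assignment algorithm solves it in $\Order(n^{2.5})$-time, contributing $\Order(Mn^{2.5})$ overall. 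The tightness test (b) requires extracting a single $\Order(M)$-th coefficient (obtained by $\Order(M)$ applications of \eqref{eq:higher_sigma_derivation_pi}) and taking the rank of an $n\times n$ matrix over $K$; this costs $\Order(Mn^\omega)$ per iteration and $\Order(M^2 n^\omega)$ in total. The modification (c) rewrites each of the $\Order(M)$ maintained coefficient matrices $A_d$ using $\Order(M)$ $\sigma$-derivations on $\Order(n^2)$ entries, giving $\Order(M^2 n^2)$ per iteration and $\Order(M^3 n^2)$ in total.

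For the matrix expansion algorithm, I would invoke the Legendre-conjugacy viewpoint developed earlier in the paper: $\vdet(A)$ is determined by the rank over $K$ of a single block ``expansion'' matrix $E(A)$ of size $\Order(Mn) \times \Order(Mn)$ whose $(i,j)$-block is the $(j-i)$-th $\pi$-adic coefficient of $A$. Building $E(A)$ requires computing, for $i = 0, 1, \dotsc, \Order(M)$, the coefficients of $\pi^i A$ via \eqref{eq:higher_sigma_derivation_pi}; each raise-by-$\pi$ costs $\Order(Mn^2)$ since we maintain $\Order(M)$ coefficient matrices with $n^2$ entries each, and we perform $\Order(M)$ raises, giving $\Order(M^3 n^2)$. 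A single Gaussian elimination over $K$ then computes $\rank E(A)$ in $\Order((Mn)^\omega) = \Order(M^\omega n^\omega)$, yielding the second term.

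The principal obstacle is the per-iteration accounting in step~(c) of the combinatorial relaxation: I must certify that the biproper modification never expands the $\pi$-adic support of any coefficient matrix beyond order $\Order(M)$, and that the dual matching potentials maintained between iterations remain $\Order(M)$ in absolute value, so that the next matching instance still has polynomially-bounded weights and the truncation invariant is preserved. Once these structural invariants are established inductively (relying on properties of the modification proved in the preceding sections), the stated complexity bounds follow by aggregation of the per-iteration costs enumerated above.
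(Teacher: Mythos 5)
Your proposal is correct and follows essentially the same route as the paper: truncate to the first $\Order(M)$ $\pi$-adic coefficients, bound the number of combinatorial-relaxation iterations by $\Order(M)$ via the integrality and monotonicity of $\tvdet$, charge $\Order(n^{2.5})$ per matching, $\Order(Mn^\omega)$ per iteration for the linear algebra over $K$, and $\Order(M^3n^2)$ total for the coefficient bookkeeping via~\eqref{eq:coefficient-in-pi-a}; for matrix expansion, build $\Omega_M(A)$ and $\Omega_{M+1}(A)$ and take ranks, and the ``obstacles'' you flag are exactly what the paper's truncation lemma and the nonpositivity of $\Delta p$ resolve. Two small accounting slips that do not change the final bounds: the $\Order(Mn^\omega)$ per-iteration term comes from multiplying $U \in \GL_n(K)$ into the $\Order(M)$ stored coefficient matrices (a single rank test is only $\Order(n^\omega)$), and each raise-by-$\pi$ costs $\Order(M^2n^2)$, not $\Order(Mn^2)$, since each of the $\Order(M)$ output coefficients in~\eqref{eq:coefficient-in-pi-a} is a sum of $\Order(M)$ terms.
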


As shown in \cref{thm:complexity-of-combinatorial-relaxation}, our algorithms additionally require an upper bound $M$ on $\vdet(A)$ by technical reasons.
While estimating such $M$ seems to be difficult for general DVSFs, one can adopt $M \defeq ln$ for $A = \sum_{d=0}^l A_l s^d \in {K[s]}^{n \times n}$ with $K$ being a field.
This indeed holds for skew polynomial rings and it yields the following corollary:

\begin{theorem}\label{thm:complexity-for-deg-Det-of-skew-polynomial-matrices}
  Let $A = \sum_{d=0}^l A_l s^d \in {K[s; \sigma, \delta]}^{n \times n}$ be a square skew polynomial matrix over a skew field $K$.
  Under the arithmetic model on $K$ and oracle access to $\sigma^{-1}$ and $\delta$, we can compute $\deg \Det A$ in $\Order\prn{l^2 n^{\omega + 2} + ln^{4.5}}$-time by the combinatorial relaxation algorithm and in $\Order\prn{l^\omega n^{2\omega}}$-time by the matrix expansion algorithm.
\end{theorem}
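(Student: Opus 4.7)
The plan is to reduce Theorem~\ref{thm:complexity-for-deg-Det-of-skew-polynomial-matrices} to Theorem~\ref{thm:complexity-of-combinatorial-relaxation} by embedding $K[s;\sigma,\delta]$ into its Ore quotient $F \defeq K(s;\sigma,\delta)$ equipped with the valuation $v \defeq -\deg$. First I would verify that $F$ is a split DVSF: the element $\pi \defeq s^{-1}$ satisfies $v(\pi) = 1$ and is thus a uniformizer, while $K$ itself serves as a coefficient skew subfield. To instantiate the commutation data in~\eqref{eq:higher_sigma_derivation_pi}, I would invert the Ore rule $sa = \sigma(a)s + \delta(a)$ and iterate, obtaining
\begin{align*}
  s^{-1} a = \sum_{d=0}^\infty (-1)^d (\sigma^{-1}\delta)^d \sigma^{-1}(a)\, s^{-d-1},
\end{align*}
so that $\delta_d = (-1)^d(\sigma^{-1}\delta)^d\sigma^{-1}$. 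Computing $\delta_d(a)$ therefore uses $d+1$ oracle calls to $\sigma^{-1}$ and $\delta$, fitting the computational model of Theorem~\ref{thm:complexity-of-combinatorial-relaxation}.

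Next I would put $A$ into the form required by Theorem~\ref{thm:complexity-of-combinatorial-relaxation}, namely a polynomial in $\pi$ with coefficients in $K^{n\times n}$ placed to the left. Right-multiplying by $s^{-l}$ gives $\tilde A \defeq A s^{-l} = \sum_{e=0}^l A_{l-e}\pi^e$, which has exactly this form. Multiplicativity of $\Det$ yields $\vdet(\tilde A) = \vdet(A) + ln$, and $\tilde A$ is singular precisely when $A$ is, so it suffices to compute $\vdet(\tilde A)$ and subtract $ln$. To supply the upper bound required by Theorem~\ref{thm:complexity-of-combinatorial-relaxation}, I would use that $K[s;\sigma,\delta]$ is a (left and right) principal ideal domain admitting a Smith-type diagonal reduction; hence for nonsingular $A$, $\Det A$ is represented by a nonzero polynomial of degree in $[0,ln]$, giving $\vdet(\tilde A) \in [0,ln]$ and the valid choice $M \defeq ln$.

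Finally, I would plug $L = l$ and $M = ln$ into Theorem~\ref{thm:complexity-of-combinatorial-relaxation}. The $M^2 n^\omega$ term immediately contributes the $l^2 n^{\omega+2}$ of the combinatorial relaxation bound, and the $M^\omega n^\omega$ term contributes the $l^\omega n^{2\omega}$ of the matrix expansion bound. The hard part, on which I would spend most of the care, is showing that the remaining nominally dominant terms ($M^3 n^2$, and the matching-cost contribution, which must collapse to $\Order\prn{l n^{4.5}}$ in the combinatorial relaxation and to disappear entirely in the matrix expansion) actually fit within the claimed complexities. I expect this to require exploiting the finite $\pi$-degree of $\tilde A$, the incremental structure of the derivations $\delta_d$, and careful amortisation of $\sigma^{-1}$- and $\delta$-queries across the iterations of each algorithm, rather than a naive instantiation of the generic per-iteration cost estimates from Theorem~\ref{thm:complexity-of-combinatorial-relaxation}.
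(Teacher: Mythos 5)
Your overall route is the paper's own: pass to $F=K\pprn[\big]{s^{-1};\sigma,\delta}$ with valuation $-\deg$ and uniformizer $s^{-1}$, replace $A$ by $As^{-l}$, justify the bound $M=ln$ via the Jacobson (Smith-type) normal form over the PID $K[s;\sigma,\delta]$, and then invoke \cref{thm:complexity-of-combinatorial-relaxation}. All of that is sound (and your closed form $\delta_d=(-1)^d(\sigma^{-1}\delta)^d\sigma^{-1}$ agrees with~\eqref{eq:higher_derivation_for_skew_inverse_laurent}). The genuine gap is exactly at the step you defer to ``careful amortisation'': the $O(M^3n^2)$ term. In \cref{thm:complexity-of-combinatorial-relaxation} that term is $m\le M$ dual updates, times $n^2$ entries, times $O(M^2)$ per left-multiplication of an entry by $\pi$, the last factor coming from the generic formula~\eqref{eq:coefficient-in-pi-a} under the assumption that each $\delta_d(a)$ costs $O(1)$. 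Your $\delta_d$ costs $\Theta(d)$ oracle calls, and even with the natural amortisation ($\delta_{k+1}(a)=-\sigma^{-1}(\delta(\delta_k(a)))$, so all required values $\delta_k(a_j)$ cost $O(M^2)$ per entry) you only recover the generic $O(M^2)$ per entry per multiplication, hence $O(M^3n^2)=O(l^3n^5)$ overall, which exceeds both claimed bounds since $\omega<3$. The paper's fix is not an amortisation of the $\delta_d$ at all but a different recursion: writing $s^{-1}a=\sum_d b_ds^{-d}$ and expanding $s(s^{-1}a)=a$ coefficientwise gives $b_0=0$ and $b_d=\sigma^{-1}(a_{d-1}-\delta(b_{d-1}))$ (\cref{lem:inverse-skew-polynomial-coefficient-formula}), so the $M$ leading coefficients of $s^{-1}a$ cost only $O(M)$ oracle calls per entry. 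This drops the term to $O(M^2n^2)=O(l^2n^4)$, which is absorbed into $O(l^2n^{\omega+2})$ for the combinatorial relaxation and into $O(l^\omega n^{2\omega})$ for the matrix expansion. Without this recursion (or an equivalent $O(M)$-per-entry bound) your plan does not close.

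A secondary remark: the matching cost needs no ``collapsing'' at all. With $M=ln$ it is $O(Mn^{2.5})=O(ln^{3.5})$, already comfortably within the stated $O(ln^{4.5})$; likewise $M^2n^\omega=l^2n^{\omega+2}$ is immediate. The only genuinely nontrivial point in deriving the theorem from \cref{thm:complexity-of-combinatorial-relaxation} is the one above.
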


We further show that the converse holds, i.e., $\vdet(A) \le ln$ for any nonsingular $A \in F^{n \times n}$ only if $F$ is isomorphic to (an extension of) a skew rational function field.
This fact indicates that skew polynomial rings are characterized as the most general ring structure that admits natural extensions of the combinatorial relaxation and matrix expansion algorithms.

We cannot directly apply \cref{thm:complexity-of-combinatorial-relaxation} to weighted Edmonds' problem because arithmetic operations on $K(x_1, \ldots, x_m)$ nor $K\fsbr{x_1, \ldots, x_m}$ cannot be performed in constant time under the arithmetic model on $K$.
However, using the min-max formula on nc-Edmonds' problem by Fortin--Reutenauer~\cite{Fortin2004}, we can modify the combinatorial relaxation algorithm so that it can be used for reducing the nc-WEP to the unweighted problem.
This algorithm coincides with that given by Hirai~\cite{Hirai2019}.
Furthermore, the matrix expansion algorithm can be used for reductions of both commutative and noncommutative problems.
Using polynomial-time algorithms for nc-Edmonds' problem, we show:

\begin{theorem}\label{thm:nc-WEP-complexity}
  The nc-WEP over a field $K$ can be deterministically solved using polynomially many arithmetic operations on $K$.
  When $K = \setQ$, the algorithm runs in time polynomial in the binary encoding length of the input.
\end{theorem}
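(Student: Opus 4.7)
The plan is to apply the matrix expansion algorithm of \cref{thm:complexity-of-combinatorial-relaxation} to the nc-WEP setting. Given an nc-linear polynomial matrix $A$ as in~\eqref{eq:linear_polynomial_matrix}, I would view it as a matrix over the DVSF $F = K\fsbr{x_1,\ldots,x_m}(s)$ with valuation $-\deg$; since $s$ commutes with every element of $K\agbr{x_1,\ldots,x_m}$, the field $F$ is split with coefficient skew subfield $K\fsbr{x_1,\ldots,x_m}$ and uniformizer $\pi = s^{-1}$. Because $A$ is a skew polynomial matrix of $s$-degree at most $l$, the bound $M = ln$ established for \cref{thm:complexity-for-deg-Det-of-skew-polynomial-matrices} supplies the upper bound on $\vdet(A)$ required by the complexity estimate.

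Next I would unwind the matrix expansion in this setting. The algorithm constructs a block matrix of size $\Order(Mn) = \Order(ln^2)$ whose entries are drawn directly from the coefficient matrices $A_{d,i}$ of the input (possibly padded with zeros). Because each $A_d$ is an nc-linear matrix over $K$ in $x_1,\ldots,x_m$ and $s$ itself does not enter the coefficient subfield, the constructed block matrix is itself an nc-linear matrix over $K$. Hence the rank computation over $K\fsbr{x_1,\ldots,x_m}$ demanded by the algorithm is precisely an instance of nc-Edmonds' problem on an nc-linear matrix of polynomially bounded size. Invoking a deterministic polynomial-time nc-rank algorithm~\cite{Garg2016,Hamada2020,Ivanyos2018} then yields the first claim: the overall procedure uses $\poly(n,l,m)$ arithmetic operations on $K$.

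For $K = \setQ$, the bit-length bound follows because the nc-linear matrix passed to the nc-rank solver has entries that are entries of the $A_{d,i}$ or zero, so its bit-length is at most polynomial in that of the input. The algorithms of Garg et al.~\cite{Garg2016} and Ivanyos et al.~\cite{Ivanyos2018} solve nc-Edmonds' problem over $\setQ$ in time polynomial in the bit-length of the instance, so the overall algorithm is also polynomial-time in the input bit-length.

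The principal obstacle is justifying the first step rigorously, namely that the block matrix produced by the matrix expansion really remains nc-linear over $K$ and does not require intermediate arithmetic inside $K\fsbr{x_1,\ldots,x_m}$, which would in general inflate bit-length uncontrollably. This is why I would prefer the matrix expansion over the combinatorial relaxation for the bit-length argument; the combinatorial relaxation can alternatively be adapted via the Fortin--Reutenauer min-max formula to invoke nc-rank as an oracle rather than performing arithmetic in $K\fsbr{x_1,\ldots,x_m}$, providing a second route to the same conclusion.
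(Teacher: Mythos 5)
Your proposal is correct and follows essentially the same route as the paper: reduce to $As^{-l}$ over the split DVSF $K\fsbr{x_1,\dotsc,x_m}(s)$ with $M = ln$ from \cref{prop:skew-polynomials-bound}, observe that since $s$ commutes with the coefficient skew subfield the expanded matrix $\Omega_\mu(As^{-l})$ is itself an nc-linear matrix of size $\Order(ln^2)$ built directly from the input coefficient matrices, and invoke deterministic (bit-length-bounded, over $\setQ$) nc-rank algorithms as oracles. The "principal obstacle" you flag is in fact immediate here because the higher derivations are trivial ($\delta_0 = \id$, $\delta_d = 0$ for $d \ge 1$), so the blocks of $\Omega_\mu$ are just shifted copies $A_{j-i}$ of the input; your remark that the combinatorial-relaxation route via Fortin--Reutenauer does not yield the bit-length bound also matches the paper's discussion.
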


\subsection{Related Work}\label{sec:related-work}
In computer algebra, algorithms were proposed for computing various kinds of canonical forms of a skew polynomial matrix $A \in {K[s; \sigma, \delta]}^{n \times n}$ such as the \emph{Jacobson normal form}~\cite{Levandovskyy2011}, the \emph{Hermite normal form}~\cite{Giesbrecht2013}, the \emph{Popov normal form}~\cite{Khochtali2017} and their weaker form called a \emph{row-reduced form}~\cite{Abramov2014b,Beckermann2006}.
One can use these algorithms to calculate $\deg \Det A$ since it is immediately obtained from the canonical forms of $A$.
These algorithms iteratively solve systems of linear equations over $K$.
Our algorithms are faster than the existing algorithms.
The fastest known algorithm given by Giesbrecht--Kim~\cite{Giesbrecht2013} runs in $\Order\big(l^\omega n^{2\omega+2} \log l n\big)$-time, whereas our two algorithms require only $\Order\prn{l^2 n^{\omega+2} + ln^{4.5}}$-time and $\Order\prn{l^\omega n^{2\omega}}$-time as seen in \cref{thm:complexity-for-deg-Det-of-skew-polynomial-matrices}.

Hamada--Hirai~\cite{Hamada2020} presents an algorithm for nc-Edmonds' problem over $\setQ$ that runs in time polynomial in the bit-length of the input.
They introduce a quantity conceptually corresponding to $p$-adic valuations of the Dieudonné determinants for matrices over $F \defeq \setQ\fsbr{x_1, \dotsc, x_m}$ and the algorithm computes it based on the procedure of the combinatorial relaxation.
Since $\setQ$ with the $p$-adic valuation is not split, their algorithm can be seen as a kind of an extension of the combinatorial relaxation to a special but non-split DVSF, except that the quantity has not been proved to be some discrete valuation of the Dieudonné determinants on $F$ indeed.

\subsection{Organization}\label{sec:organization}
The rest of this paper is organized as follows.
\Cref{sec:preliminaries-on-valuations-skew-fields,sec:preliminaries-on-matrices} describe preliminaries on valuation skew fields and matrices over them, respectively.
\Cref{sec:combinatorial-aspects-of-valuations-and-matrices} explains that relations between matrices over valuation fields and combinatorial optimization problems, which are well-known for the commutative case, still hold in the noncommutative case.
\Cref{sec:combinatorial-relaxation-algorithm,sec:matrix-expansion-algorithm} propose our algorithms, the combinatorial relaxation and matrix expansion algorithms, respectively.
\Cref{sec:estimating-upper-bounds} discusses an estimation of the upper bound $M$ on $\vdet(A)$.
Finally, \cref{sec:application1,sec:application2} describe applications to weighted Edmonds' problem and linear differential/difference equations, respectively.

\section{Preliminaries on Valuation Skew Fields}\label{sec:preliminaries-on-valuations-skew-fields}
We denote the set of nonnegative integers by $\setN$, the integers by $\setZ$, the rational numbers by $\setQ$, the real numbers by $\setR$, and the complex numbers by $\setC$.
For $n \in \setN$, define $\intset{n} \defeq \set{1, 2, \dotsc, n}$ and $\intset{0, n} \defeq \set{0, 1, 2, \dotsc, n}$.
All rings are assumed to have the multiplicative identity.

\subsection{Valuation Skew Fields}\label{sec:valuation-skew-fields}
A \emph{skew field}, or a \emph{division ring} is a ring $F$ such that every nonzero element has a multiplicative inverse in $F$.
A (\emph{real}) \emph{valuation skew field}~\cite[Chapter~IV]{Warner1993} is a skew field $F$ endowed with a (\emph{real}) \emph{valuation}, that is, a map $\funcdoms{v}{F}{\setR \cup \set{+\infty}}$ satisfying~\ref{item:V1}--\ref{item:V4}.
A valuation skew field is called a \emph{valuation field} if it is a field.
The value $v(a)$ for $a \in F$ is called the \emph{valuation} of $a$.

By~\ref{item:V1} and~\ref{item:V3}, it holds $v(-a) = v(a)$ and $v\prn{a^{-1}} = -v(a)$ for all $a \in \mult{F}$, where $\mult{F} = F \setminus \set{0}$ is the multiplicative group of $F$.
In particular, we have $v(a) < +\infty$ for $a \in \mult{F}$.
The equality in~\ref{item:V2} is attained whenever $v(a) \ne v(b)$; otherwise, if $v(a) < v(a+b)$ and $v(a) < v(b)$, it holds
\begin{align}
  v(a) = v((a+b)-b) \ge \min\set{v(a+b), v(-b)} = \min\set{v(a+b), v(b)} > v(a),
\end{align}
a contradiction.

The (\emph{invariant}) \emph{valuation ring} of a valuation skew field $F$ with respect to a valuation $v$ is a set
\begin{align}
  R \defeq \set{a \in F}[v(a) \ge 0].
\end{align}
Then $R$ is a subring of $F$ by~\ref{item:V1} and~\ref{item:V2}, and is a \emph{domain}, i.e., $R$ has no zero-divisors.
It also satisfies the following~\cite[Chapter~1]{Krylov2008}:
\begin{enumerate}[label={\upshape{(VR\arabic*)}}]
  \item either $a \in R$ or $a^{-1} \in R$ for $a \in \mult{F}$,\label{item:VR1}
  \item $aR = Ra$ for $a \in \mult{F}$.\label{item:VR2}
\end{enumerate}
In addition, $R$ is a \emph{local ring}, i.e., it has a unique maximal right (and indeed a unique maximal left) ideal $J(R)$, which coincides with $R \setminus \mult{R}$ with $\mult{R} = \set{a \in F}[v(a) = 0]$.
Namely, it holds
\begin{align}\label{eq:real_valuation_ideal}
  J(R) = \set{a \in F}[v(a) > 0].
\end{align}
The quotient ring $R \extends J(R)$ forms a skew field, called the \emph{residue skew field} of $F$ (or a \emph{residue field} if it is a field).

A \emph{representative set} of $F$ is a subset $Q$ of $R$ such that $0 \in Q$ and the restriction to $Q$ of the canonical homomorphism from $R$ to the residue skew field $K \defeq R \extends J(R)$ is a bijection from $Q$ to $K$.
Then for $a \in R$, there uniquely exists $a_0 \in Q$ such that $a \in a_0 + J(R)$.
Hence $a - a_0 \in J(R)$, which means:

\begin{proposition}\label{prop:representation_on_valuation_ring}
  Let $F$ be a valuation skew field with valuation $v$, valuation ring $R$, and representative set $Q$.
  Then any $a \in R$ is uniquely expressed as $a = a_0 + \tilde{a}$, where $a_0 \in Q$ and $\tilde{a} \in J(R)$.
\end{proposition}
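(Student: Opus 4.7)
The plan is to derive the decomposition directly from the bijectivity condition built into the definition of a representative set. Let $\phi \colon R \to K$ denote the canonical ring surjection onto the residue skew field $K = R \extends J(R)$, so that $\ker \phi = J(R)$. By hypothesis the restriction $\phi|_Q \colon Q \to K$ is a bijection.

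For existence, given $a \in R$, I would set $a_0 \defeq (\phi|_Q)^{-1}(\phi(a)) \in Q$. Then $\phi(a - a_0) = \phi(a) - \phi(a_0) = 0$, so $\tilde{a} \defeq a - a_0 \in J(R)$, which gives the desired decomposition $a = a_0 + \tilde{a}$.

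For uniqueness, suppose $a = a_0 + \tilde{a} = a_0' + \tilde{a}'$ with $a_0, a_0' \in Q$ and $\tilde{a}, \tilde{a}' \in J(R)$. Rearranging yields $a_0 - a_0' = \tilde{a}' - \tilde{a} \in J(R)$, hence $\phi(a_0) = \phi(a_0')$. Applying injectivity of $\phi|_Q$ forces $a_0 = a_0'$, and therefore $\tilde{a} = \tilde{a}'$.

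There is no real obstacle: the proposition is essentially a translation of the bijection condition defining $Q$, once one notes that the passage to residues identifies the cosets of $J(R)$ in $R$ with the elements of $K$. The only point that might look subtle is the left/right ambiguity around $J(R)$, but since $J(R)$ is both a two-sided ideal (being the unique maximal one-sided ideal on either side, by~\ref{item:VR2}), the above additive argument goes through without any sidedness issue.
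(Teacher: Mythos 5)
Your proof is correct and matches the paper's own argument, which likewise obtains $a_0$ as the unique element of $Q$ with $a \in a_0 + J(R)$ via the bijection $Q \to R \extends J(R)$ and then sets $\tilde{a} = a - a_0$. The uniqueness step via injectivity of the restricted quotient map is exactly what the paper intends.
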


The \emph{value group} of $v$ is the additive subgroup $v\prn{\mult{F}}$ of $\setR$.
A \emph{discrete valuation} is a valuation $F$ whose value group is $\setZ$.
A valuation skew field equipped with a discrete valuation is called a \emph{discrete valuation skew field} (DVSF), which is of the main interest of this thesis.
If $F$ is a field, we call $F$ a \emph{discrete valuation field} (DVF).

Let $F$ be a DVSF with discrete valuation $v$ and the valuation ring $R$.
Then~\eqref{eq:real_valuation_ideal} is
\begin{align}\label{eq:valuation_ideal}
  J(R) = \set{a \in F}[v(a) \ge 1].
\end{align}
Any element $\pi \in R$ with $v(\pi) = 1$ is called a \emph{uniformizer} or a \emph{prime element} of $F$.
In addition to~\ref{item:VR1} and~\ref{item:VR2}, $R$ enjoys the following properties~\cite[Chapter~1]{Krylov2008}:
\begin{enumerate}[label={\upshape{(DVR\arabic*)}}]
  \item $J(R) = \pi R = R \pi$,\label{item:DVR1}
  \item $\displaystyle \bigcap_{d=1}^\infty {J(R)}^d = \set{0}$.\label{item:DVR2}
\end{enumerate}
Note that it holds
\begin{align}\label{eq:any_ideal_of_DVSF}
  {J(R)}^d = \pi^d R = R \pi^d = \set{a \in F}[v(a) \ge d]
\end{align}
by~\eqref{eq:valuation_ideal} and~\ref{item:DVR1} for $d \in \setN$.
In addition, any right ideal and left ideal of $R$ are two-sided and are in the form of~\eqref{eq:any_ideal_of_DVSF}.
This mean that $R$ is a (right and left) \emph{principal ideal domain} (PID), which is a domain whose every (right and left) ideal is generated by one element.
More strongly, any DVR is a (right and left) \emph{Euclidean domain}~\cite{Brungs1973} as is well-known for commutative DVRs.
Here, a domain $R$ is said to be \emph{Euclidean} if there exists a map $\funcdoms{f}{R}{\setN \cup \set{-\infty}}$, called an \emph{Euclidean map}, such that for every $a,b \in R$ with $b \ne 0$, there exist $q,r,q',r' \in R$ such that $a = bq + r = q'b + r'$ and $f(r), f(r') < f(b)$.
In case of a valuation ring of a DVSF, $-v$ serves as an Euclidean map.
We remark that Euclidean domains are proper subclass of PIDs even for noncommutative rings~\cite{Brungs1973}.

\begin{remark}\label{rem:DVR}
  In general, a local ring $R$ satisfying~\ref{item:DVR1} and~\ref{item:DVR2} for some non-nilpotent element $\pi \in R$ is called a \emph{discrete} (\emph{invariant}) \emph{valuation ring} (DVR).
  Here, an element $a \in R$ is said to be \emph{nilpotent} if $a^k = 0$ for some $k \in \setN$ and \emph{non-nilpotent} if not.
  The valuation ring of any DVSF is a DVR as described above.
  Indeed, any DVR $R$ is the valuation ring of some DVSF~\cite{Krylov2008}; here we give a construction of the DVSF briefly.
  First, it follows from~\ref{item:DVR1} and~\ref{item:DVR2} that $R$ is a PID\@.
  Then $R$ is also a (right and left) \emph{Ore domain}, which is a domain such that for each $s, t \in R \setminus \set{0}$, there exist $x,y,z,w \in R \setminus \set{0}$ satisfying $sx = ty$ and $zs = wt$~\cite[Corollarly~6.7]{Goodearl2004}.
  This property enables for $R$ to have the \emph{Ore quotient skew field} $F$, which is a skew field of fractions each of whose elements $a \in F$ is expressed as $a = sx^{-1} = y^{-1}t$ for some $s, t \in R$ and $x, y \in R \setminus \set{0}$.
  In particular, $a \in \mult{F}$ can be uniquely expressed as $a = \pi^k p = q \pi^k$ for some $p,q \in \mult{R}$ and $k \in \setZ$.
  Denote this $k$ by $v(a)$ for $a \in \mult{F}$ and let $v(0) \defeq +\infty$.
  Then $\funcdoms{v}{F}{\setZ \cup \set{+\infty}}$ is a discrete valuation on $F$, whose valuation ring coincides with $R$.
  We refer to the restriction of $v$ onto $R$ as the valuation of $R$ and a representative set of $R$ means that of $F$.
  See~\cite[Chapter~1]{Krylov2008} for details of DVRs and~\cite[Chapter~6]{Goodearl2004} for Ore domains and quotient skew fields.
\end{remark}

Let $F$ be a DVSF with valuation $v$ and uniformizer $\pi$.
For an arbitrary real number $c > 1$, we define $\funcdoms{d}{F \times F}{\setR}$ as
\begin{align}
  d(a, b) \defeq c^{-v(a-b)}
\end{align}
for $a, b \in F$ (where $c^{-\infty} \defeq 0$).
Then $d$ forms a metric on $F$.
The \emph{$\pi$-adic topology} is the ring topology on $F$ induced by $d$, which does not depend on the choice of $c$.
On this topology, $\set[\big]{a + {J(R)}^k}[k \in \setN]$ is an open neighborhood system of $a \in F$ by~\eqref{eq:any_ideal_of_DVSF}.
A DVSF is said to be \emph{complete} if it is complete as a metric space.
Then any DVSF can be extended to a complete DVSF as follows.

\begin{theorem}[{\cite[Theorem~17.2]{Warner1993}}]\label{thm:completion}
  Let $F$ be a DVSF with discrete valuation $v$.
  Then there uniquely exists a complete DVSF $\hat{F}$ with discrete valuation $\hat{v}$ such that $\hat{F}$ contains $F$ as a dense subring and $\hat{v}$ extends $v$.
  In addition, the residue skew field of $\hat{F}$ is isomorphic to that of $F$.
\end{theorem}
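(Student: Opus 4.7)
The plan is to build $\hat{F}$ as the metric completion of $F$ with respect to the $\pi$-adic metric, adapting the classical commutative construction while accommodating noncommutativity. I let $C$ be the set of Cauchy sequences $(a_n)_{n \in \setN}$ in $F$ and $N \subseteq C$ the set of null sequences (those converging to $0$). Using \ref{item:V1} and \ref{item:V2} together with the fact that any Cauchy sequence has valuations bounded below, componentwise addition and multiplication make $C$ a ring and $N$ a two-sided ideal; I define $\hat{F} \defeq C/N$ and embed $F \hookrightarrow \hat{F}$ via constant sequences.

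Next I would extend $v$ to a discrete valuation $\hat{v}$ on $\hat{F}$. The key observation is that, for a Cauchy sequence $(a_n) \notin N$, the integer sequence $v(a_n)$ is eventually constant: some tail satisfies $v(a_m - a_n) > v(a_n)$ for all large $m, n$, and the strengthened triangle inequality at distinct valuations then forces stabilization at some $k^\ast \in \setZ$. Setting $\hat{v}([(a_n)]) \defeq k^\ast$ (and $+\infty$ for the zero class) gives a well-defined map; \ref{item:V1}--\ref{item:V4} transfer from $F$ termwise, and crucially the image lies in $\setZ \cup \set{+\infty}$ rather than in some denser subset of $\setR$, so $\hat{v}$ is genuinely discrete.

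The ring $\hat{F}$ is a skew field: any non-null class has a representative $(a_n)$ with no zero entries (discard the finite exceptional initial segment), and since $v(a_n)$ stabilizes at $k^\ast$ one computes
\begin{align}
  v(a_m^{-1} - a_n^{-1}) = v\bigl(a_m^{-1}(a_n - a_m) a_n^{-1}\bigr) = v(a_n - a_m) - 2k^\ast \to +\infty,
\end{align}
so $(a_n^{-1}) \in C$ represents the two-sided inverse. Density of $F$ in $\hat{F}$ is immediate, and completeness follows by a diagonal argument: if $(\hat{a}_n)$ is Cauchy in $\hat{F}$ and each $\hat{a}_n$ is approximated within $\pi^n$ by $b_n \in F$, then $(b_n)$ is Cauchy in $F$ and its class is the limit. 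For the residue skew field, the inclusion $R \hookrightarrow \hat{R}$ descends to an injection $R/J(R) \hookrightarrow \hat{R}/J(\hat{R})$; surjectivity uses density, since any $\hat{a} \in \hat{R}$ admits $a \in F$ with $\hat{v}(\hat{a} - a) \ge 1$, whence $\hat{v}(a) \ge \min\set{\hat{v}(\hat{a}), \hat{v}(\hat{a} - a)} \ge 0$ forces $a \in R$ and $a \equiv \hat{a}$ modulo $J(\hat{R})$. Uniqueness finally comes from the universal property of metric completions: the identity on $F$ is uniformly continuous and extends uniquely to a homomorphism between any two completions, and the same argument in reverse yields the inverse.

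The main obstacle I anticipate is the interplay of noncommutativity with the ideal and inverse structure: verifying that $N$ is a \emph{two-sided} ideal requires that both $(a_n b_n)$ and $(b_n a_n)$ lie in $N$ whenever $(a_n) \in C$ and $(b_n) \in N$, which relies on the bounded-below valuations of Cauchy sequences rather than on commutativity; similarly, producing inverses on both sides uses the identity $a_m^{-1} - a_n^{-1} = a_m^{-1}(a_n - a_m) a_n^{-1}$ and must be tracked with multiplication order respected. A smaller but essential point is showing that $\hat{v}$ stays in $\setZ \cup \set{+\infty}$, which relies on the \emph{discreteness} of $v$ and is exactly why the completion of a DVSF remains a DVSF rather than becoming a general real-valued valuation skew field.
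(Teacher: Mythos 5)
Your construction is correct: the Cauchy-sequence/null-sequence quotient, the stabilization of valuations along non-null Cauchy sequences, the two-sided inverse via $a_m^{-1}-a_n^{-1}=a_m^{-1}(a_n-a_m)a_n^{-1}$, and the density argument for the residue skew field are all sound, and you correctly isolate the two points where noncommutativity and discreteness actually matter. Note that the paper does not prove this statement at all --- it imports it from Warner's book --- and your argument is essentially the standard completion construction given there, so there is nothing to reconcile beyond that.
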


The complete DVSF $\hat{F}$ in \cref{thm:completion} is called the \emph{completion} of $F$.
By \cref{thm:completion}, it is convenient to consider complete DVSFs from the beginning.
See~\cite{Warner1993} for details of topological rings and the $\pi$-adic topology.

Let $F$ be a DVSF with uniformizer $\pi$, valuation ring $R$, and representative set $Q$.
By \cref{prop:representation_on_valuation_ring} and~\ref{item:DVR1}, we can express $a \in R$ as $a = a_0 + a'\pi$ by some $a_0 \in Q$ and $a' \in R$.
By the same argument, there are unique $a_1 \in Q$ and $a'' \in R$ such that $a' = a_1 + a''\pi$.
Therefore, we have $a = a_0 + a_1 \pi + a''\pi^2$.
Repeating this argument, we can represent $a$ as a power series in $\pi$ with coefficient $Q$, which is formally stated as follows.

\begin{proposition}[{\cite[Theorem~18.5]{Warner1993}}]\label{prop:dvsf_as_power_series}
  Let $F$ be a DVSF with discrete valuation $v$ and let $\pi$ and $Q$ be a uniformizer and a representative set of $F$, respectively.
  \begin{enumerate}
    \item For every $a \in F$, there uniquely exists a sequence $\prn{a_d}_{d \in \setZ}$ of elements in $Q$ such that $a_d = 0$ for all but finitely many $d < 0$ and a power series
    \begin{align}\label{eq:power_of_pi}
      \sum_{d \in \setZ} a_d \pi^d
    \end{align}
    converges to $a$ in the $\pi$-adic topology.
    If $l \defeq v(a) \in \setZ$, then $a_d = 0$ for $d < l$ and $a_l \ne 0$.

    \item If $F$ is complete and $\prn{a_d}_{d \in \setZ}$ is a sequence of elements in $Q$ such that $a_d = 0$ for all but finitely many $d < 0$, the power series~\eqref{eq:power_of_pi} converges to an element $a$ of $F$.
    Its valuation $v(a)$ is equal to the minimum $l \in \setZ$ such that $a_d = 0$ for $d < l$ and $a_l \ne 0$.
  \end{enumerate}
\end{proposition}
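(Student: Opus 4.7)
The plan is to establish part~(1) by iteratively applying \cref{prop:representation_on_valuation_ring}, and then to deduce part~(2) by a standard Cauchy argument using completeness.

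For part~(1), I would first reduce to the case $a \in R$: if $a \ne 0$ with $l \defeq v(a) \in \setZ$, then $b \defeq a\pi^{-l}$ has $v(b) = 0$, so $b \in R$, and an expansion $b = \sum_{d \ge 0} c_d \pi^d$ yields $a = \sum_{d \ge l} c_{d-l} \pi^d$. Given $a \in R$, I would apply \cref{prop:representation_on_valuation_ring} to write $a = a_0 + \tilde{a}_1$ with $a_0 \in Q$ and $\tilde{a}_1 \in J(R) = R\pi$ by~\ref{item:DVR1}, then take the unique $b_1 \in R$ with $\tilde{a}_1 = b_1 \pi$ (unique since $R$ is a domain), and iterate the same decomposition on $b_1, b_2, \dotsc$. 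This produces, for each $k$, a partial expansion $a = a_0 + a_1 \pi + \dotsb + a_{k-1}\pi^{k-1} + b_k \pi^k$ whose remainder lies in $R\pi^k = \set{x \in F}[v(x) \ge k]$ by~\eqref{eq:any_ideal_of_DVSF}, so the partial sums converge to $a$ in the $\pi$-adic topology. Uniqueness of $(a_d)$ follows from the uniqueness in \cref{prop:representation_on_valuation_ring} at each step. For the valuation statement, if $l$ is the smallest index with $a_l \ne 0$, then $a_l \in Q \setminus \set{0}$ has $v(a_l) = 0$ since its image in the residue skew field is nonzero; hence $v(a_l \pi^l) = l$ while every later term has valuation strictly larger, and the strict-inequality case of~\ref{item:V2} gives $v(a) = l$.

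For part~(2), given $(a_d)$ with $a_d = 0$ for $d < -m$, I would form the partial sums $s_k \defeq \sum_{d = -m}^k a_d \pi^d$; since $a_d \in R$, each term $a_d \pi^d$ has valuation at least $d$, so $v(s_{k'} - s_k) \ge k+1$ for $k' > k$ by iterated~\ref{item:V2}. Thus $(s_k)$ is Cauchy in the $\pi$-adic metric and converges to some $a \in F$ by completeness. Letting $l$ be the smallest index with $a_l \ne 0$, the argument from part~(1) gives $v(s_k) = l$ for all $k \ge l$, while $v(a - s_k) \to +\infty$; picking $k$ large enough that $v(a - s_k) > l$ and applying the strict-inequality case of~\ref{item:V2} yields $v(a) = v(s_k + (a - s_k)) = l$.

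The main subtlety is one of bookkeeping: since the expansion~\eqref{eq:power_of_pi} places each digit $a_d$ to the \emph{left} of $\pi^d$, the residual decomposition at each step must use the identification $J(R) = R\pi$ (rather than $\pi R$), and uniqueness of the quotient $b_j$ with $\tilde{a}_j = b_j \pi$ relies on $R$ being a domain. Beyond this, the argument is routine ultrametric manipulation, with completeness invoked only in part~(2).
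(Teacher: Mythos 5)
Your argument is correct and follows essentially the same route as the paper, which does not prove this proposition formally (it is cited to Warner's Theorem~18.5) but sketches exactly your part~(1) construction in the preceding paragraph: iterate \cref{prop:representation_on_valuation_ring} together with $J(R)=R\pi$ from~\ref{item:DVR1} to peel off digits one at a time. Your additional details --- the reduction to $a\in R$ by right-multiplying by $\pi^{-l}$, the observation that a nonzero element of $Q$ has valuation $0$ so the lowest term dominates via the strict case of~\ref{item:V2}, and the Cauchy argument for part~(2) --- are all sound and fill in what the paper leaves to the reference.
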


We call~\eqref{eq:power_of_pi} the \emph{$\pi$-adic expansion} of $a \in F$.

\subsection{Examples of Valuation Skew Fields}\label{sec:examples_of_DVSFs}
We present several examples of valuation skew fields.
All examples are DVSFs except for \cref{ex:formal_laurent_series_with_real_exponents}.

\begin{example}[formal Laurent series]\label{ex:formal-Laurent-series}
  Let $K$ be a skew field.
  Denote by $K[s]$ the polynomial ring over $K$ in indeterminate $s$ that commutes with any element of $K$.
  Since $K[s]$ is an Ore domain, it has the quotient skew field $K(s)$, called the \emph{rational function} (skew) \emph{field}.
  The \emph{order} $\ord p$ of $p \in K[s] \setminus \set{0}$ is the minimum $d \in \setN$ such that the coefficient of $s^d$ in $p$ is nonzero.
  We also define $\ord f$ for $f \in K(s) \setminus \set{0}$ as $\ord f \defeq \ord p - \ord q$, where $f = p/q$ with $p,q \in K[s] \setminus \set{0}$.
  Set $\ord 0 \defeq +\infty$.
  Then it is well-known that the order is a discrete valuation on $K(s)$ and the residue skew field is $K$.
  A canonical (but not unique) choice of a uniformizer is $s$.
  The completion of $K(s)$ is the \emph{formal Laurent series} (skew) \emph{field} $K\pprn{s}$ over $K$ in $s$, whose each element is expressed as
  \begin{align}\label{eq:laurent_fps}
    f = \sum_{d=l}^\infty a_d s^{d}
  \end{align}
  with $l \in \setZ$ and $a_l, a_{l+1}, \dotsc \in K$.
  If $a_l \ne 0$, then $l = \ord f$.
  The valuation ring of $K\pprn{s}$ is called the \emph{formal power series} (skew) \emph{field} $K\ssqbr{s}$ over $K$ in $s$, which is the subring of $K\pprn{s}$ consisting of formal power series
  \begin{align}\label{eq:fps}
    f = \sum_{d=0}^\infty a_d s^{d}
  \end{align}
  with $a_0, a_1, \dotsc \in K$.

  Similarly, the \emph{degree} $\deg p$ of $p \in K[s] \setminus \set{0}$ is defined by replacing ``minimum'' with ``maximum'' in the definition of $\ord p$.
  Define $\deg f$ for $f = p/q \in \mult{K(s)}$ with $p,q \in K[s] \setminus \set{0}$ as $\deg f \defeq \deg p - \deg q$ and $\deg 0 \defeq -\infty$ as well.
  Since $\deg f(s) = -\ord f\prn{s^{-1}}$, the minus of the degree is a discrete valuation on $K(s)$ with uniformizer $s^{-1}$ and residue skew field $K$.
  The completion of $K(s)$ with respect to the minus degree is $K\pprn{s^{-1}}$, which is a field isomorphic to $K\pprn{s}$.
\end{example}

\begin{example}[{formal Laurent series with real exponents}]\label{ex:formal_laurent_series_with_real_exponents}
  Let $K$ be a skew field.
  A subset $X$ of $\setR$ is said to be \emph{well-ordered} if any nonempty subset of $X$ has the minimum element.
  We consider \emph{formal Laurent series with real exponents}, each of which is in the following form
  \begin{align}\label{eq:formal_series_with_real}
    f = \sum_{x \in X} a_x s^x,
  \end{align}
  where $X \subsetneq \setR$ is well-ordered, $a_x \in \mult{K}$ for $x \in X$, and $s$ is a formal ``indeterminate'' that satisfies $s^{x+y} = s^x s^y$ and $as^x = s^x a$ for $x,y \in \setR$ and $a \in K$.
  Addition on these series is naturally defined, and the multiplication of $f = \sum_{x \in X} a_x s^x$ and $g = \sum_{y \in Y} b_y s^y$ is given by
  \begin{align}
    fg \defeq \sum_{z \in \setR} \prn{\sum_{\substack{x \in X, y \in Y\\x+y = z}} a_x b_y} s^z.
  \end{align}
  For every $z \in \setR$, the number of $(x, y) \in X \times Y$ satisfying $x+y = z$ is finite from the assumption that $X$ and $Y$ are well-ordered, and the set
  \begin{align}
    \set*{z \in \setR}[the coefficient of $s^z$ in $fg$ is nonzero]
  \end{align}
  is well-ordered as well.
  Hence $fg$ is a formal Laurent series again in the sense defined above.
  By these operations, the set $\Sigma$ of formal Laurent series with real exponents forms a skew field~\cite[Theorem~5.7]{Neumann1949}.

  Define the \emph{order} $\ord f$ of~\eqref{eq:formal_series_with_real} as the minimum $x \in X$.
  We also define $\ord 0 \defeq +\infty$.
  Then as Neumann~\cite{Neumann1949} indicated, $\ord$ is a valuation on $\Sigma$ that is not discrete.
  The residue skew field of $\Sigma$ is $K$.
  The skew field $\Sigma$ contains $K\pprn{s}$ as a subfield, and the restrictions of the order onto $K\pprn{s}$ coincides that on $K\pprn{s}$.
  Reversing the ordering of $\setR$, we can also define $\deg f$ consistent with $K\pprn{s^{-1}}$ in the completely analogous way.
\end{example}

\begin{example}[{$p$-adic numbers}]
  Let $p$ be a prime number.
  The \emph{$p$-adic valuation} $v_p(n)$ of $n \in \setZ \setminus \set{0}$ is the maximum $k \in \setN$ such that $p^k$ divides $n$, and is extended to $\mult{\setQ}$ as $v_p(x) \defeq v_p(n) - v_p(m)$ for $x = n/m \in \mult{\setQ}$ with $n, m \in \setZ \setminus \set{0}$.
  Also we define $v_p(0) \defeq +\infty$.
  Then $v_p$ is a discrete valuation on $\setQ$ with uniformizer $p$.
  The residue field is $\setF_p$.
  The completion of $\setQ$ with respect to $v_p$ is the field $\setQ_p$ of \emph{$p$-adic numbers}.
\end{example}

\begin{example}[skew (inverse) Laurent series]\label{ex:skew_laurent_series}
  Let $K$ be a skew field, $\funcdoms{\sigma}{K}{K}$ a ring automorphism, and $\funcdoms{\delta}{K}{K}$ a \emph{left $\sigma$-derivation}; that is, it is additive, i.e., $\delta(a+b) = \delta(a) + \delta(b)$, and it satisfies $\delta(ab) = \sigma(a)\delta(b) + \delta(a)b$ for all $a,b \in K$.
  The (\emph{left}) \emph{skew polynomial ring}, or the \emph{Ore polynomial ring} due to Ore~\cite{Ore1933} over $(K, \sigma, \delta)$ in indeterminate $s$, which is denoted by $K[s; \sigma, \delta]$, is a polynomial ring over $K$ with the usual addition and a twisted multiplication defined by the commutation rule~\eqref{eq:skew_polynomial_commutation_rule} for $a \in K$.
  Elements in $K[s; \sigma, \delta]$ are called \emph{skew polynomials}.
  If $\delta = 0$, then $K[s; \sigma, 0]$ is denoted by $K[s; \sigma]$.
  When $\sigma$ is the identity map $\id$ and $\delta = 0$, the skew polynomial ring is nothing but the polynomial ring $K[s]$, which means $K[s] = K[s; \id]$.
  A typical nontrivial example of skew polynomial rings is the ring $\setC(t)[\partial; \id, ']$ of differential operators, where $\funcdoms{'}{\setC(t)}{\setC(t)}$ is the usual differentiation.
  Another example of skew polynomial rings the ring $\setC(t)[S; \tau]$ of shift operators, where $\funcdoms{\tau}{\setC(t)}{\setC(t)}$ is defined by $f(t) \mapsto f(t+1)$ for $f \in \setC(t)$.

  Applying the commutation rule~\eqref{eq:skew_polynomial_commutation_rule} iteratively, we can uniquely represent any skew polynomial $p \in K[s; \sigma, \delta] \setminus \set{0}$ as $p = a_0 + a_1 s + \dotsb + a_l s^l$, where $l \in \setN$ and $a_0, \dotsc, a_l \in K$ with $a_l \ne 0$.
  This $l$ is called the \emph{degree} of $p$ and is denoted by $\deg p$.
  We set $\deg 0 \defeq -\infty$.
  Since a skew polynomial ring $K[s; \sigma, \delta]$ is an Ore domain (see, e.g.,~\cite[Exercise~6F]{Goodearl2004}), it has the quotient skew field $K(s; \sigma, \delta)$, called the \emph{skew rational function field}.
  Its element $f \in K(s; \sigma, \delta)$, called a \emph{skew rational function}, has the degree defined by $\deg f \defeq \deg p - \deg q$ with $f = pq^{-1}$ and $p,q \in K[s; \sigma, \delta]$.
  Then $-\deg$ is a discrete valuation on $K(s; \sigma, \delta)$ with residue skew field $K$.
  Its completion is the \emph{skew inverse Laurent series field} $K\pprn[\big]{s^{-1}; \sigma, \delta}$, which is the skew field of formal power series over $K$ in the form of
  \begin{align}\label{eq:inverse_laurent_fps}
    f = \sum_{d=l}^\infty a_d s^{-d}
  \end{align}
  for some $l \in \setZ$ and $a_l, a_{l+1}, \dotsc \in K$~\cite[Section~2.3]{Cohn1995}.
  This skew field has the natural addition and a multiplication defined by~\eqref{eq:skew_polynomial_commutation_rule} and
  \begin{align}\label{eq:inverse-skew-polynomial-commutation-formula}
    s^{-1}a = \sum_{d=0}^\infty \delta_d(a)s^{-(d+1)}
  \end{align}
  for $a \in K$, where
  \begin{align}\label{eq:higher_derivation_for_skew_inverse_laurent}
    \delta_d \defeq \sigma^{-1}\prn[\big]{-\delta \sigma^{-1}}^{d}
  \end{align}
  for $d \in \setN$ (the multiplication of maps means the composition)~\cite{Paykan2017}.
  This is determined so that $ss^{-1}a = a$.

  One can define the \emph{order} $\ord p$ of a skew polynomial $p \in K[s; \sigma, \delta]$ similarly to the usual polynomials, i.e., $\ord p$ is the minimum $l \in \setN$ such that $p$ is represented as $p = a_l s^l + \dotsb + a_L s^L$ for some $L \in \setN$ and $a_l, \dotsc, a_L \in K$ with $a_l \ne 0$.
  Set $\ord 0 \defeq +\infty$ in the same way.
  However, if $a \in \mult{K}$ satisfies $\delta(a) \ne 0$, then $\ord s = 1$, $\ord a = 0$ and $\ord sa = \ord (\sigma(a)s + \delta(a)) = 0$, which violate~\ref{item:V1}.
  Thus $\ord$ cannot be extended to a discrete valuation on $K(s; \sigma, \delta)$.
  Nevertheless, in case of $\delta = 0$, the order satisfies~\ref{item:V1}--\ref{item:V3} and thus $K(s; \sigma) \defeq K(s; \sigma, 0)$ becomes a DVSF equipped with a discrete valuation $\ord f \defeq \ord p - \ord q$ for $f = pq^{-1} \in K(s; \sigma)$ with $p,q \in K[s; \sigma]$.
  This is because the change of variable $\phi \vcentcolon f(s) \mapsto f\prn{s^{-1}}$ provides an isomorphism between $K(s; \sigma)$ and $K(s; \sigma^{-1})$ and $\ord f = -\deg \phi(f)$ for $f \in K(s; \sigma)$.
  The completion of $K(s; \sigma)$ with respect to $\ord$ is the \emph{skew Laurent series field} $K\pprn{s; \sigma}$, whose elements are represented as formal Laurent series~\eqref{eq:laurent_fps}~\cite[Section~2.3]{Cohn1995}.
  The residue skew field of $K\pprn{s; \sigma}$ is clearly $K$.

  See~\cite[Chapter~2]{Cohn1995},~\cite[Section~7.3]{Cohn2003}, and~\cite[Chapter~2]{Goodearl2004} for details of skew polynomials,~\cite{Paykan2017} for skew inverse Laurent series fields.
\end{example}

\subsection{Split DVSFs}
A DVSF $F$ is said to be \emph{spilt} if it has a representative set $Q$ such that it is a subring of the valuation ring $R$ of $F$.
Similarly, a DVR $R$ is called \emph{split} if its quotient skew field $F$ (see \cref{rem:DVR}) is split.
Such $Q$ is called a \emph{coefficient skew subfield} or a \emph{Cohen skew subfield} of $F$ and of $R$.

Let $F$ be a split DVSF with coefficient skew subfield $Q$ and residue skew field $K$.
Since elements in $Q$ and $K$ correspond bijectively, $Q$ and $K$ must be isomorphic skew fields.
We thus call $Q$ ``the'' coefficient skew subfield of $F$.
This observation also implies that $F$ could be split only if $F$ is \emph{equicharacteristic}, i.e., $F$ and $K$ have the same characteristic.
For example, the field of $p$-adic numbers is not split as the characteristics of $\setQ$ and $\setF_p$ are different.
Indeed, if $F$ is a field, then $F$ is split if and only if $F$ is equicharacteristic~\cite[Theorem~9]{Cohen1946}.
Therefore, by \cref{prop:dvsf_as_power_series}, a complete split DVF $F$ is isomorphic to the Laurent series field $K\pprn{s}$ over the residue field $K$ of $F$.
This is a special case of the Cohen structure theorem for complete commutative Noetherian local rings~\cite{Cohen1946}.

The situation is much more complicated in the general noncommutative case.
No characterization of a DVSF to be split is yet known; Vidal~\cite{Vidal1981} gave an equicharacteristic but non-split example of a DVSF\@.
Nevertheless, as we have seen in \cref{sec:examples_of_DVSFs}, a skew inverse Laurent series field $K\pprn{s^{-1}; \sigma, \delta}$ and a skew Laurent series field $K\pprn{s; \sigma}$ over a skew field $K$ are split, where their coefficient skew subfields are both $K$.

Let $F$ be a complete split DVSF, $K$ the coefficient skew subfield and $\pi$ a uniformizer.
Then \cref{prop:dvsf_as_power_series} implies that the commutation rule between $\pi$ and each $a \in K$ completely determines the ring structure of $F$.
The element $\pi a$ can be uniquely expressed as~\eqref{eq:higher_sigma_derivation_pi}, where $\funcdoms{\delta_d}{K}{K}$ is some map for all $d \in \setN$.
The family of maps $\prn{\delta_d}_{d \in \setN}$ satisfies the following~\cite{Roux1986}:
\begin{enumerate}[{label={\upshape{(HD\arabic*)}}}]
  \item $\delta_d$ is additive for $d \in \setN$.\label{item:HD1}
  \item $\displaystyle \delta_d(ab) = \sum_{i=0}^d \delta_i(a) \Delta_i^d(b)$ for $d \in \setN$ and $a,b \in K$, where $\funcdoms{\Delta_i^d}{K}{K}$ is defined by
        \begin{align}
          \Delta_i^d \defeq \sum_{\substack{j_0, \dotsc, j_i \in \setN \\ j_0 + \dotsb + j_i = d-i}} \delta_{j_0} \dotsm \delta_{j_i}
        \end{align}
        for $d \in \setN$ and $0 \in \intset{0, d}$.\label{item:HD2}
  \item $\delta_0$ is an automorphism on $K$.\label{item:HD3}
\end{enumerate}

In fact,~\ref{item:HD1} and~\ref{item:HD2} are derived from the distributive law $\pi(a+b) = \pi a + \pi b$ and the associative law $\pi(ab) = (\pi a)b$, respectively~\cite{Elliger1967,Smits1968}.
From~\ref{item:HD1},~\ref{item:HD2} for $d = 0$, and $\delta_0(1) = 1$ by $\pi1 = 1\pi$, the leading map $\delta_0$ must be a homomorphism on $K$.
It further must be surjective by~\ref{item:DVR1}, which implies~\ref{item:HD3}.

Generally, a sequence $\prn{\delta_d}_{d \in \setN}$ of maps on a skew field $K$ is called a \emph{higher $\sigma$-derivation}~\cite{Elliger1967,Smits1968} of $K$ (with $\sigma \defeq \delta_0$) if it satisfies~\ref{item:HD1}--\ref{item:HD3}.
For a higher $\sigma$-derivation $\prn{\delta_d}_{d \in \setN}$, we denote by $K\ssqbr{s; (\delta_d)}$ the ring of formal power series over $K$ in indeterminate $s$, whose every element $f$ is uniquely expressed as~\eqref{eq:fps}.
The addition on $K\ssqbr{s; (\delta_d)}$ is naturally defined and the multiplication is induced from
\begin{align}
  sa = \sum_{d=0}^\infty \delta_d(a) s^{d+1}
\end{align}
for $a \in K$.
This ring is an Ore domain and thus has a quotient skew field $K\pprn{s; (\delta_d)}$.
As the usual formal power series ring, each $f \in K\pprn{s; (\delta_d)}$ is represented as a formal Laurent series
\begin{align}
  f = \sum_{d=l}^\infty a_d s^d
\end{align}
with $a_d \in K$ for every $d \in \setZ$.
Defining the \emph{order} of $f \in K\pprn{s; (\delta_d)}$ as the minimum $l \in \setN$ with $a_l \ne 0$, the skew field $K\pprn{s; (\delta_d)}$ becomes a complete split DVSF with respect to the order~\cite{Roux1986}; its valuation ring is $K\ssqbr{s; (\delta_d)}$, its (one choice of a) uniformizer is $s$, and its coefficient skew subfield is $K$.
Conversely, as seen above, we have:

\begin{proposition}[{\cite[Proposition~1.6 in p.~292]{Roux1986}}]
  Let $F$ be a complete split DVSF with coefficient skew subfield $K$.
  Then $F$ is isomorphic to $K\pprn{s; (\delta_d)}$, where $\prn{\delta_d}_{d \in \setN}$ is the higher $\delta_0$-derivation of $K$ determined by~\eqref{eq:higher_sigma_derivation_pi}.
\end{proposition}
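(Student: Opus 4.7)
The plan is to construct an explicit isomorphism $\phi \vcentcolon K\pprn{s; (\delta_d)} \to F$ that fixes $K$ elementwise and sends $s$ to $\pi$. Given a formal Laurent series $f = \sum_{d=l}^\infty a_d s^d \in K\pprn{s; (\delta_d)}$, set
\begin{align*}
  \phi(f) \defeq \sum_{d=l}^\infty a_d \pi^d.
\end{align*}
Since $K$ sits inside the valuation ring $R$ of $F$ (as $K$ is the coefficient skew subfield), we have $v(a_d \pi^d) \ge d$, so the partial sums form a Cauchy sequence in the $\pi$-adic topology; by completeness of $F$, the series converges and $\phi$ is well-defined.

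The first step after that is to read off bijectivity from \cref{prop:dvsf_as_power_series}. Taking $Q = K$ as the representative set, part (ii) of that proposition asserts that every such series converges in $F$ (so $\phi$ is well-defined), and part (i) asserts that every element of $F$ has a unique $\pi$-adic expansion with coefficients in $K$ (so $\phi$ is both injective and surjective). Additivity of $\phi$ is immediate since both addition operations are performed termwise on $\pi$-adic (resp.\ $s$-adic) coefficients.

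The substantive step is showing $\phi$ is multiplicative. The key observation is that both rings are determined by exactly the same commutation rule: in $K\pprn{s; (\delta_d)}$ we have $sa = \sum_{d=0}^\infty \delta_d(a) s^{d+1}$ by construction, while in $F$ the maps $\delta_d$ are defined precisely so that $\pi a = \sum_{d=0}^\infty \delta_d(a) \pi^{d+1}$ via~\eqref{eq:higher_sigma_derivation_pi}. Hence $\phi(sa) = \phi(as)$ applied to any $a \in K$ is tautological, and by iterating the commutation rule one obtains matching expansions of $s^i a$ and $\pi^i a$ for every $i \in \setN$. The multiplication on each side is then the unique continuous bilinear extension of the rule $(a s^i)(b s^j) = a (s^i b) s^j$ (and similarly with $\pi$): both $\pi$-adic and $s$-adic topologies make multiplication continuous, and $\phi$ is itself continuous and an isometry with respect to valuations, so $\phi(fg) = \phi(f)\phi(g)$ follows by taking limits of the finite-polynomial case.

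The main obstacle is the bookkeeping in this last step: one has to ensure that rearrangements of doubly-indexed infinite sums are legitimate. This is handled by restricting to finite truncations $f_N = \sum_{d = l}^{N} a_d s^d$, $g_N = \sum_{d = m}^{N} b_d s^d$ for which $\phi(f_N g_N) = \phi(f_N) \phi(g_N)$ is a finite identity derivable directly from~\ref{item:HD1} and~\ref{item:HD2}, and then passing to the limit $N \to \infty$ using that $v(f - f_N), v(g - g_N) \to \infty$ together with the non-Archimedean inequality~\ref{item:V2} on $F$. This yields a continuous ring isomorphism $\phi$, establishing $F \cong K\pprn{s; (\delta_d)}$.
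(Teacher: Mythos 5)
Your construction is correct and is essentially the argument the paper itself relies on: the paper does not prove this proposition (it cites Roux and the preceding discussion that the commutation rule~\eqref{eq:higher_sigma_derivation_pi} together with \cref{prop:dvsf_as_power_series} completely determines the ring structure of $F$), and your explicit map $s \mapsto \pi$, with bijectivity from the uniqueness of $\pi$-adic expansions over the representative set $Q = K$ and multiplicativity from the shared commutation rule plus continuity, is exactly that argument made precise. The one detail worth tightening is the truncation step for negative exponents, since $s^{-1}a$ is not expanded directly by~\ref{item:HD1}--\ref{item:HD2}; this is cleanest if you first establish the isomorphism between the valuation rings $K\ssqbr{s; (\delta_d)}$ and $R$ (where all exponents are nonnegative) and then extend it to the Ore quotient skew fields.
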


\begin{corollary}
  Let $R$ be a complete split DVR with coefficient skew subfield $K$.
  Then $R$ is isomorphic to , where $\prn{\delta_d}_{d \in \setN}$ is the higher $\delta_0$-derivation of $K$ determined by~\eqref{eq:higher_sigma_derivation_pi}.
\end{corollary}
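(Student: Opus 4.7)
The plan is to deduce the corollary directly from the preceding proposition by passing between a complete split DVR and its Ore quotient skew field. First, by the discussion in \cref{rem:DVR}, the complete split DVR $R$ is the valuation ring of its Ore quotient skew field $F$, and $F$ is therefore a split DVSF whose coefficient skew subfield is canonically identified with $K$. Before invoking the proposition I would check that $F$ is itself complete: since every element of $F$ has the form $\pi^{-k} r$ with $r \in R$ and $k \in \setN$, the skew field $F$ is the increasing union $\bigcup_{k \in \setN} \pi^{-k} R$, and any Cauchy sequence in $F$ lies eventually inside a single $\pi^{-k} R$ (because its valuations are bounded below); completeness of $R$ then furnishes a limit inside $\pi^{-k} R \subseteq F$.

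Once $F$ is known to be a complete split DVSF with coefficient skew subfield $K$, I would apply the preceding proposition to obtain an isomorphism $\varphi \colon F \to K\pprn{s; (\delta_d)}$ of DVSFs, where $(\delta_d)_{d \in \setN}$ is the higher $\delta_0$-derivation of $K$ determined by~\eqref{eq:higher_sigma_derivation_pi}. Because this isomorphism is an isomorphism of \emph{valuation} skew fields, it preserves the discrete valuation, hence sends valuation rings to valuation rings.

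Finally, I would identify both sides of this restricted isomorphism: the valuation ring of $F$ is $R$ by construction, and the valuation ring of $K\pprn{s; (\delta_d)}$ is the formal power series ring $K\ssqbr{s; (\delta_d)}$, as noted in the paragraph introducing that skew field. Consequently $\varphi|_R$ yields the desired isomorphism $R \cong K\ssqbr{s; (\delta_d)}$, which is exactly the statement (the missing right-hand side in the corollary).

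I do not foresee any real obstacle; the only mildly delicate point is the verification that completeness of $R$ lifts to completeness of $F$, which allows the preceding proposition to be applied verbatim. Everything else is an immediate transfer of structure along the isomorphism produced by that proposition.
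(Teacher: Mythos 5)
Your proof is correct and is essentially the argument the paper leaves implicit: the corollary is stated without proof as an immediate consequence of the preceding proposition, obtained exactly by passing to the Ore quotient skew field, applying the proposition, and restricting the valuation-preserving isomorphism to the valuation rings (whose image is $K\ssqbr{s;(\delta_d)}$, the right-hand side accidentally omitted in the paper's statement). Your extra check that completeness of $R$ lifts to $F$ is sound and is the only nontrivial detail.
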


Note that since any split DVSF $F$ and DVR $R$ are a skew subfield and a subring of a complete split DVSF and DVR (see \cref{thm:completion}, $F$ and $R$ are isomorphic to a skew subfield of $K\pprn{s; (\delta_d)}$ and a subring of $K\pprn{s; (\delta_d)}$, respectively.

\begin{example}\label{ex:higher-derivations}
  We give some examples of higher $\sigma$-derivations and corresponding complete split DVSFs.
  Let $K$ be a skew field and $\sigma$ an automorphism on $K$.
  Then $(\sigma, 0, 0, \dotsc)$ is a higher $\sigma$-derivation and $K\pprn{s; (\sigma, 0, 0, \dotsc)} = K\pprn{s; \sigma}$.
  In particular, the case when $K$ is a field and $\sigma = \id$ corresponds to the representation of complete equicharacteristic
   described above.
  More generally, let $\delta$ be a \emph{right $\sigma$-derivation}, i.e., an additive map satisfying $\delta(ab) = \delta(a)\sigma(b) + a\delta(b)$ for $a,b \in K$.
  Then $\prn{\sigma, \sigma\delta, \sigma\delta^2, \dotsc}$ is a higher $\sigma$-derivation~\cite[Section~2.1]{Cohn1977}.
  If $\delta$ is a left $\sigma$-derivation instead of the right one, $-\sigma^{-1}\delta$ is a right $\sigma^{-1}$-derivation, and hence $\prn{\delta_d}_{d \in \setN}$ defined by~\eqref{eq:higher_derivation_for_skew_inverse_laurent} is a higher $\sigma^{-1}$-derivation; this is consistent with the fact that $K\pprn{s^{-1}; \sigma, \delta}$ is isomorphic to $K\pprn{t; (\delta_d)}$.
  Another type of a higher $\sigma$-derivation is given in~\cite{Brungs1984}.
  Dumas~\cite{Dumas1992} provides a survey for higher $\sigma$-derivations.
\end{example}

The following lemma provides a relation between coefficients in the $\pi$-adic expansions of $a \in R$ and $\pi a$.

\begin{lemma}
  Let $R$ be a split DVR with coefficient skew subfield $K$ and uniformizer $\pi$, and $(\delta_d)$ the higher $\delta_0$-derivation such that $R$ is isomorphic to $K\ssqbr{s; (\delta_d)}$
  For $a = \sum_{d=0}^\infty a_d \pi^d \in R$ with $a_0, a_1, \dotsc \in K$, the coefficient $b_d$ of $\pi^d$ in the $\pi$-adic expansion of $\pi a$ satisfies
  \begin{align}\label{eq:coefficient-in-pi-a}
    b_d = \begin{cases}
      \displaystyle \sum_{k=0}^{d-1} \delta_k(a_{d-k-1}) & (d \ge 1), \\
      0 & (d = 0).
    \end{cases}
  \end{align}
\end{lemma}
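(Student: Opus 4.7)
The plan is to reduce the claim directly to the defining commutation rule~\eqref{eq:higher_sigma_derivation_pi}, by treating $\pi a$ as an absolutely convergent series in the $\pi$-adic topology and rearranging.

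First, I would use additivity and $\pi$-adic continuity of left multiplication by $\pi$ to write
\begin{align}
  \pi a \;=\; \pi\sum_{i=0}^\infty a_i\pi^i \;=\; \sum_{i=0}^\infty (\pi a_i)\pi^i.
\end{align}
Next, I would substitute the commutation rule \eqref{eq:higher_sigma_derivation_pi} for each factor $\pi a_i$, obtaining the formal double series
\begin{align}
  \pi a \;=\; \sum_{i=0}^\infty\sum_{j=0}^\infty \delta_j(a_i)\,\pi^{i+j+1}.
\end{align}

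Then I would regroup this double series according to the total $\pi$-exponent $d = i+j+1$. For each fixed $d\ge 1$, the pairs $(i,j)\in\setN^2$ contributing to $\pi^d$ are exactly those with $j=d-i-1$ and $0\le i\le d-1$, which is a finite set. Performing the change of variable $k \defeq d-i-1$ yields
\begin{align}
  \pi a \;=\; \sum_{d=1}^\infty \Biggl(\,\sum_{k=0}^{d-1} \delta_k(a_{d-k-1})\Biggr)\pi^{d},
\end{align}
and the coefficient of $\pi^0$ is $0$ (since the smallest possible exponent on the right is $\pi^{0+0+1}=\pi^1$). By the uniqueness part of \cref{prop:dvsf_as_power_series}, this must agree term by term with the $\pi$-adic expansion $\pi a = \sum_{d=0}^\infty b_d \pi^d$, giving~\eqref{eq:coefficient-in-pi-a}.

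The only delicate point is the rearrangement of the double series into the single series in $d$. This is legitimate because for every fixed $d$, only finitely many $(i,j)$ satisfy $i+j+1 = d$, so the partial sums of the double series stabilize modulo $\pi^{d+1}$ after finitely many terms; hence convergence in the $\pi$-adic topology is unproblematic, and the completeness assertion in \cref{prop:dvsf_as_power_series} guarantees that the regrouped series defines the same element. This is the main (and only) technical step, but it is straightforward once one notes the uniformly bounded support in each residue class modulo higher powers of $\pi$.
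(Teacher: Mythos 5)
Your proposal is correct and follows essentially the same route as the paper's proof: substitute the commutation rule~\eqref{eq:higher_sigma_derivation_pi} into $\pi a = \sum_i (\pi a_i)\pi^i$ and regroup the resulting double series by total exponent. The paper performs this regrouping tersely in a single displayed computation, whereas you additionally spell out the $\pi$-adic convergence justification, which is a harmless elaboration of the same argument.
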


\begin{proof}
  Using~\eqref{eq:higher_sigma_derivation_pi}, we can rewrite $\pi a$ as
  \begin{align}
    \pi a
    = \sum_{d=0}^\infty \pi a_d \pi^d
    = \sum_{d=0}^\infty \prn{\sum_{k=0}^\infty \delta_k(a_d) \pi^{k+1}} \pi^{d}
    = \sum_{d=1}^\infty \prn{\sum_{k=0}^{d-1} \delta_k(a_{d-k-1})} \pi^d
  \end{align}
  as required.
\end{proof}

Let $F$ be a split DVSF with coefficient skew subfield $K$ and associated higher $\delta_0$-derivative $(\delta_d)_{d \in \setN}$.
As a computational model, we adopt the arithmetic model on $K$ and assume that one can compute $\delta_d(a)$ for every $d \in \setN$ and $a \in K$ in constant time.
In this model, if we know the leading $M+1$ coefficients $a_0, \dotsc, a_M$ in the $\pi$-adic expansion of $a \in K$, we can compute those of $\pi a$ in $\Order\prn{M^2}$-time by~\eqref{eq:coefficient-in-pi-a}.

\section{Preliminaries on Matrices}\label{sec:preliminaries-on-matrices}
For a ring $R$ and $n,n' \in \setN$, we denote the ring of $n \times n'$ matrices over $R$ by $R^{n \times n'}$.
We also denote by $Q^{n \times n'}$ the set of all $n \times n'$ matrices over a subset $Q$ of $R$.
A square matrix $A \in R^{n \times n}$ is said to be \emph{invertible} if there (uniquely) exists an $n \times n$ matrix over $R$, denoted by $A^{-1}$, such that $AA^{-1} = A^{-1}A = I_n$, where $I_n$ is the identity matrix of order $n$.
When $R$ can be extended to a skew field $F$, we call $A$ \emph{nonsingular} if $A$ is invertible over $F$ and \emph{singular} if not; the nonsingularity does not depend on the choice of $F$.
We denote by $\GL_n(R)$ the group of $n \times n$ invertible matrices over $R$, i.e., $\GL_n(R) \defeq \mult{\prn{R^{n \times n}}}$.

For $a \in \mult{R}$ and $\alpha = \prn{\alpha_i}_{i \in \intset{n}} \in \setZ^n$, we define $D(a^\alpha) \defeq \diag\prn[\big]{a^{\alpha_i}}_{i \in \intset{n}}$, where $\diag$ denotes the diagonal matrix.
For an additive map $\funcdoms{\phi}{R}{R}$ and $A \in R^{n \times n'}$, let $\phi(A)$ denote the $n \times n'$ matrix over $R$ obtained by applying $\phi$ to each entry in $A$.

Let $A \in R^{n \times n'}$ be a matrix.
For $I \subseteq \intset{n}$ and $J \subseteq \intset{n'}$, we denote by $A[I, J]$ the submatrix of $A$ consisting of rows $I$ and columns $J$.
When $I = \intset{n}$, we simply write $A[J] \defeq A[\intset{n}, J]$.

\subsection{Matrices over Skew Fields}\label{sec:matrices-over-skew-fields}
Let $F$ be a skew field.
A right (left) $F$-module is especially called a \emph{right} (\emph{left}) \emph{$F$-vector space}.
The \emph{dimension} of a right (left) $F$-vector space $V$ is defined as the rank of $V$ as a module, that is, the cardinality of any basis of $V$.
The usual facts from linear algebra on independent sets and generating sets in vector spaces are valid even on skew fields~\cite{Lam1999}.

The \emph{rank} $\rank A$ of a matrix $A \in F^{n \times n'}$ is the dimension of the right $F$-vector space spanned by the column vectors of $A$, and is equal to the dimension of the left $F$-vector space spanned by the row vectors of $A$.
The rank is invariant under (right and left) multiplication of nonsingular matrices.
It is observed that a square matrix $A \in F^{n \times n}$ is nonsingular if and only if $\rank A = n$.
The rank of $A \in F^{n \times n'}$ is equal to the minimum $r \in \setN$ such that there exists a decomposition $A = BC$ by some $B \in F^{n \times r}$ and $C \in F^{r \times n'}$~\cite{Cohn1985}.
Here we give another characterization of the rank, which is well-known on the commutative case.
\begin{proposition}\label{prop:max_nonsingular_submatrix}
  The rank of a matrix $A \in F^{n \times n'}$ over a skew field $F$ is equal to the maximum $r \in \setN$ such that $A$ has a nonsingular $r \times r$ submatrix.
  In addition, $A$ has a nonsingular $k \times k$ submatrix for all $k \in \intset{0, r}$.
\end{proposition}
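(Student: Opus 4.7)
The plan is to prove both directions of the equality separately, and then deduce the ``in addition'' statement by descending from a maximum-size nonsingular submatrix. Throughout, I would freely use the skew-field fact (already stated in the text) that $\rank A$ equals both the right dimension of the column span and the left dimension of the row span.

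First I would set $r \defeq \rank A$ and verify the easy inequality: if $A[I, J]$ is a nonsingular $k \times k$ submatrix, then the rows of $A[I, J]$ are left-linearly independent in $F^{1 \times k}$, so the rows of $A$ indexed by $I$ are left-linearly independent in $F^{1 \times n'}$ (a relation among them restricts to a relation on columns $J$). Hence the row rank of $A$ is at least $k$, giving $k \le r$.

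For the reverse inequality (existence of a nonsingular $r \times r$ submatrix), I would argue in two stages. Since $\rank A = r$ equals the right column rank, there is a set $J \subseteq \intset{n'}$ with $\card{J} = r$ such that the columns $A[J]$ are right-linearly independent; equivalently, $\rank A[J] = r$. Applying the row/column rank equality to $A[J] \in F^{n \times r}$, the left row rank of $A[J]$ is also $r$, so there exists $I \subseteq \intset{n}$ with $\card{I} = r$ such that the rows of $A[I, J]$ are left-linearly independent. Thus $A[I, J]$ is an $r \times r$ matrix of rank $r$, i.e., nonsingular.

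For the ``in addition'' statement, I would take the nonsingular $r \times r$ submatrix $A[I, J]$ just constructed and, for $k \in \intset{0, r}$, pick any $I' \subseteq I$ with $\card{I'} = k$. The rows of $A[I', J]$, being a subfamily of a left-independent family, are themselves left-independent, so $\rank A[I', J] = k$. Re-applying the row/column rank equality, there exists $J' \subseteq J$ with $\card{J'} = k$ such that $A[I', J']$ is nonsingular, which is the required $k \times k$ submatrix of $A$. No step is genuinely hard here; the only thing to be careful about is to invoke the noncommutative row--column rank identity each time one passes between the left and right sides, so that arguments standard in the commutative setting remain valid over the skew field $F$.
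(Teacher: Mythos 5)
Your proof is correct and uses the same essential mechanism as the paper: the equality of left row rank and right column rank over a skew field, applied twice to first select $r$ independent columns and then $r$ independent rows within them. The only cosmetic difference is in the ``in addition'' clause, where you descend inside the fixed $r \times r$ nonsingular block, while the paper simply reruns the column-then-row selection directly on $A$ for each $k \le r$; both are equally valid.
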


\begin{proof}
  We first show the latter part.
  For $k \in \intset{0, \rank A}$, we can take a column subset $J \subseteq \intset{n'}$ of cardinality $k$ such that the column vectors of $A[J]$ are linearly independent.
  Since $\rank A[J] = k$, there must be $I \subseteq \intset{n}$ of cardinality $k$ such that the row vectors of $A[I, J]$ is linearly independent.
  Then $A[I, J]$ is a $k \times k$ nonsingular submatrix of $A$ due to $\rank A[I, J] = k$.

  The former part is shown as follows.
  Let $r \in \setN$ be the maximum size of a nonsingular submatrix of $A$.
  It holds $\rank A \le r$ by the latter part of the claim.
  To show $\rank A \ge r$, take an $r \times r$ nonsingular submatrix $A[I, J]$ of $A$.
  Since $\rank A[I, J] = r$, the set of column vectors of $A$ indexed by $J$ is linearly independent.
  Thus we have $\rank A \ge r$.
\end{proof}

We next define the \emph{Dieudonné determinant} for nonsingular matrices over a skew field.
To describe this, we introduce the \emph{Bruhat decomposition} as follows.
A lower (upper) unitriangular matrix is a lower (resp.\ upper) triangular matrix whose diagonal entries are 1.

\begin{proposition}[{Bruhat decomposition~\cite[Theorem~9.2.2]{Cohn2003}}]\label{prop:bruhat}
  A square matrix $A \in F^{n \times n}$ over a skew field $F$ can be decomposed as $A = LDPU$, where $L$ is lower unitriangular, $D$ is diagonal, $P$ is a permutation matrix, and $U$ is upper unitriangular.
  If $A$ is nonsingular, this decomposition is unique.
\end{proposition}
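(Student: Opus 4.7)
The plan is to establish existence of the decomposition by a Gaussian-elimination procedure and then, in the nonsingular case, deduce uniqueness by first showing that $P$ is intrinsic to $A$, after which $D$, $L$, and $U$ follow in turn.

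For existence I proceed by induction on $n$. If $A = 0$ all four factors may be taken trivial, so assume $A \neq 0$. Let $j_1$ be the smallest column index such that the $j_1$-th column of $A$ is nonzero and $i_1$ the smallest row index with $A_{i_1, j_1} \neq 0$. By minimality, columns $1, \dots, j_1-1$ of $A$ vanish and $A_{i, j_1} = 0$ for $i < i_1$, so $A_{i_1, j_1}$ may serve as a pivot. I then left-multiply by a lower unitriangular matrix $L_1$ whose only nontrivial off-diagonal entries lie in column $i_1$, chosen to clear the entries of column $j_1$ below row $i_1$; this is possible because left-multiplication by a lower unitriangular matrix subtracts multiples of row $i_1$ from lower rows without disturbing the rows above. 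Symmetrically, I right-multiply by an upper unitriangular matrix $U_1$ whose only nontrivial off-diagonal entries lie in row $j_1$, clearing the entries of row $i_1$ to the right of column $j_1$. The matrix $L_1 A U_1$ then has $A_{i_1, j_1}$ as the unique nonzero entry of its $i_1$-th row and $j_1$-th column, so deleting that row and column yields an $(n-1) \times (n-1)$ matrix $\tilde A$. Applying the induction hypothesis to $\tilde A$ and extending the resulting factors $\tilde L, \tilde D, \tilde P, \tilde U$ by inserting identity rows and columns at positions $i_1$ and $j_1$ (and setting the $i_1$-th diagonal entry of $D$ to $A_{i_1, j_1}$ with $\pi(i_1) = j_1$) assembles the desired decomposition $A = L D P U$.

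For uniqueness in the nonsingular case, my main observation is that left-multiplication by a lower unitriangular matrix preserves the row span of every top block of rows, while right-multiplication by an upper unitriangular matrix preserves the column span of every left block of columns. Consequently, for any decomposition $A = L D P U$ and any $i, j \in \intset{n}$,
\begin{align*}
    \rank A[\intset{i}, \intset{j}] = \rank (D P)[\intset{i}, \intset{j}] = \card{\set{k \in \intset{i}}[\pi(k) \le j]},
\end{align*}
where $\pi$ is the permutation associated to $P$ and the last equality uses nonsingularity of $D$. The left-hand side is intrinsic to $A$, so this rank data determines $\pi$, and hence $P$, uniquely. With $P$ fixed, equating two decompositions $L D P U = L' D' P U'$ yields $L^{-1} L' \cdot D' = D \cdot P (U' U^{-1}) P^{-1}$, where $L^{-1} L'$ is lower unitriangular and $U' U^{-1}$ is upper unitriangular. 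Comparing diagonal entries forces $D = D'$, and the remaining entrywise equations together with the triangular supports on both sides then force $L = L'$ and $U = U'$.

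The main technical obstacle is the last matching step. The lower-triangular support of $L^{-1} L'$ and the $P$-twisted upper-triangular support of $P (U' U^{-1}) P^{-1}$ can overlap at positions $(i, j)$ with $i > j$ and $\pi^{-1}(i) < \pi^{-1}(j)$, so driving both matrices to the identity requires using the normalization convention on $U$ and $L$ that is implicit in the Bruhat decomposition. Carrying out the support-chasing in the noncommutative setting, where the entrywise relations $(L^{-1} L')_{ij} D_{jj} = D_{ii} (P (U' U^{-1}) P^{-1})_{ij}$ must be propagated using only invertibility of the diagonal entries of $D$, is the most delicate part of the argument.
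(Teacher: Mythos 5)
The paper itself offers no proof of this proposition (it is quoted from Cohn), so your argument has to stand on its own. The existence half is correct: your pivoting rule (topmost nonzero entry of the leftmost nonzero column) guarantees that the entries above the pivot in its column and to the left of it in its row already vanish, so only downward row operations and rightward column operations are needed, and these are exactly what lower/upper unitriangular factors realize; the re-insertion step in the induction is also sound, as is the noncommutative bookkeeping (left coefficients for row operations, right coefficients for column operations). Your rank computation showing that $P$, and then $D$, are determined by $A$ is also correct over a skew field, and this is all the paper actually uses: the Dieudonné determinant depends only on $\sgn(P)$ and the diagonal of $D$.

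The gap is the step you flag at the end, and it is not merely delicate --- it cannot be bridged for the statement as literally given, because without a normalization the factors $L$ and $U$ are genuinely non-unique. Over any (skew) field,
\begin{align}
  \begin{pmatrix} 0 & 1 \\ 1 & 1 \end{pmatrix}
  = \begin{pmatrix} 1 & 0 \\ 1 & 1 \end{pmatrix}
    \begin{pmatrix} 0 & 1 \\ 1 & 0 \end{pmatrix}
  = \begin{pmatrix} 0 & 1 \\ 1 & 0 \end{pmatrix}
    \begin{pmatrix} 1 & 1 \\ 0 & 1 \end{pmatrix}
\end{align}
exhibits two distinct decompositions $LDPU$ (with $D = I_2$) of one nonsingular matrix. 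This is precisely the overlap phenomenon you describe: at positions $(i,j)$ with $i>j$ on one side and $\pi(i)<\pi(j)$ on the other, the entrywise relation only couples the two unknowns and forces neither to vanish. Cohn's Theorem~9.2.2 accordingly carries a normalization (roughly, that $L$ vanish on those overlap positions, i.e.\ that the $P$-conjugate of $L$ again be triangular of the appropriate kind); the set of such $L$ is a subgroup, so the quotient argument $L'^{-1}L = N$ stays inside it and the support-chasing then closes, but one must also verify that the elimination can be arranged to satisfy the normalization. So to complete your proof you need to (i) state the normalization explicitly, (ii) check your construction meets it, and (iii) rerun the final matching step using it --- or else weaken the uniqueness claim to $P$ and $D$ only, which your argument already establishes and which is all that is needed for $\Det$ to be well defined.
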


Let $\abel{\mult{F}} \defeq \mult{F} \extends  \comm{\mult{F}}$ denote the abelianization of $\mult{F}$, where $\comm{F^\times} \defeq \agbr{\set{aba^{-1}b^{-1}}[a,b \in F^\times]}$ is the commutator subgroup of $F^\times$.
The \emph{Dieudonné determinant} $\Det A$ of $A \in \GL_n(F)$, which is decomposed as $A = LDPU$ by \cref{prop:bruhat}, is an element of $\abel{\mult{F}}$ defined by
\begin{align}
  \Det A \defeq \sgn(P) e_1 e_2 \cdots e_n \bmod \comm{\mult{F}},
\end{align}
where $\sgn(P) \in \set{+1, -1}$ is the sign of the permutation $P$ and $e_1, \ldots, e_n \in \mult{F}$ are the diagonal entries of $D$~\cite{Dieudonne1943}.
In case where $F$ is commutative, the Dieudonné determinant coincides with the usual determinant.

An \emph{elementary matrix} over $F$ is a unitriangular matrix $E_n(i, j; e) \in \GL_n(F)$ whose the $(i,j)$th entry $(i \ne j)$ is $e \in F$ and other off-diagonal entries are 0.
An \emph{elementary operation} on $A \in F^{n \times m}$ is the (left or right) multiplication of $A$ by an elementary matrix, which corresponds to adding a left (right) multiple of a row (resp.\ column) to another row (resp.\ column) of $A$.
Denote by $\E_n(F)$ the subgroup of $\GL_n(F)$ generated by elementary matrices.
If $F$ is a field, $\E_n(F)$ is nothing but the special linear group $\SL_n(F) \defeq \set{A \in \GL_n(F)}[\det A = 1]$~\cite[Theorem~3.5.1]{Cohn2003}.
This can be extended to the Dieudonné determinant as follows:

\begin{theorem}[{\cite[Theorem~9.2.6]{Cohn2003}}]
  For a skew field $F$ and $n \in \setN$, the Dieudonné determinant gives rise to an exact sequence of groups
  \begin{align}
    1 \longrightarrow \E_n(F) \longrightarrow \GL_n(F) \overset{\Det}{\longrightarrow} \abel{\mult{F}} \longrightarrow 1.
  \end{align}
\end{theorem}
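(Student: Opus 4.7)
The plan is to verify the four pieces that make the displayed sequence exact: well-definedness and multiplicativity of $\Det$, the inclusion $\E_n(F) \subseteq \ker \Det$, surjectivity of $\Det$, and the reverse inclusion $\ker \Det \subseteq \E_n(F)$. The first three are short; the fourth is the main content.

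First I would argue that $\Det$ is a well-defined homomorphism from $\GL_n(F)$ to $\abel{\mult{F}}$. Well-definedness is already guaranteed by the uniqueness assertion in the Bruhat decomposition (\cref{prop:bruhat}). For multiplicativity, the standard route is to show that left (and right) multiplication by an elementary matrix $E_n(i,j;e)$ does not change $\Det$ (modulo $\comm{\mult{F}}$), since such an operation only alters $L$, $U$, or rearranges terms within a commutator class. Together with the fact that swaps of rows (permutation matrices) contribute the sign and diagonal scalings contribute their product modulo commutators, this yields $\Det(AB) = \Det(A)\Det(B)$. As an immediate corollary, $\Det$ kills every elementary matrix, so $\E_n(F) \subseteq \ker\Det$.

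Surjectivity is then trivial: for $a \in \mult{F}$, the diagonal matrix $\diag(a,1,\dotsc,1)$ is already in Bruhat form with $\Det = a \bmod \comm{\mult{F}}$, so every class in $\abel{\mult{F}}$ is attained.

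The only nontrivial step is $\ker\Det \subseteq \E_n(F)$. The approach is Gaussian elimination over $F$: given $A \in \GL_n(F)$, I would perform elementary row and column operations to reduce $A$ to a diagonal matrix $D = \diag(e_1,\dotsc,e_n)$. Next, using the identity
\begin{equation*}
  \begin{pmatrix} a & 0 \\ 0 & b \end{pmatrix}
  \equiv \begin{pmatrix} ab & 0 \\ 0 & 1 \end{pmatrix} \pmod{\E_2(F)}
\end{equation*}
and the dual identity $\diag(a,b) \equiv \diag(1,ba) \bmod \E_2(F)$ (both checked by an explicit product of four elementary $2\times 2$ matrices, Whitehead's lemma), I would successively push all the diagonal entries into the $(1,1)$ position, reducing $D$ modulo $\E_n(F)$ to $\diag(c,1,\dotsc,1)$ where $c = e_1 e_2 \dotsm e_n \bmod \comm{\mult{F}}$. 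The very same lemma, applied to conjugates $\diag(aba^{-1}b^{-1},1) \equiv I \bmod \E_2(F)$, shows that $c$ depends on $A$ only through its class in $\abel{\mult{F}}$, namely $c = \Det A$. Thus if $\Det A = 1$, we have $\diag(c,1,\dotsc,1) \in \E_n(F)$, hence $A \in \E_n(F)$.

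The main obstacle is the Whitehead-lemma step and the accompanying bookkeeping that shows $e_1 \dotsm e_n \bmod \comm{\mult{F}}$ is exactly the invariant $\Det A$ defined via Bruhat decomposition, rather than some a priori different quantity arising from the row-reduction order. Once this is settled, the exactness of the sequence follows, matching the statement cited from~\cite[Theorem~9.2.6]{Cohn2003}.
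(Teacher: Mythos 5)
The paper does not prove this statement at all; it is imported verbatim from Cohn~\cite[Theorem~9.2.6]{Cohn2003}, so there is no in-paper argument to compare yours against. Your overall architecture is the standard textbook one (well-definedness via the Bruhat decomposition, multiplicativity, surjectivity on $\diag(a,1,\dotsc,1)$, and $\ker\Det\subseteq\E_n(F)$ via Gaussian elimination plus the Whitehead lemma), and the surjectivity step and the Whitehead-lemma identities $\diag(a,b)\equiv\diag(ab,1)$ and $\diag(aba^{-1}b^{-1},1)\equiv I$ modulo $\E_2(F)$ are correct.

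The genuine gap is in the multiplicativity step, which is where essentially all the content of the theorem lives. You justify $\Det\prn{E_n(i,j;e)A}=\Det A$ by asserting that such an operation ``only alters $L$, $U$, or rearranges terms within a commutator class.'' That is false as stated: an elementary operation can change the permutation $P$ and the diagonal $D$ of the Bruhat decomposition simultaneously. For example, $A=\begin{psmallmatrix}0&1\\1&0\end{psmallmatrix}$ has $L=D=U=I$ and $P$ the transposition, whereas $E_2(1,2;e)A=\begin{psmallmatrix}e&1\\1&0\end{psmallmatrix}$ decomposes as $\begin{psmallmatrix}1&0\\e^{-1}&1\end{psmallmatrix}\diag\prn{e,-e^{-1}}\begin{psmallmatrix}1&e^{-1}\\0&1\end{psmallmatrix}$, with $P=I$ and a nontrivial $D$. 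The two values of $\Det$ do agree (both are $-1$), but proving they always agree requires a case analysis of how elementary matrices move the Bruhat cells --- this is precisely the content of the lemmas preceding Theorem~9.2.6 in Cohn, or of Dieudonné's original induction, and it cannot be waved through. Note also that once this invariance is in hand, the step you single out as the ``main obstacle'' evaporates: Gaussian elimination writes $A$ as a product of elementary matrices and $\diag(c,1,\dotsc,1)$, so multiplicativity immediately gives $c\equiv\Det A\bmod\comm{\mult{F}}$ with no extra bookkeeping. In short, the hard point is not where you locate it; it is the invariance of the Bruhat-defined $\Det$ under elementary row and column operations, and your proposal currently has no valid argument for it.
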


Namely, $\funcdoms{\Det}{\GL_n(F)}{\abel{\mult{F}}}$ is a surjective map satisfying
\begin{enumerate}[{label={\upshape{(D\arabic*)}}}]
  \item $\Det AB = \Det A \Det B$ for $A, B \in \GL_n(F)$,\label{item:D1}
  \item $\Det A = 1$ for $A \in \E_n(F)$,\label{item:D2}
\end{enumerate}
where the inverse of~\ref{item:D2} also holds, i.e., $\E_n(F) = \set{A \in \GL_n(F)}[\Det A = 1]$.
It further follows immediately from the definition of $\Det$ that
\begin{enumerate}[{label={\upshape{(D3)}}}]
  \item $\displaystyle \Det \diag \prn{e_1, \dotsc, e_n} = \prod_{i=1}^n e_i \bmod \comm{\mult{F}}$ for $e_1, \dotsc, e_n \in \mult{F}$,\label{item:D3}
\end{enumerate}
where $\diag\prn{e_1, \dotsc, e_n}$ is the diagonal matrix with diagonal entries $e_1, \dotsc, e_n$.
Indeed, $\Det$ is the unique map satisfying~\ref{item:D1}--\ref{item:D3} since unitriangular matrices are in $\E_n(F)$ and any permutation matrix $P$ can be brought into $\diag(\sgn(P), 1, \dotsc, 1)$ by elementary operations.

\subsection{Matrices over Valuation Skew Fields}\label{sec:matrices-over-valuation-skew-fields}
Let $F$ be a valuation skew field with valuation $v$.
For any $A \in \GL_n(F)$, we denote by $\vdet(A)$ the valuation of any representative of $\Det A$; this is well-defined because all commutators of $\mult{F}$ have valuation $0$.
We also define $\vdet(A) \defeq +\infty$ for singular $A \in F^{n \times n}$.
By~\ref{item:V1},~\ref{item:V3} and~\ref{item:D1}--\ref{item:D3}, it holds
\begin{enumerate}[{label={\upshape{(VD\arabic*)}}}]
  \item $\vdet(AB) = \vdet(A) + \vdet(B)$ for $A, B \in F^{n \times n}$,\label{item:VD1}
  \item $\vdet(A) = 0$ for $A \in \E_n(F)$,\label{item:VD2}
  \item $\displaystyle \vdet\prn{\diag\prn{d_1, \dotsc, d_n}} = \sum_{i=1}^n v(d_i)$ for $d_1, \dotsc, d_n \in F$.\label{item:VD3}
\end{enumerate}
By the Bruhat decomposition, $\funcdoms{\vdet}{F^{n \times n}}{\setR \cup \set{+\infty}}$ is the unique map satisfying~\ref{item:VD1}--\ref{item:VD3}, as Taelman~\cite{Taelman2006} observed for $\deg \Det$ of skew polynomials.

Let $\M(F)$ denote the set of all square matrices of finite order over $F$.
If we see $\vdet$ as a function on $\M(F)$, it satisfies the (real) \emph{matrix valuation} axioms.
To describe this, we shall define the \emph{determinantal sum} for two matrices $A, B \in F^{n \times n'}$ such that their columns are identical except for the first columns.
The \emph{determinantal sum} of $A$ and $B$ with respect to the first column is an $n \times n'$ matrix over $F$ whose first column is the sum of those of $A$ and $B$, and other columns are the same as $A$.
The determinantal sums with respect to other columns and rows are also defined.
We denote the determinantal sum of $A$ and $B$ (with respect to an appropriate column or row) by $A \detsum B$.

A (real) \emph{matrix valuation}~\cite{Hezavehi1982} on a skew field $F$ is a map $\funcdoms{V}{\M(F)}{\setR \cup \set{+\infty}}$ that satisfies
\begin{enumerate}[{label={\upshape{(MV\arabic*)}}}]
  \item $V\begin{pmatrix} A & O \\ O & B\end{pmatrix} = V(A) + V(B)$ for $A, B \in \M(F)$, where $O$ denotes the zero matrix of appropriate size,\label{item:MV1}
  \item $V(A \detsum B) \ge \min\set{V(A), V(B)}$ for $A, B \in \M(F)$ such that $A \detsum B$ is defined,\label{item:MV2}
  \item $V(1) = 0$,\label{item:MV3}
  \item $V(A) = +\infty$ for singular $A \in \M(F)$,\label{item:MV4}
  \item $V(A)$ is unchanged if a column or a row of $A$ is multiplied by $-1$.\label{item:MV5}
\end{enumerate}
These axioms derive extra useful formulas as follows.

\begin{proposition}[\cite{Hezavehi1982}]\label{prop:matrix_valuation}
  For a matrix valuation $V$ on a skew field $F$, the following hold:
  \begin{enumerate}
    \item $V(AB) = V(A) + V(B)$ for $A,B \in F^{n \times n}$.\label{item:matrix_valuation_hom}
    \item $V\begin{pmatrix}A & * \\ O & B\end{pmatrix} = V\begin{pmatrix}A & O \\ * & B\end{pmatrix} = V(A) + V(B)$ for $A, B \in \M(F)$, where $*$ denotes any matrix of appropriate size.
    \item The equality in~\textup{\ref{item:MV2}} holds whenever $V(A) \ne V(B)$.
  \end{enumerate}
\end{proposition}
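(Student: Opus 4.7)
The plan is to first establish that elementary row/column operations leave $V$ unchanged, then prove the three assertions in the order $(2)$, $(1)$, $(3)$.

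\textbf{Step 1 (invariance under elementary operations).}
I first claim that adding a right multiple of one column to another (or a left multiple of one row to another) does not change $V$; the same holds for sign flips by~\ref{item:MV5}. To see this, suppose $A'$ is obtained from $A$ by adding column $j$ times $c$ on the right to column $k$. Then $A' = A \detsum B$, where $B$ agrees with $A$ outside column $k$, and whose $k$th column equals $A_{*,j}c$. Since two columns of $B$ are right-proportional, $B$ is singular, so $V(B) = +\infty$ by~\ref{item:MV4}, hence $V(A') \ge V(A)$ by~\ref{item:MV2}, and the reverse direction follows by inverting the operation. In particular, any permutation of rows or columns preserves $V$, since every transposition factors as a sequence of additions plus one sign flip.

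\textbf{Step 2 (block triangular formula).}
To prove assertion (2), consider $M = \begin{pmatrix} A & C \\ O & B \end{pmatrix}$ with $A$ of order $n$ and $B$ of order $m$. If $A$ is singular, then its rows are left-dependent, so the first $n$ rows of $M$ are left-dependent, making $M$ singular; both sides are $+\infty$. If $A$ is nonsingular, then the columns of $A$ span the upper $F$-vector space, so by Step~1 we may clear $C$ by successive column additions from the first $n$ columns, bringing $M$ to $\begin{pmatrix} A & O \\ O & B \end{pmatrix}$ without changing $V$, and then invoke~\ref{item:MV1}. The lower-triangular case is symmetric.

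\textbf{Step 3 (multiplicativity).}
To prove assertion (1), I may assume $A, B \in \GL_n(F)$, for otherwise $AB$ is singular and both sides are $+\infty$. Consider $N = \begin{pmatrix} A & O \\ -I & B \end{pmatrix}$. By Step~2, $V(N) = V(A) + V(B)$. Applying the column operation that adds column $j$ (for $j=1,\dotsc,n$) multiplied on the right by $B_{jk}$ to column $n+k$ transforms $N$ into $\begin{pmatrix} A & AB \\ -I & O \end{pmatrix}$ without changing $V$ by Step~1. A block row swap (which preserves $V$ by Step~1) converts this to $\begin{pmatrix} -I & O \\ A & AB \end{pmatrix}$. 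Another application of Step~2 yields $V(-I) + V(AB) = V(AB)$, where $V(I_n) = 0$ comes from iterating~\ref{item:MV1} on~\ref{item:MV3} and $V(-I) = V(I)$ by~\ref{item:MV5}. Equating the two computations of $V(N)$ gives $V(AB) = V(A) + V(B)$.

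\textbf{Step 4 (strict inequality).}
For assertion (3), assume without loss of generality that $V(A) < V(B)$ and that the determinantal sum is with respect to the first column. Let $B'$ be the matrix agreeing with $B$ outside the first column, whose first column is $-B_{*,1}$; by~\ref{item:MV5}, $V(B') = V(B)$. Since the first columns of $A \detsum B$ and $B'$ sum to the first column of $A$ (while the other columns coincide), we have $A = (A \detsum B) \detsum B'$, hence by~\ref{item:MV2},
\[
V(A) \ge \min\bigl\{V(A \detsum B),\, V(B)\bigr\}.
\]
Because $V(A) < V(B)$, this forces $V(A) \ge V(A \detsum B)$, and together with~\ref{item:MV2} in the original direction we conclude $V(A \detsum B) = V(A) = \min\{V(A), V(B)\}$.

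The main subtlety is Step~3: the block-matrix trick must be carried out with right-multiplications on columns (and the subsequent left-multiplications on rows) in the correct order, since $F$ is noncommutative. Once the operations are set up consistently, the rest is formal.
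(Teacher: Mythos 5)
The paper does not actually prove this proposition; it is quoted from Hezavehi (1982) without proof, so there is no in-paper argument to compare against. Your derivation directly from the axioms \ref{item:MV1}--\ref{item:MV5} is sound and is essentially the standard one: invariance under elementary operations via a determinantal sum with a singular matrix, the block-triangular formula by clearing the off-diagonal block, multiplicativity via the Schur-type matrix $\begin{pmatrix} A & O \\ -I & B \end{pmatrix}$, and the strict-inequality case by the usual ``subtract back'' trick $A = (A \detsum B) \detsum B'$. The noncommutative bookkeeping (right multiples on columns, left multiples on rows, columns of $AB$ as right combinations of columns of $A$ with coefficients $B_{jk}$) is handled correctly.

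One justification in Step~2 is wrong as stated, though the conclusion it supports is true. You claim that if $A$ is singular then the first $n$ rows of $M = \begin{pmatrix} A & C \\ O & B \end{pmatrix}$ are left-dependent. This fails in general: take $A = O$ and $C = I_n$, whose first $n$ rows are left-independent. The correct argument is via columns (or by subadditivity of rank under concatenation): the first $n$ columns of $M$ have rank $\rank\begin{psmallmatrix} A \\ O \end{psmallmatrix} = \rank A < n$ and the last $m$ columns have rank at most $m$, so $\rank M < n + m$ and $M$ is singular, making both sides $+\infty$ by \ref{item:MV1} and \ref{item:MV4}. With that one-line repair, and the symmetric remark for the lower-triangular case (use the last $m$ columns when $B$ is singular), the proof is complete.
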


By \cref{prop:matrix_valuation}~\ref{item:matrix_valuation_hom} and~\ref{item:MV2}--\ref{item:MV4}, a matrix valuation $V$ restricted to $F$ ($1 \times 1$ matrices) is exactly a valuation $v$ on $F$.
This can be extended to $\M(F)$ as $\vdet$, i.e., $V = \vdet$ holds.
In general, for any valuation $v$ of $F$, $\vdet$ is a matrix valuation on $F$~\cite{Hezavehi1982}; the correspondence between $v$ and $V$ is clearly bijective.
Therefore, a matrix valuation is nothing but a valuation of the Dieudonné determinant.
See also~\cite[Section~9.3]{Cohn1995}.

For a matrix $A \in F^{n \times n'}$ over a valuation skew field $F$ with valuation $v$, we define
\begin{align}\label{def:zeta_k}
  \zeta_k(A) \defeq \min \set{\vdet\prn{A[I, J]}}[I \subseteq \intset{n}, J \subseteq \intset{n'}, \card{I} = \card{J} = k]
\end{align}
for $k \in \intset{0, \min\set{n, n'}}$.
Note that $\zeta_0(A) = 0$, $\zeta_1(A)$ is equal to the minimum of the valuation of an entry in $A$, and $\zeta_n(A) = \vdet(A)$ for $A \in F^{n \times n}$.
In addition, $\zeta_k(A) \ne +\infty$ if and only if $k \le \rank A$ by \cref{prop:max_nonsingular_submatrix}.

\Cref{prop:representation_on_valuation_ring,prop:dvsf_as_power_series} are naturally extended to matrices over valuation skew fields and DVSFs as follows.

\begin{proposition}\label{prop:representation_on_valuation_ring_matrix}
  Let $F$ be a valuation skew field with valuation $v$, valuation ring $R$, and representative set $Q$.
  Then any $A \in R^{n \times n'}$ is uniquely expressed as $A = A_0 + \tilde{A}$, where $A_0 \in Q^{n \times n'}$ and $\tilde{A} \in {J(R)}^{n \times n'}$.
\end{proposition}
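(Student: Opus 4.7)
The plan is to reduce this claim to its scalar analogue, \cref{prop:representation_on_valuation_ring}, applied entrywise. Since the conditions ``$A_0 \in Q^{n \times n'}$'' and ``$\tilde{A} \in {J(R)}^{n \times n'}$'' are defined componentwise (an entry of a matrix over $Q$ means simply that each entry lies in $Q$, and likewise for $J(R)$), the whole statement decouples into $n n'$ independent scalar statements.

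More concretely, I would proceed as follows. For each pair of indices $(i,j) \in \intset{n} \times \intset{n'}$, the entry $A_{ij}$ lies in $R$, so by \cref{prop:representation_on_valuation_ring} there exist unique $q_{ij} \in Q$ and $r_{ij} \in J(R)$ with $A_{ij} = q_{ij} + r_{ij}$. Define $A_0 \in Q^{n \times n'}$ by $(A_0)_{ij} \defeq q_{ij}$ and $\tilde{A} \in {J(R)}^{n \times n'}$ by $\tilde{A}_{ij} \defeq r_{ij}$. Then $A = A_0 + \tilde{A}$ by construction. For uniqueness, suppose $A = A_0' + \tilde{A}'$ is another such decomposition with $A_0' \in Q^{n \times n'}$ and $\tilde{A}' \in {J(R)}^{n \times n'}$. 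Comparing entries, $(A_0)_{ij} + \tilde{A}_{ij} = (A_0')_{ij} + \tilde{A}'_{ij}$ for each $(i,j)$, with $(A_0')_{ij} \in Q$ and $\tilde{A}'_{ij} \in J(R)$; the uniqueness part of \cref{prop:representation_on_valuation_ring} then forces $(A_0)_{ij} = (A_0')_{ij}$ and $\tilde{A}_{ij} = \tilde{A}'_{ij}$ for every $(i,j)$, whence $A_0 = A_0'$ and $\tilde{A} = \tilde{A}'$.

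There is essentially no obstacle here; the statement is a routine matrix-level restatement of the scalar decomposition, and no additional structure of $R$ or $Q$ is needed beyond what \cref{prop:representation_on_valuation_ring} already exploits. The only thing to verify is that the entrywise construction indeed produces matrices in the asserted sets, which is immediate from how those sets are defined.
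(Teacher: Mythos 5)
Your proof is correct and matches the paper's intent exactly: the paper states this proposition without proof, presenting it as the immediate entrywise extension of \cref{prop:representation_on_valuation_ring}, which is precisely the argument you give. Nothing further is needed.
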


\begin{proposition}\label{prop:dvsf_as_power_series_matrix}
  Let $F$ be a DVSF with discrete valuation $v$ and let $\pi$ and $Q$ be a uniformizer and a representative set of $F$, respectively.
  \begin{enumerate}
    \item For every $A \in F^{n \times n'}$, there uniquely exists a sequence $\prn{A_d}_{d \in \setZ}$ of $n \times n'$ matrices over $Q$ such that $A_d = O$ for all but finitely many $d < 0$ and
    \begin{align}\label{eq:power_of_pi_matrix}
      A = \sum_{d \in \setZ} A_d \pi^d
    \end{align}
    in the $\pi$-adic topology.
    If $l \defeq \zeta_1(A) \in \setZ$, then $A_d = O$ for $d < l$ and $A_l \ne 0$.

    \item If $F$ is complete and $\prn{A_d}_{d \in \setZ}$ is a sequence of elements in $Q$ such that $A_d = O$ for all but finitely many $d < 0$, the power series~\eqref{eq:power_of_pi} converges to an $n \times n'$ matrix $A$ over $F$.
  \end{enumerate}
\end{proposition}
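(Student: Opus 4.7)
The plan is to reduce both statements to the scalar case \cref{prop:dvsf_as_power_series} by working entrywise. Since $n, n'$ are finite, everything that is uniform over finite collections of entries (existence of a common lower bound on the degrees, convergence of a series) transfers from the scalar result with no extra machinery.

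For part (1), I would first apply \cref{prop:dvsf_as_power_series} to each entry $A_{ij}$ of $A$, obtaining unique sequences $\bigl(a^{(d)}_{ij}\bigr)_{d \in \setZ}$ in $Q$ such that $a^{(d)}_{ij} = 0$ for all but finitely many $d < 0$ and $A_{ij} = \sum_{d \in \setZ} a^{(d)}_{ij} \pi^d$ in the $\pi$-adic topology. Then I define $A_d \in Q^{n \times n'}$ by $(A_d)_{ij} \defeq a^{(d)}_{ij}$. Because the index set $\intset{n} \times \intset{n'}$ is finite, there are only finitely many $d < 0$ for which some $a^{(d)}_{ij}$ is nonzero, hence $A_d = O$ for all but finitely many $d < 0$. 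Uniqueness of the matrix-valued expansion follows directly from the entrywise uniqueness. The convergence $A = \sum_d A_d \pi^d$ is understood entrywise, which is the same as the convergence in the natural product topology on $F^{n \times n'}$ induced by the $\pi$-adic topology on $F$.

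For the statement concerning $l \defeq \zeta_1(A)$, I would use the identity $\zeta_1(A) = \min_{i,j} v(A_{ij})$, which is immediate from the definition~\eqref{def:zeta_k} applied to $1 \times 1$ submatrices. The scalar part of \cref{prop:dvsf_as_power_series} gives $a^{(d)}_{ij} = 0$ whenever $d < v(A_{ij})$; taking minima over $(i, j)$, this yields $A_d = O$ for every $d < l$. Moreover there exists some $(i_0, j_0)$ with $v(A_{i_0 j_0}) = l$, and the scalar result guarantees $a^{(l)}_{i_0 j_0} \ne 0$, so $A_l \ne O$.

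For part (2), assume $F$ is complete and let $(A_d)_{d \in \setZ}$ be any sequence in $Q^{n \times n'}$ with $A_d = O$ for all but finitely many $d < 0$. Entrywise, each scalar series $\sum_d (A_d)_{ij} \pi^d$ satisfies the hypothesis of the second part of \cref{prop:dvsf_as_power_series} and therefore converges in $F$; collecting the limits into a matrix $A \in F^{n \times n'}$ gives the desired convergent sum. There is no genuine obstacle here: the only thing to check is that the componentwise assembly respects the convergence notion, which is exactly the definition of the product topology. Noncommutativity causes no trouble because $\pi^d$ is pulled out on the right entry-by-entry, precisely as in the scalar statement.
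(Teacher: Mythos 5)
Your proof is correct and matches the paper's intent: the paper gives no separate argument, simply noting that the scalar results extend "naturally" to matrices, and the entrywise reduction you spell out (using finiteness of the index set and $\zeta_1(A) = \min_{i,j} v(A_{ij})$) is exactly that extension.
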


For a matrix $A$ over a DVR, the matrices $A_0$ in \cref{prop:representation_on_valuation_ring_matrix,prop:dvsf_as_power_series_matrix} are the same.

\subsection{Canonical Forms}\label{sec:canonical-forms}

Let $F$ be a valuation skew field with valuation ring $R$.
A matrix over $F$ is called \emph{proper} if its entries are in $R$.
A proper matrix $A \in F^{n \times n}$ is particularly called \emph{biproper} if it is nonsingular and its inverse is also proper, i.e., $A \in \GL_n(R)$.
The (right or left) multiplication by biproper matrices are called \emph{biproper transformations}.
We establish the \emph{Smith--McMillan form} of matrices over $F$, which is a canonical form under biproper transformations.
This is well-known for matrices over $\setC(s)$ as the \emph{Smith--McMillan form at infinity}~\cite{Murota2000,Verghese1981} in the context of control theory.

\begin{proposition}[{Smith--McMillan form}]\label{prop:smith_mcmillan}
  Let $F$ be a valuation skew field with valuation $v$ and valuation ring $R$.
  For $A \in F^{n \times n'}$ of rank $r$, there exist $S \in \GL_n(R)$, $T \in \GL_{n'}(R)$ and $d_1, \dotsc, d_r \in \mult{F}$ such that $v(d_1) \le \dotsb \le v(d_r)$ and
  \begin{align}\label{eq:smith_mcmillan}
    SAT = \begin{pmatrix}
      \diag\prn{d_1, \dotsc, d_r} & O \\
      O & O
    \end{pmatrix}.
  \end{align}
  In addition, the element $d_i$ for $i \in \intset{r}$ is unique up to multiplication by a unit of $R$ and its valuation satisfies
  \begin{align}\label{eq:smith_mcmillan_vd_i}
    v(d_i) = \zeta_i(A) - \zeta_{i-1}(A).
  \end{align}
\end{proposition}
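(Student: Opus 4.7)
The plan is to split the proof into three stages: existence of the form by a pivoting reduction, the valuation formula~\eqref{eq:smith_mcmillan_vd_i} via invariance of each $\zeta_k$ under biproper transformations, and uniqueness of each $d_i$ up to a unit as a corollary.

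For existence, I would induct on $\min\set{n, n'}$. If $A = O$ the claim is trivial; otherwise pick an entry $a_{ij}$ of $A$ of minimum valuation and move it to position $(1,1)$ via permutation matrices, which lie in $\GL_n(R)$ and $\GL_{n'}(R)$. By axiom~\ref{item:VR1}, for every other entry $a_{1k}$ of the first row we have $a_{11}^{-1} a_{1k} \in R$, so right-multiplying by $E_{n'}\bigl(1, k;\, -a_{11}^{-1} a_{1k}\bigr) \in \GL_{n'}(R)$ clears position $(1, k)$; the first column is cleared analogously by left elementary multiplications in $\GL_n(R)$. The outcome is a block-diagonal matrix with $a_{11}$ in the corner and some $(n-1) \times (n'-1)$ block $A''$ whose entries are $R$-linear combinations of entries of $A$, hence all have valuation at least $v(a_{11})$. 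Setting $d_1 \defeq a_{11}$ and recursing on $A''$ produces the remaining $d_2, \dotsc, d_r$ together with the required ordering $v(d_1) \le \dotsb \le v(d_r)$.

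For the valuation formula, the key lemma is that each $\zeta_k$ is invariant under biproper transformations: $\zeta_k(SAT) = \zeta_k(A)$ whenever $S \in \GL_n(R)$ and $T \in \GL_{n'}(R)$. Since the residue ring $R/J(R)$ is a skew field, every row of any $S \in \GL_n(R)$ contains an entry in $\mult{R}$, so a standard pivot-based reduction writes $S$ as a product of elementary matrices $E_n(i, j; e)$ with $e \in R$ together with a diagonal matrix whose entries lie in $\mult{R}$. Elementary left-multiplication acts on a $k \times k$ submatrix indexed by $(I, J)$ nontrivially only when $i \in I$ and $j \notin I$, in which case the $i$-th row becomes $A[i, J] + e \cdot A[j, J]$, and the determinantal-sum axiom~\ref{item:MV2} yields
\begin{align}
  \vdet\bigl((EA)[I, J]\bigr) \ge \min\bigl\{ \vdet(A[I, J]),\, v(e) + \vdet\bigl(A[(I \setminus \set{i}) \cup \set{j}, J]\bigr) \bigr\} \ge \zeta_k(A);
\end{align}
the diagonal-unit factor preserves each minor's $\vdet$ exactly. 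Applying the same estimate to $S^{-1}$ and symmetrically to $T$ gives $\zeta_k(SAT) = \zeta_k(A)$. A direct computation on the normal form shows that the only nonsingular $k \times k$ submatrices of $\diag(d_1, \dotsc, d_r) \oplus O$ are themselves diagonal of the form $\diag(d_{i_1}, \dotsc, d_{i_k})$, and under the ordering $v(d_1) \le \dotsb \le v(d_r)$ the minimum sum $\sum_j v(d_{i_j})$ is achieved by $\set{i_1, \dotsc, i_k} = \intset{k}$, giving $\zeta_k(SAT) = v(d_1) + \dotsb + v(d_k)$. Equating with $\zeta_k(A)$ and taking successive differences produces~\eqref{eq:smith_mcmillan_vd_i}, and uniqueness of each $d_i$ up to a unit of $R$ is immediate because any two elements of $\mult{F}$ of equal valuation differ by an element of $\mult{R}$.

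The main obstacle I anticipate is the invariance $\zeta_k(SAT) = \zeta_k(A)$: it is conceptually the right statement, but in the noncommutative setting one cannot reach it via a formal Cauchy--Binet expansion of minors of a product. The route above sidesteps this by reducing to the elementary and diagonal-unit cases, where axiom~\ref{item:MV2} applies directly; the generation lemma itself needs the (mild) additional input that $R/J(R)$ is a skew field so that each row of $S \in \GL_n(R)$ contains a unit, enabling the pivot-based decomposition of $S$.
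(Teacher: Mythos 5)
Your proof is correct, and the existence step is the same pivoting reduction as the paper's. Where you differ is in how the key invariance of $\zeta_k$ is established. The paper does not invoke any generation lemma for $\GL_n(R)$: it only proves that $\zeta_k$ is preserved by the \emph{specific} permutations and elementary matrices produced during its reduction, and for each single elementary operation it runs a case analysis showing not merely $\vdet(A'[I,J]) \ge \zeta_k(A)$ but that the minimum is still attained afterwards (by exhibiting a submatrix of the transformed matrix realizing $\zeta_k(A)$). You instead prove the stronger statement that $\zeta_k$ is invariant under \emph{arbitrary} biproper transformations; this costs you the lemma that $\GL_n(R)$ over a local ring is generated by elementary matrices with entries in $R$ together with diagonal matrices over $\mult{R}$ (true, since $R/J(R)$ is a skew field forces a unit entry in every row of an invertible matrix, but it deserves to be stated and proved rather than dismissed as a ``standard pivot-based reduction''). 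In exchange, the per-step work is lighter: the one-sided inequality $\zeta_k(EA)\ge\zeta_k(A)$, combined with the same inequality applied to $S^{-1}$, yields equality without tracking where the minimum is attained. Your formulation also makes the uniqueness claim cleaner, since comparing two different diagonalizations $SAT$ and $S'AT'$ really requires invariance of $\zeta_k$ under all biproper transformations, which you get directly, whereas the paper's invariance is phrased only for the transformations arising in its own construction. The computation of $\zeta_k$ on the diagonal form and the final step (equal valuations imply the $d_i$ differ by a unit of $R$) match the paper.
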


\begin{proof}
  We first construct the desired diagonalization.
  Suppose that $A \ne O$ and $d_1 \in \mult{F}$ is an entry in $A$ such that $v(d_1) = \zeta_1(A)$.
  Multiplying permutation matrices to $A$ from left and right, we move $d_1$ to the top-left entry.
  Note that permutation matrices are clearly biproper.
  Then we eliminate the first column of $A$ other than the top entry using $d_1$.
  This can be achieved by multiplying an elementary matrix $E_n\prn[\big]{1, i; a{d_1}^{-1}}$ to $A$ from left for $i = 2, \dotsc, n$, where $a$ is the $(i, 1)$st entry of $A$.
  Since $a{d_1}^{-1} \in R$ by $v(d_1) \le v(a)$, this elementary matrix is biproper.
  We similarly eliminate the first row of $A$ other than the left entry.
  Now $A$ is in the form $\begin{psmallmatrix} d_1 & 0 \\ 0 & B\end{psmallmatrix}$ with $B \in F^{(n-1) \times (n'-1)}$.
  Iteratively applying the same operation for $B$ as long as $B \ne O$, we obtain the decomposition~\eqref{eq:smith_mcmillan}.
  Note that $\zeta_1(A) \le \zeta_1(B)$ by~\ref{item:V1} and~\ref{item:V2} and hence $v(d_1) \le \dotsb \le v(d_r)$.

  We next show the uniqueness part.
  Since units of $R$ has valuation $0$, the formula~\eqref{eq:smith_mcmillan_vd_i} implies the uniqueness of $v(d_1), \dotsc, v(d_r)$.
  Let $D$ be the diagonal matrix constructed above.
  By the ordering of $d_1, \dotsc, d_r$, it holds $v(d_i) = \zeta_i(D) - \zeta_{i-1}(D)$.
  Therefore, it suffices to show that $\zeta_k(A)$ is invariant throughout the above procedure for $k \in \intset{0, r}$.
  It is clear that $\zeta_k(A)$ does not change by row and column permutations.
  Consider multiplying an elementary matrix $E_n(i, j; a)$ to $A$ from left, where $i, j \in \intset{n}$ with $i \ne j$ and $a \in R$.
  This corresponds to the operation of adding the $i$th row multiplied by $a$ to the $j$th row.
  Put $A' \defeq E_n(i, j; e)A$ and consider a submatrix with rows $I \subseteq \intset{n}$ and columns $J \subseteq intset{n'}$ of cardinality $k$.
  If $j \notin I$, then $A'[I, J] = A[I, J]$.
  If $i,j \in I$, then $A[I, J] = EA[I, J]$ for some elementary matrix $E$ of order $k$, which means $\vdet\prn{A'[I, J]} = \vdet\prn{A[I, J]}$ by~\ref{item:VD1} and~\ref{item:VD2}.
  In the remaining case, i.e., $i \notin I \ni j$, we have
  \begin{align}
    A'[I, J] = A[I, J] \detsum (FA[I', J]),
  \end{align}
  where $I' \defeq (I \cup \set{i}) \setminus \set{j}$ and $C \in F^{n \times n}$ is the diagonal matrix having $a$ for the $i$th diagonal entry and $1$ for other diagonals.
  By~\ref{item:MV2}, it holds
  \begin{align}
    \vdet\prn{A'[I, J]}
    &\ge \min\set{\vdet\prn{A[I, J]}, \vdet\prn{CA[I', J]}}\label{eq:smith_mcmillan_proof_ineq}\\
    &= \min \set{\vdet\prn{A[I, J]}, \vdet\prn{A[I', J]} + v(a)}.
  \end{align}
  Since $a \in R$, we have $\vdet\prn{A'[I, J]} \ge \zeta_k(A)$.
  Suppose $\zeta_k(A) = \vdet\prn{A[I, J]}$.
  If $\zeta_k(A) > \vdet\prn{A[I', J]} + v(a)$, the equality of~\eqref{eq:smith_mcmillan_proof_ineq} is attained.
  If $\zeta_k(A) = \vdet\prn{A[I', J]} + v(a)$, then $\zeta_k(A) = \vdet\prn{A[I', J]}$ by $v(a) \ge 0$ and $\vdet\prn{A[I', J]} \ge \zeta_k(A)$.
  In addition, we have $\vdet\prn{A'[I', J]} = \vdet\prn{A[I', J]}$ from $j \notin I'$, which means $\vdet\prn{A'[I', J]} = \zeta_k(A)$.
  Hence we have $\zeta_k(A') = \zeta_k(A)$ in all cases.
  The proof of the right multiplication of elementary matrices is the same.
\end{proof}

Solving~\eqref{eq:smith_mcmillan_vd_i} for $\zeta_k(A)$, we have
\begin{align}\label{eq:zeta_k_as_sum_of_v}
  \zeta_k(A) = \sum_{i=1}^k v(d_i)
\end{align}
for $k \in \intset{0, \rank A}$.
It is worth mentioning that $v(d_i) \ge 0$ for any $A \in R^{n \times n'}$ and $i \in \intset{\rank A}$ since $v(d_1) = \zeta_1(A) \ge 0$.

If $A$ is a matrix over a DVSF $F$, diagonal entries of the Smith--McMillan form of $A$ can be taken as powers of a uniformizer of $F$ as follows.

\begin{proposition}[{Smith--McMillan form for DVSFs}]\label{prop:dvsf_smith_mcmillan}
  Let $F$ be a DVSF with valuation ring $R$ and uniformizer $\pi$.
  For $A \in F^{n \times n'}$ of rank $r$, there exist $S \in \GL_n(R)$, $T \in \GL_{n'}(R)$, and unique $\alpha = \prn{\alpha_i}_{i \in \intset{r}} \in \setZ^r$ such that $\alpha_1 \le \dotsb \le \alpha_r$ and
  \begin{align}\label{eq:dvsf_smith_mcmillan}
    SAT = \begin{pmatrix}
      D(\pi^\alpha) & O \\
      O & O
    \end{pmatrix}.
  \end{align}
  For $i \in \intset{r}$, the integer $\alpha_i$ is determined by
  \begin{align}\label{eq:dvsf_smith_mcmillan_vd_i}
    \alpha_i = \zeta_i(A) - \zeta_{i-1}(A).
  \end{align}
\end{proposition}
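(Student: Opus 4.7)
The plan is to derive this proposition directly from the general Smith--McMillan form (Proposition \ref{prop:smith_mcmillan}) by using the additional structure of a DVSF, namely that every nonzero element is a unit times a power of the uniformizer.

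First I would apply Proposition \ref{prop:smith_mcmillan} to $A$, obtaining $S \in \GL_n(R)$, $T \in \GL_{n'}(R)$, and $d_1, \dotsc, d_r \in \mult{F}$ with $v(d_1) \le \dotsb \le v(d_r)$ such that $SAT$ has the diagonal block form with diagonal $\diag(d_1, \dotsc, d_r)$. Setting $\alpha_i \defeq v(d_i) \in \setZ$, the ordering $\alpha_1 \le \dotsb \le \alpha_r$ is immediate, and~\eqref{eq:smith_mcmillan_vd_i} together with~\eqref{eq:zeta_k_as_sum_of_v} gives $\alpha_i = \zeta_i(A) - \zeta_{i-1}(A)$, which both fixes the integers $\alpha_i$ uniquely and proves~\eqref{eq:dvsf_smith_mcmillan_vd_i}.

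Second I would replace each $d_i$ by $\pi^{\alpha_i}$. By~\eqref{eq:any_ideal_of_DVSF} we have $d_i \in \pi^{\alpha_i} R = R \pi^{\alpha_i}$, so there exists $u_i \in \mult{R}$ with $d_i = u_i \pi^{\alpha_i}$; since $v(u_i) = 0$, $u_i$ is a unit of $R$ and $u_i^{-1} \in \mult{R}$. Hence the block-diagonal matrix
\begin{align}
U \defeq \begin{pmatrix} \diag\prn{u_1^{-1}, \dotsc, u_r^{-1}} & O \\ O & I_{n-r} \end{pmatrix}
\end{align}
lies in $\GL_n(R)$, and $U \cdot SAT$ has the required form~\eqref{eq:dvsf_smith_mcmillan} with $S$ replaced by $US$, which still belongs to $\GL_n(R)$.

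There is essentially no hard step here; the only subtlety is the noncommutative factorization $d_i = u_i \pi^{\alpha_i}$, which relies on property~\ref{item:DVR1} (equivalently,~\eqref{eq:any_ideal_of_DVSF}) ensuring that the two-sided ideal generated by $\pi^{\alpha_i}$ coincides with both $\pi^{\alpha_i} R$ and $R \pi^{\alpha_i}$. Once this is invoked, the uniqueness of $\alpha$ is inherited from the uniqueness of the $v(d_i)$ in Proposition \ref{prop:smith_mcmillan}.
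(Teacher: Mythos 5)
Your proposal is correct and follows essentially the same route as the paper: apply \cref{prop:smith_mcmillan}, set $\alpha_i \defeq v(d_i)$, and left-multiply by a unit diagonal matrix to turn each $d_i$ into $\pi^{\alpha_i}$ (the paper's matrix $W$ with entries $\pi^{\alpha_i}d_i^{-1}$ is exactly your $\diag(u_1^{-1},\dotsc,u_r^{-1})$ under the factorization $d_i = u_i\pi^{\alpha_i}$). The only cosmetic caveat is that~\eqref{eq:any_ideal_of_DVSF} is stated for $d \in \setN$ while $\alpha_i$ may be negative; the needed factorization for arbitrary $\alpha_i \in \setZ$ is exactly the one recorded in \cref{rem:DVR}.
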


\begin{proof}
  Let $D = S'AT$ be the Smith--McMillan form of $A$ given in \cref{prop:smith_mcmillan}.
  For $i \in \intset{r}$, we define $\alpha_i$ as the valuation of the $i$th diagonal entry $d_i$ of $D$.
  Then~\eqref{eq:dvsf_smith_mcmillan_vd_i} follows from~\eqref{eq:smith_mcmillan_vd_i}.
  Define a biproper matrix
  \begin{align}
    W \defeq \begin{pmatrix}
      \diag\prn[\big]{\pi^{\alpha_1}{d_1}^{-1}, \dotsc, \pi^{\alpha_r}{d_r}^{-1}} & O \\
      O & I_{n-r}
    \end{pmatrix} \in \GL_n(R).
  \end{align}
  Then $WD = WS'AT = UAV$ with $S \defeq WS'$ is equal to the right hand side of~\eqref{eq:dvsf_smith_mcmillan}, as required.
\end{proof}

The equation~\eqref{eq:zeta_k_as_sum_of_v} is rewritten as
\begin{align}\label{eq:zeta-as-sum-of-alpha}
  \zeta_k(A) = \sum_{i=1}^k \alpha_i
\end{align}
for $k \in \intset{0, \rank A}$.
This equation plays an important role in \cref{sec:matrix_expansion}.

We present two propositions for matrices over $R$ which are obtained as corollaries of the Smith--McMillan form.
The first one claims that $\zeta_k(A)$ is nonnegative for any proper matrix $A \in R^{n \times n'}$.

\begin{proposition}\label{prop:nonnegativity_of_zeta_k}
  Let $R$ be the valuation ring of a valuation skew field.
  For $A \in R^{n \times n'}$ and $k \in \intset{0, \min\set{n, n'}}$, it holds $\zeta_k(A) \ge 0$.
\end{proposition}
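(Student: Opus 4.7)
The plan is to deduce this directly from the Smith--McMillan form established in \cref{prop:smith_mcmillan}. Apply that proposition to $A \in R^{n \times n'}$: there exist $d_1, \dotsc, d_r \in \mult{F}$ with $r = \rank A$ such that $v(d_1) \le \dotsb \le v(d_r)$, together with the identity~\eqref{eq:smith_mcmillan_vd_i}, namely $v(d_i) = \zeta_i(A) - \zeta_{i-1}(A)$, and consequently the summation formula~\eqref{eq:zeta_k_as_sum_of_v}, $\zeta_k(A) = \sum_{i=1}^k v(d_i)$ for $k \in \intset{0, r}$.

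The crucial observation is that $v(d_1) = \zeta_1(A)$, which by definition of $\zeta_1$ is the minimum valuation of an entry of $A$. Since $A$ has all entries in $R$, every entry has valuation at least $0$, hence $\zeta_1(A) \ge 0$ and therefore $v(d_1) \ge 0$. The monotonicity $v(d_1) \le \dotsb \le v(d_r)$ then forces $v(d_i) \ge 0$ for every $i \in \intset{r}$, so by~\eqref{eq:zeta_k_as_sum_of_v} we obtain $\zeta_k(A) \ge 0$ for all $k \in \intset{0, r}$. For $k > r = \rank A$, the definition~\eqref{def:zeta_k} together with \cref{prop:max_nonsingular_submatrix} gives $\zeta_k(A) = +\infty \ge 0$, covering the remaining range.

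No genuine obstacle arises in this proof; the statement is really a corollary of the Smith--McMillan decomposition, where the only content beyond invoking that proposition is the elementary remark that a matrix with entries in $R$ has all entries of nonnegative valuation, together with the ordering of the diagonal entries in the Smith--McMillan form.
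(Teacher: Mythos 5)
Your proof is correct and follows exactly the paper's own route: the case $k > \rank A$ is dispatched by $\zeta_k(A) = +\infty$, and for $k \le \rank A$ one combines~\eqref{eq:zeta_k_as_sum_of_v} with the observation that $v(d_1) = \zeta_1(A) \ge 0$ for a proper matrix and the monotonicity of the $v(d_i)$. Nothing to add.
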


\begin{proof}
  If $k > r$ with $r \defeq \rank A$, we have $\zeta_k(A) = +\infty > 0$.
  If $k \le r$, the claim holds from~\eqref{eq:zeta_k_as_sum_of_v} and $v(d_1), \dotsc, v(d_r) \ge 0$.
\end{proof}

The second proposition is a characterization of biproper matrices.

\begin{proposition}\label{prop:biproper_equivalence}
  Let $F$ be a valuation skew field with valuation ring $R$, residue skew field $K$, and representative set $Q$, and let $\funcdoms{\phi}{R}{K}$ be the natural homomorphism.
  Also, let $A \in R^{n \times n}$ be a square proper matrix and $A_0 \in Q^{n \times n}$ the matrix in \cref{prop:representation_on_valuation_ring_matrix} with respect to $A$.
  Then the following are equivalent:
  \begin{enumerate}
    \item $A$ is biproper.\label{prop:biproper_equivalence_1}
    \item $\vdet(A) = 0$.\label{prop:biproper_equivalence_2}
    \item $\phi(A_0)$ is nonsingular.\label{prop:biproper_equivalence_3}
  \end{enumerate}
\end{proposition}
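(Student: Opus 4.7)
The plan is to combine the Smith--McMillan form (the immediately preceding proposition) with the nonnegativity of $\zeta_n$ on proper matrices (the preceding proposition). I would first establish (1) $\Rightarrow$ (2) directly, and then derive (2) $\Leftrightarrow$ (1) and (2) $\Leftrightarrow$ (3) by inspecting the Smith--McMillan decomposition of $A$ through the residue homomorphism $\phi$.

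For (1) $\Rightarrow$ (2): if $A$ is biproper, then both $A$ and $A^{-1}$ are proper, so by \ref{item:VD1} and $\vdet(I_n) = 0$ we have $\vdet(A) + \vdet(A^{-1}) = 0$, while nonnegativity of $\zeta_n$ on proper matrices forces both summands to vanish. For the remaining directions I would fix a Smith--McMillan decomposition $SAT = D$ with $S, T \in \GL_n(R)$ and $D$ the block matrix whose top-left $r \times r$ block is $\diag(d_1, \dotsc, d_r)$ (with $r = \rank A$) and whose other blocks vanish; since $D = SAT$ is proper, each $v(d_i) \geq 0$. Because biproper matrices have valuation $0$ (by (1) $\Rightarrow$ (2) applied to $S^{-1}, T^{-1}$), \ref{item:VD1} gives $\vdet(A) = \vdet(D)$, which equals $\sum_i v(d_i)$ if $r = n$ and $+\infty$ otherwise.

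With this setup, (2) $\Rightarrow$ (1) follows because $\vdet(A) = 0$ forces $r = n$ and every $v(d_i) = 0$ (using $v(d_i) \geq 0$), so each $d_i$ is a unit of $R$, making $D$ and hence $A = S^{-1}DT^{-1}$ biproper. For (2) $\Leftrightarrow$ (3), I would extend $\phi$ entrywise to a ring homomorphism $R^{n \times n} \to K^{n \times n}$, which sends $I_n$ to $I_n$ and hence $\GL_n(R)$ into $\GL_n(K)$; applying it to $SAT = D$ and using $\phi(A) = \phi(A_0)$ (since $\tilde A \in J(R)^{n \times n} = \ker \phi$) yields $\phi(S)\,\phi(A_0)\,\phi(T) = \phi(D)$. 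Since $\phi(d_i) \ne 0$ iff $d_i \notin J(R)$ iff $v(d_i) = 0$, the matrix $\phi(D)$ is nonsingular iff $r = n$ and all $v(d_i) = 0$, i.e., iff $\vdet(A) = 0$; and this is equivalent to nonsingularity of $\phi(A_0)$ because $\phi(S)$ and $\phi(T)$ are invertible.

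The proof is essentially bookkeeping once the Smith--McMillan form is in hand, so I do not foresee a real obstacle. The only subtlety worth stating is that the entrywise extension of $\phi$ is a unital ring homomorphism on matrix rings, which is what guarantees that $\phi$ preserves invertibility and enables the translation between biproperness of $A$ and nonsingularity of $\phi(A_0)$.
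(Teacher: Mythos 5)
Your proposal is correct and follows essentially the same route as the paper: reduce to the Smith--McMillan form $SAT = D$, use nonnegativity of the $v(d_i)$ to equate biproperness with $\vdet(A)=0$, and pass through the entrywise residue homomorphism (with $\phi(A)=\phi(A_0)$ and invertibility of $\phi(S),\phi(T)$) to get the equivalence with nonsingularity of $\phi(A_0)$. Your extra direct argument for (1) $\Rightarrow$ (2) via $\vdet(A)+\vdet(A^{-1})=0$ is a harmless reorganization, and your explicit handling of the singular case ($r<n$) is if anything slightly more careful than the paper's.
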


\begin{proof}
  Let $SAT = D \defeq \diag(d_1, \dotsc, d_n)$ be the Smith--McMillan form of $A$.
  Since $S$ and $T$ are biproper, $A$ is biproper if and only if so is $D$.
  This is equivalent to $v(d_i) = 0$ for all $i \in \intset{n}$, where $v$ is the valuation of $F$.
  Since $v(d_i)$ is nonnegative for $i \in \intset{n}$, this condition is further equivalent to $\zeta(A) = \sum_{i=1}^n v(d_i) = 0$, where the first equality is from~\eqref{eq:zeta_k_as_sum_of_v}.
  Thus~\ref{prop:biproper_equivalence_1} and~\ref{prop:biproper_equivalence_2} are equivalent.

  We next consider~\ref{prop:biproper_equivalence_3}.
  Let $D_0 \in Q^{n \times n}$ be the matrix obtained from $D$ by \cref{prop:representation_on_valuation_ring_matrix}.
  By the above argument, $A$ is biproper if and only if $v(d_i) = 0$ for every $i \in \intset{n}$.
  This is equivalent to the nonsingularity of $\phi(D)$ because for $i \in \intset{n}$, the $i$th diagonal of $\phi(D)$ is nonzero if and only if $v(d_i) = 0$.
  Applying $\phi$ to $D = SAT$ and $A = S^{-1}DT^{-1}$, we obtain $\phi(D) = \phi(S)\phi(A)\phi(T)$ and $\phi(A) = \phi\prn{S^{-1}}\phi(D)\phi\prn{T^{-1}}$.
  These imply $\rank \phi(D) = \rank \phi(A)$.
  In addition, it holds $\phi(A) = \phi(A_0)$ and $\phi(D) = \phi(D_0)$ from $A - A_0, D - D_0 \in {J(R)}^{n \times n}$.
  Thus all the statements in \cref{prop:biproper_equivalence} are equivalent.
\end{proof}

Finally, we introduce the \emph{Jacobson normal form} for matrices over PIDs.
As stated in \cref{sec:valuation-skew-fields}, any DVR is a PID.\@
For a commutative PID $R$, the \emph{Smith normal form} is a celebrated canonical form of matrices over $R$ under transformations by $\GL_n(R)$.
The \emph{Jacobson normal form}~\cite{Jacobson1943} is its generalization to general noncommutative PIDs.
It can also be seen as a generalization of the Smith--McMillan form over DVRs.
Recall from~\cite{Cohn2003,Jacobson1943} that a nonzero element $c$ of a domain $R$ is said to be \emph{invariant} if $cR = Rc$ and $a \in R \setminus \set{0}$ is called a \emph{total divisor} of $b \in R \setminus \set{0}$ if there exists invariant $c \in R$ such that $bR \subseteq cR \subseteq aR$.

\begin{proposition}[{Jacobson normal form~\cite[Theorem~16 in Chapter~3]{Jacobson1943}; see~\cite[Theorem~7.2.1]{Cohn2003}}]\label{prop:jacobson}
  Let $A \in R^{n \times m}$ be a matrix of rank $r$ over a PID $R$\footnote{%
    As explained in \cref{sec:valuation-skew-fields}, any PID is an Ore domain, i.e., $R$ can be extended to a skew field $F$.
    Thus the rank of $A$ can be defined as that of a matrix over $F$.
  }.
  There exist $U \in \GL_n(R)$, $V \in \GL_m(R)$ and $e_1, \dotsc, e_r \in R \setminus \set{0}$ such that $e_i$ is a total divisor of $e_{i+1}$ for $i \in \intset{r-1}$ and
  \begin{align}
    UAV = \begin{pmatrix}
      \diag\prn{e_1, \dotsc, e_r} & O \\
      O & O
    \end{pmatrix}.
  \end{align}
\end{proposition}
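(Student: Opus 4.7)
The plan is to proceed in two stages: first, reduce $A$ to a diagonal matrix by $\GL_n(R) \times \GL_m(R)$-equivalence, and second, enforce the total divisor chain by $2 \times 2$ block manipulations along the diagonal.

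For the diagonalization stage, I would imitate the Euclidean-style reduction that underlies the commutative Smith normal form, exploiting that in a PID every right ideal $aR + bR$ and every left ideal $Ra + Rb$ is principal. Given the first column $(a_1, \dotsc, a_n)^T$ of $A$, one can construct $U \in \GL_n(R)$ whose action turns it into $(d, 0, \dotsc, 0)^T$ with $Rd = \sum_i Ra_i$; the existence of such $U$ follows from the PID hypothesis, since any unimodular column over a PID extends to an invertible matrix. An analogous right multiplication by an element of $\GL_m(R)$ clears the first row, possibly repopulating the first column. Each such sweep, however, strictly shrinks either the right or the left principal ideal generated by the $(1,1)$-entry, and the ascending chain condition on principal one-sided ideals in a PID forces termination. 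The process leaves a matrix of the form $\diag(e_1) \oplus A'$, and induction on the rank of $A'$ yields a diagonal form $\diag(e_1, \dotsc, e_r, 0, \dotsc, 0)$ with $r = \rank A$.

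For the second stage, I would operate on adjacent $2 \times 2$ diagonal blocks $\diag(a, b)$ and show that a $\GL_2(R)$-equivalence can replace such a block by $\diag(a', b')$ with $a'$ a total divisor of $b'$. Concretely, $a'$ would be chosen as a generator of the largest invariant principal ideal containing $aR + bR$, and $b'$ as a generator of $aR \cap bR$ adjusted to lie in $a'R$; the PID hypothesis ensures such invariant bridging elements exist via the structure theory of bounded elements in a noncommutative PID. A bubble-sort pass over adjacent diagonal pairs then produces the required chain $e_1 \mid e_2 \mid \dotsb \mid e_r$ in the total divisor partial order, with termination guaranteed because each successful $2 \times 2$ reduction strictly enlarges the invariant ideal on the left slot or strictly shrinks it on the right slot.

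The main obstacle will be the second stage. In the commutative Smith normal form, the divisibility chain falls out almost automatically from iterated $\gcd$--$\mathrm{lcm}$ manipulations, but in the noncommutative setting total divisibility is a genuinely two-sided notion requiring an invariant element to mediate between $aR$ and $bR$. This is the properly noncommutative content of Jacobson's theorem, and a rigorous proof needs to invoke the classification of bounded elements in a noncommutative PID to guarantee that the mediating invariant ideals actually exist. The first stage, by contrast, is essentially a translation of the Euclidean reduction once one has the right-invertible completion lemma for unimodular columns.
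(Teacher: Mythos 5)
The paper does not prove this proposition at all --- it is imported verbatim from Jacobson and Cohn with citations --- so your sketch can only be judged on its own terms. Your first stage is essentially the standard diagonalization over a two-sided PID and is sound in outline: the completion of a unimodular column to an invertible matrix follows because submodules of free modules over a one-sided PID are free, so stably free modules are free. The one loose point there is termination: alternating ``the left ideal of the $(1,1)$-entry grows, then the right ideal grows'' is not by itself an ACC argument, since a left-multiplication step can undo progress measured by right ideals; the standard fix is to measure progress by the length of the $(1,1)$-entry (its number of atomic factors, well defined in a noncommutative PID), which strictly drops at every nontrivial sweep.

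The second stage, however, contains a genuine error rather than a gap. You propose to take $a'$ as a generator of the largest \emph{invariant} principal ideal containing $aR+bR$ and $b'$ as a generator of $aR\cap bR$. But the diagonal entries $e_i$ of the Jacobson form are not invariant elements --- only the mediating element $c$ in the definition of total divisor is --- and in the rings this paper actually cares about there may be no nonunit invariant elements at all: $\setC(t)[\partial;\id,{}']$ is a simple PID, so its only invariant elements are units, your ``largest invariant ideal containing $aR+bR$'' is all of $R$, and your recipe collapses to $\diag(a,b)\sim\diag(1,\mathrm{lclm})$, which is false on module-theoretic grounds ($R/Ra\oplus R/Rb$ has length $\ell(a)+\ell(b)$, while $R/R\,\mathrm{lclm}(a,b)$ has length $\ell(a)+\ell(b)-\ell(\mathrm{gcrd}(a,b))$). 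The commutative $\diag(a,b)\sim\diag(\gcd,\mathrm{lcm})$ trick does not transplant. The actual argument is a descent: among all diagonal pairs $\diag(a_1,b_1)$ equivalent to $\diag(a,b)$, pick one minimizing $\ell(a_1)$; if $a_1$ fails to be a total divisor of $b_1$, the failure produces some $c$ with $cb_1\notin Ra_1$ (or $b_1c\notin a_1R$), and adding $c$ times one row to the other gives $\begin{psmallmatrix} a_1 & 0 \\ cb_1 & b_1\end{psmallmatrix}$, whose re-diagonalization strictly decreases $\ell(a_1)$ --- contradicting minimality. No classification of bounded elements is needed, and indeed cannot be relied upon, since unbounded elements exist in simple PIDs.
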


We can also prove \cref{prop:dvsf_smith_mcmillan} by using \cref{prop:jacobson}.
Namely, the Smith--McMillan form over a DVR $R$ can also be seen as a variant of the Jacobson normal form over $R$ regarded as a PID\@.

\section{Combinatorial Aspects of Valuations and Matrices}\label{sec:combinatorial-aspects-of-valuations-and-matrices}
\subsection{Bipartite Matchings and Matrix Ranks}\label{sec:bipartite-matching}
Let $G = (V, E)$ be an undirected graph.
A \emph{matching} of $G$ is an edge subset $M \subseteq E$ such that no two distinct edges in $M$ share the same end.
A matching $M$ is said to be \emph{perfect} if every vertex of $G$ is covered by some edge in $G$.
The \emph{matching problem} on $G$ is to find a maximum-cardinality matching of $M$.
An undirected graph is called \emph{bipartite} if there exists a bipartition of vertices such that every edge is between different parts in the bipartition.
The \emph{Kőnig--Egerváry theorem} is a min-max theorem for the bipartite matching problem.
To describe it, we shall define a \emph{vertex cover} of a graph $G$ as a vertex subset that includes at least one end of every edge of $G$.

\begin{theorem}[{Kőnig--Egerváry theorem~\cite{Konig1931}; see~\cite[Theorem~16.2]{Schrijver2003}}]\label{thm:konig}
  The maximum size of a matching in a bipartite graph $G$ is equal to the minimum size of a vertex cover of $G$.
\end{theorem}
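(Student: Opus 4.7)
The weak direction is immediate: if $M$ is a matching and $C$ is a vertex cover, then every edge of $M$ must be covered by at least one vertex of $C$, and since the edges of $M$ are pairwise vertex-disjoint, these covering vertices are distinct, giving $\card{M} \le \card{C}$. Hence the maximum matching size is at most the minimum vertex cover size. The content of the theorem is the reverse inequality.

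For the nontrivial direction, the plan is to start from a maximum matching $M$ and exhibit an explicit vertex cover of size $\card{M}$. Let $G = (L \sqcup R, E)$ be the bipartition and let $U \subseteq L$ be the set of $L$-vertices not covered by $M$. Consider the set $Z$ of all vertices reachable from $U$ via $M$-alternating paths (alternating between edges not in $M$ and edges in $M$, starting with an edge not in $M$ from $U$). I would then take
\begin{align}
  C \defeq (L \setminus Z) \cup (R \cap Z)
\end{align}
as the candidate vertex cover and verify its two required properties.

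First, $C$ covers every edge: given $\set{u,v} \in E$ with $u \in L$ and $v \in R$, if $u \in Z$ then $v$ is reached along an $M$-alternating path from $U$ (either because $v$ is already in $Z$, or the path to $u$ extends along the edge $\set{u,v} \notin M$), so $v \in R \cap Z$; if $u \notin Z$, then $u \in L \setminus Z$ covers the edge. Second, $\card{C} = \card{M}$: every vertex in $L \setminus Z$ must be $M$-saturated (since unsaturated $L$-vertices lie in $U \subseteq Z$); every vertex in $R \cap Z$ must also be $M$-saturated, for otherwise an alternating path from $U$ to an unsaturated $R$-vertex would be $M$-augmenting, contradicting the maximality of $M$. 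Moreover, the matching partner of any $v \in R \cap Z$ lies in $Z$ (the alternating path extends through the matching edge), so it lies in $L \cap Z$ and is therefore not in $L \setminus Z$. This makes $M$ a perfect matching between $L \setminus Z$ and the $L$-partners of vertices in $R \cap Z$ versus $R \cap Z$ themselves: each $M$-edge has exactly one endpoint in $C$, giving $\card{C} = \card{M}$.

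The main obstacle is the augmenting-path argument that guarantees $R \cap Z$ contains only $M$-saturated vertices; this is where the maximality of $M$ is crucially used, and it also ensures that the matching edges set up a bijection proving $\card{C} = \card{M}$. Once these two verifications are in place, combining both inequalities yields the equality in the theorem.
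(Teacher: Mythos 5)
Your proof is correct. The paper does not prove this statement at all --- it quotes the K\H{o}nig--Egerv\'ary theorem as a classical result with a citation to Schrijver --- so there is no in-paper argument to compare against; what you give is the standard alternating-path proof (take a maximum matching $M$, let $Z$ be the set of vertices reachable from the unsaturated left vertices by $M$-alternating paths, and check that $(L \setminus Z) \cup (R \cap Z)$ is a cover of size $\lvert M\rvert$), and both halves of your verification, including the augmenting-path step that shows every vertex of $R \cap Z$ is saturated, are sound.
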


Bipartite matching and ranks of matrices are closely related.
Let $A = \prn{A_{i,j}} \in F^{n \times n'}$ be a matrix over a skew field $F$.
We associate to $A$ a bipartite graph $G(A)$ with vertex set $\intset{n} \sqcup \intset{n'}$ and edge set
\begin{align}
  E(A) \defeq \set*{(i, j)}[$i \in \intset{n}$, $j \in \intset{n'}$, $A_{i,j} \ne 0$].
\end{align}
The \emph{term-rank} of $A$, introduced by Ore~\cite{Ore1955}, is the maximum size of a matching in $G(A)$.
We denote the term-rank of $A$ by $\trank A$.
By \cref{thm:konig}, $\trank A$ is equal to the optimal value of the following problem:
\Minimize{
  n + n' - s - t
}{
  \text{$A$ has a zero block of size $s \times t$,} \\
  s \in \intset{0, n}, t \in \intset{0, n'}.
}

Indeed, $\trank A$ serves as a combinatorial upper bound on $\rank A$ as we well see below.
When $F$ is a field, it immediately follows from the definition of the determinant.

\begin{proposition}\label{prop:upper_bound_on_rank}
  Let $A \in F^{n \times n'}$ be a matrix over a skew field $F$.
  Then it holds $\rank A \le \trank A$.
\end{proposition}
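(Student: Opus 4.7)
The plan is to derive the bound from the Kőnig--Egerváry theorem (\cref{thm:konig}) by translating a small vertex cover of $G(A)$ into a block decomposition of $A$ with a zero block, and then estimating $\rank A$ via a right $F$-span argument. This is exactly the standard commutative argument; the only care needed is to work consistently with right $F$-vector spaces, but the properties listed in \cref{sec:matrices-over-skew-fields} (dimension behaves subadditively on sums, a span of $k$ vectors has dimension at most $k$) are already enough to carry it through.

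Concretely, set $r \defeq \trank A$. By \cref{thm:konig} applied to $G(A)$, there exists a vertex cover $C$ of $G(A)$ with $\card{C} = r$. Writing $C = I \sqcup J$ with $I \subseteq \intset{n}$ and $J \subseteq \intset{n'}$ so that $\card{I} + \card{J} = r$, the defining property of a vertex cover gives $A_{i,j} = 0$ whenever $i \in \intset{n} \setminus I$ and $j \in \intset{n'} \setminus J$. After permuting rows and columns (which preserves the rank), $A$ takes the block form
\begin{align}
  A = \begin{pmatrix} A_{11} & A_{12} \\ A_{21} & O \end{pmatrix},
\end{align}
where the bottom-right zero block has size $(n - \card{I}) \times (n' - \card{J})$.

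Now I bound the dimension of the right $F$-subspace $V$ of $F^n$ spanned by the columns of $A$. Split the columns as those indexed by $J$ and those indexed by $\intset{n'} \setminus J$. The right span of the former consists of at most $\card{J}$ column vectors, hence has dimension at most $\card{J}$. Each column indexed by $j \in \intset{n'} \setminus J$ has zero entries in all rows outside $I$, so it lies in the subspace $F^I \defeq \set{v \in F^n}[v_k = 0 \text{ for } k \in \intset{n} \setminus I]$, which has dimension $\card{I}$. Subadditivity of dimension for sums of subspaces then gives $\rank A = \dim V \le \card{I} + \card{J} = r = \trank A$.

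I do not expect any real obstacle: the only subtlety is making sure the span/dimension estimates are stated for right $F$-vector spaces (since the column space of $A$ is naturally a right $F$-module), but this matches the convention used to define $\rank$ in \cref{sec:matrices-over-skew-fields} and no commutativity is needed in the inequality.
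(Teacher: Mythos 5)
Your proof is correct and follows essentially the same route as the paper's: both translate the term-rank, via the K\H{o}nig--Egerv\'ary theorem, into a block form of $A$ with a zero block and then bound $\rank A$ by $\card{I}+\card{J} = \trank A$. The only difference is the concluding step --- the paper exhibits an explicit factorization $A = BC$ with inner dimension $\trank A$ and invokes the characterization of rank as the minimal inner dimension of a factorization, whereas you bound the dimension of the right column space directly by subadditivity of dimension; both arguments are valid over a skew field.
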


\begin{proof}
  Permuting rows and columns of $A$, we assume that $A$ is in form of $A = \begin{psmallmatrix} X & Y\\ Z & O \end{psmallmatrix}$, where $O$ is the zero matrix of size $s \times t$ and $\trank A = n + n' - s - t$.
  Then we can decompose $A$ as
  \begin{align}\label{eq:trank_proof}
    A = \begin{pmatrix}
      X & Y \\
      Z & O
    \end{pmatrix}
    =
    \begin{pmatrix}
      X & I_{n'-t} \\
      Z & O
    \end{pmatrix}
    \begin{pmatrix}
      I_{n-s} & O \\
      O & Y
    \end{pmatrix}.
  \end{align}
  The size of matrices in the right hand side of~\eqref{eq:trank_proof} is $n \times p$ and $p \times n'$ with $p \defeq \trank A$.
  Hence $\rank A \le \trank A$ by the characterization of $\rank A$ (see \cref{sec:matrices-over-skew-fields}).
\end{proof}

\subsection{Weighted Bipartite Matchings and Valuations of Determinants}\label{sec:weighted-bipartite-matching}

We next consider the weighted bipartite matching problem, which is also called the \emph{assignment problem}.
Let $G = (U \cup V, E)$ be a bipartite graph with $n \defeq \card{U} = \card{V}$ and $\funcdoms{w}{E}{\setR}$ an edge weight.
The \emph{minimum-weight perfect matching problem}, or simply the \emph{weighted matching problem}, on $G$ with respect to $w$ is defined as the problem of finding a perfect matching $M$ of $G$ having the minimum weight $w(M)$ among all perfect matchings of $G$.
The dual problem of the linear programming (LP) relaxation of the weighted bipartite matching problem on $G$ is the following (see~\cite[Theorem~17.5]{Schrijver2003}):
\Maximize{
  \sum_{i \in U} p_i + \sum_{j \in V} q_j
}{
  p_i + q_j \le w(e) & (i \in U, j \in V, e = \set{i, j} \in E), \\
  p_i, q_j \in \setR & (i \in U, j \in V).
}
By the strong duality of linear programming, the optimal value of the dual problem is equal to the minimum-weight of a perfect matching in $G$.
In addition, if $w$ is integer-valued, then we can take optimal $(p, q)$ as integer vectors.

The following \emph{complementarity theorem} plays an important role in the combinatorial relaxation algorithm.
Let $G = (U \cup V, E)$ be a bipartite graph equipped with an edge weight $\funcdoms{w}{E}{\setR}$.
For a dual feasible solution $(p, q)$, we define a bipartite graph $G^\# = (U \cup V, E^\#)$ by
\begin{align}\label{def:tight-edges}
  E^\# \defeq \set*{e \in E}[$p_i + q_j = w(e)$ with $e = \set{i, j}, i \in U, j \in V$].
\end{align}
Namely, $G^\#$ is the subgraph of $G$ obtained by collecting only the ``tight'' edges.
Then the following holds from the complementarity theorem of linear programming.

\begin{proposition}[{complementarity theorem; see~\cite[Lemma~2.6]{Murota1995a}}]\label{prop:bipartite-matching-complementarity}
  Under the above setting, $(p, q)$ is optimal if and only if $G^\#$ has a perfect matching.
\end{proposition}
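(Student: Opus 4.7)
The plan is to prove both directions by a direct application of LP weak and strong duality, using the dual formulation of the weighted bipartite matching problem already introduced.

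For the ``if'' direction, I would start from a perfect matching $M$ of $G^\#$. By the definition of $E^\#$ in~\eqref{def:tight-edges}, each edge $e = \set{i,j} \in M$ satisfies $p_i + q_j = w(e)$. Since $M$ is perfect, each vertex of $U$ and each vertex of $V$ is covered exactly once by $M$, so summing the tightness equality over $e \in M$ yields
\begin{align}
  w(M) = \sum_{e = \set{i,j} \in M} \prn{p_i + q_j} = \sum_{i \in U} p_i + \sum_{j \in V} q_j.
\end{align}
Since the left-hand side is a primal feasible value and the right-hand side is the objective of $(p,q)$ in the dual problem, weak LP duality implies that both $M$ and $(p,q)$ are optimal.

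For the ``only if'' direction, assume $(p, q)$ is dual optimal and let $M^*$ be a minimum-weight perfect matching of $G$, which exists by the hypothesis implicit in the discussion of the problem. Dual feasibility gives $p_i + q_j \le w(e)$ for every edge $e = \set{i,j} \in E$, so
\begin{align}
  w(M^*) = \sum_{e = \set{i,j} \in M^*} w(e) \ge \sum_{e = \set{i,j} \in M^*} \prn{p_i + q_j} = \sum_{i \in U} p_i + \sum_{j \in V} q_j,
\end{align}
where the last equality uses that $M^*$ is perfect. Strong LP duality forces equality throughout, so every edge $e \in M^*$ must satisfy $p_i + q_j = w(e)$, i.e., $M^* \subseteq E^\#$. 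Hence $M^*$ is a perfect matching of $G^\#$.

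There is essentially no obstacle here beyond invoking LP duality correctly; the only mild subtlety is making sure the weighted matching problem and its LP relaxation have the same optimum, which is standard for bipartite graphs because the constraint matrix is totally unimodular and the dual LP stated in the excerpt is exactly the one whose strong duality delivers the equality needed. Given that the excerpt has already cited strong duality and the integrality of optimal dual solutions for integer $w$, the proof is essentially a two-line calculation in each direction.
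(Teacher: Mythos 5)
Your proof is correct, and it is essentially the argument the paper delegates to its citation: the paper states this proposition without proof, appealing directly to LP complementary slackness, and your two-direction weak/strong duality calculation is exactly the standard proof of that fact specialized to the bipartite matching LP (including the correct observation that integrality of the bipartite matching polytope lets one pass between the LP relaxation and the combinatorial optimum). The one point worth noting is that in the ``only if'' direction the existence of a perfect matching of $G$ need not be assumed separately: since the dual is always feasible, existence of an optimal $(p,q)$ already forces the primal to be feasible, so your implicit hypothesis is harmless.
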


Analogously to the relation between the bipartite matching problem and the rank computation, solving the weighted bipartite matching problem corresponds to computing the valuation of the Dieudonné determinant.
Let $A = \prn{A_{i,j}} \in F^{n \times n}$ be a square matrix over a valuation skew field $F$ with valuation $v$.
Recall from \cref{sec:matrices-over-valuation-skew-fields} that $\vdet(A)$ denotes the valuation of the Dieudonné determinant of $A$.
For the bipartite graph $G(A)$ associated with $A$, we set an edge weight $\funcdoms{w}{E(A)}{\setR}$ as $w(e) \defeq v\prn{A_{i,j}}$ for $e = \set{i,j} \in E(A)$.
We denote by $\tvdet(A)$ the minimum-weight of a perfect matching in $G(A)$ with respect to the edge weight $w$.
If $G(A)$ has no perfect matching, put $\tvdet(A) \defeq +\infty$.
If $F$ is a field, then $\tvdet(A) \le \vdet(A)$ by the definition of the determinant and the axioms~\ref{item:V1},~\ref{item:V2} of valuations.
This inequality is indeed valid even for noncommutative matrices:

\begin{proposition}\label{prop:tvdet_le_vdet}
  Let $A \in F^{n \times n}$ be a square matrix over a valuation skew field $F$.
  Then it holds $\tvdet(A) \le \vdet(A)$.
\end{proposition}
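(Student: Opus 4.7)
The plan is to prove $\vdet(A) \ge \tvdet(A)$ by expanding $A$ as an iterated determinantal sum of very simple matrices and then applying the matrix-valuation axioms. First I would dispose of the degenerate case in which $G(A)$ admits no perfect matching: then $\tvdet(A) = +\infty$ by definition, while \cref{prop:upper_bound_on_rank} combined with \cref{thm:konig} forces $\rank A \le \trank A < n$, so $A$ is singular and $\vdet(A) = +\infty$ as well. From now on assume $G(A)$ has a perfect matching.

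For each column $j \in \intset{n}$, decompose the $j$-th column of $A$ as the sum $\sum_{i=1}^{n} A_{i,j} e_i$, where $e_i$ denotes the $i$-th standard basis vector. Applying the determinantal sum one column at a time (which is legitimate because, after fixing $i_1, \dotsc, i_{j-1}$, the matrices indexed by $i_j$ agree outside the $j$-th column), I would express $A$ as
\[
  A = \detsum_{(i_1, \dotsc, i_n) \in \intset{n}^n} A^{(i_1, \dotsc, i_n)},
\]
where $A^{(i_1, \dotsc, i_n)}$ is the matrix whose $j$-th column equals $A_{i_j, j}\, e_{i_j}$. Iterating the matrix-valuation axiom~\ref{item:MV2} for $\vdet$ then yields
\[
  \vdet(A) \ge \min_{(i_1, \dotsc, i_n) \in \intset{n}^n} \vdet\prn{A^{(i_1, \dotsc, i_n)}}.
\]

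Next I would evaluate each term $\vdet\prn{A^{(i_1, \dotsc, i_n)}}$. If $(i_1, \dotsc, i_n)$ contains a repetition, some row of $A^{(i_1, \dotsc, i_n)}$ vanishes, so the matrix is singular and its $\vdet$ equals $+\infty$. Otherwise the assignment $\sigma \colon j \mapsto i_j$ is a permutation of $\intset{n}$, and one checks entrywise that $A^{(i_1, \dotsc, i_n)} = P_\sigma\, \diag\prn{A_{\sigma(1),1}, \dotsc, A_{\sigma(n),n}}$ for the associated permutation matrix $P_\sigma$. Since $\Det P_\sigma = \sgn(\sigma) \in \set{+1, -1}$ has valuation $0$, properties~\ref{item:VD1} and~\ref{item:VD3} give $\vdet\prn{A^{(i_1, \dotsc, i_n)}} = \sum_{j=1}^{n} v\prn{A_{\sigma(j), j}}$. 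Permutations $\sigma$ for which some $A_{\sigma(j), j}$ is zero contribute $+\infty$, while the remaining permutations correspond exactly to perfect matchings of $G(A)$ with weight $\sum_j v\prn{A_{\sigma(j), j}}$; minimizing over all $\sigma$ therefore gives $\vdet(A) \ge \tvdet(A)$. The only real subtlety is the bookkeeping of the iterated determinantal sum—ensuring that at each stage successive summands differ in exactly one column so that~\ref{item:MV2} is applicable—and once this is secured the rest is a direct invocation of~\ref{item:VD1} and~\ref{item:VD3}.
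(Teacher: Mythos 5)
Your proof is correct, but it takes a genuinely different route from the paper's. The paper argues via linear-programming duality: it takes a dual optimal solution $(p,q)$ of the assignment problem on $G(A)$, scales $A$ into $B = P^{-1}AQ^{-1}$ with diagonal matrices whose entries have valuations $p_i, q_j$, observes that feasibility of $(p,q)$ makes $B$ proper, and then invokes $\vdet(B)\ge 0$ for proper matrices (\cref{prop:nonnegativity_of_zeta_k}, which itself rests on the Smith--McMillan form of \cref{prop:smith_mcmillan}); strong duality converts $\vdet(B)=\vdet(A)-\Pi(p,q)$ into the claim. You instead carry out the noncommutative analogue of the Leibniz-formula argument: expand $A$ column by column as an iterated determinantal sum of the $n^n$ matrices $A^{(i_1,\dotsc,i_n)}$, apply the matrix-valuation axiom~\ref{item:MV2} repeatedly to bound $\vdet(A)$ below by the minimum over these terms, kill the non-injective index tuples as singular, and evaluate the permutation terms via $P_\sigma\,\diag(\cdot)$ using~\ref{item:VD1} and~\ref{item:VD3}. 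Your bookkeeping is sound: at each stage the summands agree outside the column being split, so each pairwise determinantal sum is defined, and the identification of the surviving terms with perfect matchings of $G(A)$ is exactly right. What each approach buys: yours is the direct rigorous version of the remark in the text that the commutative case ``follows from the definition of the determinant,'' and it is self-contained modulo the cited fact that $\vdet$ satisfies the matrix-valuation axioms; the paper's proof is less elementary but deliberately introduces the dual variables $(p,q)$ and the properized matrix $B=P^{-1}AQ^{-1}$, which are precisely the objects (tight coefficient matrix, upper-tightness test) reused in the combinatorial relaxation algorithm of \cref{sec:combinatorial-relaxation-algorithm}.
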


\begin{proof}
  By \cref{prop:upper_bound_on_rank}, $\tvdet(A) = +\infty$ implies $\vdet(A) = +\infty$.
  Suppose $\tvdet(A) < +\infty$, i.e., $G(A)$ has a perfect matching.
  Let $(p, q)$ be a dual optimal solution of the maximum-weight perfect matching problem on $A$.
  We take diagonal matrices $P, Q \in \GL_n(F)$ such that the valuation of the $i$th and the $j$th diagonal entries of $P$ and $Q$ are $p_i$ and $q_j$, respectively, for every $i, j \in \intset{n}$\footnote{%
    By the existence of augmenting path algorithms for the weighted matching problem, we can assume that every component of $p$ and $q$ are integer combination of edge weights.
    Therefore, for every $i,j \in \intset{n}$, there must exist $a, b \in F$ such that $v(a) = p_i$ and $v(b) = q_j$, where $v$ is the valuation on $F$.
    The matrices $P$ and $Q$ are obtained by arranging these elements in diagonals.
  }.
  Put $B \defeq P^{-1}AQ^{-1}$.
  Then the valuation of the $(i, j)$th entry of $B$ is $w(\set{i, j}) - p_i + q_j \ge 0$ for all $\set{i, j} \in E(A)$.
  Thus $B$ is a matrix over the valuation ring of $F$, and hence $\vdet(B) \ge 0$ by \cref{prop:nonnegativity_of_zeta_k}.
  By $\vdet(B) = \vdet(A) - \tvdet(A)$, the desired inequality is proved.
\end{proof}

\subsection{Valuated Matroids}

A \emph{valuated matroid}, introduced by Dress--Wenzel~\cite{Dress1990,Dress1992}, on a finite set $E$ is a function $\funcdoms{\omega}{2^E}{\setR \cup \set{-\infty}}$ satisfying the following condition:

\begin{enumerate}[label={\upshape{(VM)}}]
  \item\label{item:VM}
    For any $j \in X \setminus Y$, there exists $j' \in Y \setminus X$ such that $\omega(X) + \omega(Y) \le \omega(X \cup \set{j'} \setminus \set{j}) + \omega(Y \cup \set{j} \setminus \set{j'})$.
\end{enumerate}

It is easily confirmed that the family $\set{X \subseteq E}[\omega(X) > -\infty]$ forms a base family of a matroid over $E$ (assuming the family is nonempty), which means that valuated matroids are a generalization of matroids.
In addition, valuated matroids can be maximized by a greedy algorithm.
Conversely, $\funcdoms{\omega}{2^E}{\setR \cup \set{-\infty}}$ is a valuated matroid if and only if $\omega + p$ is maximized by the greedy algorithm for any linear function $\funcdoms{p}{2^E}{\setR \cup \set{-\infty}}$~\cite{Dress1990}.
In this way, valuated matroids are recognized as a kind of ``concave function'' on $2^E \simeq \set{0, 1}^n$.

A typical example of valuated matroids arises from the valuation of determinants of matrices over a valuation field~\cite{Dress1990,Dress1992}.
Since the proof essentially relies on the \emph{Grassmann--Plücker identity}, which is an expansion formula of determinants, it cannot be directly applied to valuation skew fields.
Nevertheless, Hirai~\cite[Proposition~2.12]{Hirai2019} presented another proof which is valid for the degree of rational functions over skew fields.
This can be straightforwardly extended to general valuation skew fields as follows.

\begin{proposition}\label{prop:valuated-matroid-and-valuation-skew-field}
  Let $A \in F^{n \times n'}$ be a matrix over a valuation skew field $F$.
  The function $\funcdoms{\omega}{2^{\intset{n'}}}{\setR \cup \set{-\infty}}$ given by
  \begin{align}
    \omega(J) \defeq \begin{cases}
      -\vdet(A[X]) & (\card{J} = n), \\
      -\infty      & (\text{otherwise})
    \end{cases}
  \end{align}
  for $X \subseteq \intset{n'}$ is a valuated matroid on $\intset{n'}$.
\end{proposition}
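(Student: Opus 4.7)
The plan is to give a noncommutative Cramer-style proof that bypasses the Grassmann--Plücker identity, relying only on the axioms of the Dieudonné determinant and the valuation. First I would reduce to the nontrivial case: if either $\card{X}\neq n$, $\card{Y}\neq n$, $A[X]$ singular, or $A[Y]$ singular, then $\omega(X)+\omega(Y)=-\infty$ and (VM) holds vacuously, so assume $\card{X}=\card{Y}=n$ and both $A[X]$, $A[Y]$ are nonsingular. Enumerate $X=\set{x_1,\dotsc,x_n}$, $Y=\set{y_1,\dotsc,y_n}$, and let $i$ be the index with $x_i=j\in X\setminus Y$.

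The heart of the argument is to introduce the auxiliary matrix $B\defeq A[X]^{-1}A[Y]\in\GL_n(F)$ together with its inverse $C\defeq A[Y]^{-1}A[X]=B^{-1}$, where I index rows of $B$ by $X$ and columns by $Y$ (and conversely for $C$). The $k$th column of $B$ records the coordinates of $A_{y_k}$ (the column of $A$ indexed by $y_k$) in the basis $\set{A_{x_1},\dotsc,A_{x_n}}$. The combinatorial key is that whenever $y_k\in X\cap Y$, say $y_k=x_l$, this $k$th column is simply $e_l$; since $x_i=j\notin Y$ forces $l\neq i$, we get $b_{i,k}=0$ for every $k$ with $y_k\notin Y\setminus X$. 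Reading off the $(i,i)$ entry of $BC=I$ therefore yields
\begin{align}
  \sum_{k\,:\,y_k\in Y\setminus X} b_{i,k}\,c_{k,i} = 1.
\end{align}
By \ref{item:V2}, \ref{item:V3} applied to this finite sum, some index $k^\star$ with $y_{k^\star}\in Y\setminus X$ satisfies $v(b_{i,k^\star})+v(c_{k^\star,i})\le 0$; in particular, $b_{i,k^\star}\neq 0$ and $c_{k^\star,i}\neq 0$. Take $j'\defeq y_{k^\star}$, $X'\defeq X\cup\set{j'}\setminus\set{j}$ and $Y'\defeq Y\cup\set{j}\setminus\set{j'}$.

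It then remains to verify the two Dieudonné-determinant identities
\begin{align}
  \vdet(A[X']) = \vdet(A[X]) + v(b_{i,k^\star}), \qquad \vdet(A[Y']) = \vdet(A[Y]) + v(c_{k^\star,i}).
\end{align}
For the first, observe that $A[X]^{-1}A[X']$ is obtained from $I_n$ by replacing its $i$th column with the $k^\star$th column of $B$. Since $b_{i,k^\star}\neq 0$, elementary row operations (which lie in $\E_n(F)$ and so contribute nothing to $\Det$ by \ref{item:D2}) clear every other entry of that column, leaving a diagonal matrix with entries $1,\dotsc,1,b_{i,k^\star},1,\dotsc,1$; properties \ref{item:VD1}--\ref{item:VD3} then give the claimed formula, and a symmetric argument using $C$ handles $Y'$. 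Combining everything,
\begin{align}
  \omega(X') + \omega(Y') - \omega(X) - \omega(Y) = -v(b_{i,k^\star}) - v(c_{k^\star,i}) \ge 0,
\end{align}
which is exactly \ref{item:VM}.

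The main obstacle is psychological rather than technical: in the commutative setting one is tempted to invoke a determinant expansion, which is unavailable here. The whole argument must be routed through the single identity $BC=I$ and the $\E_n(F)$-invariance of $\Det$. Once the Cramer-style identification of $A[X]^{-1}A[X']$ with an ``almost-identity'' matrix is in hand, the valuation estimate is just the elementary pigeonhole furnished by \ref{item:V2}, and checking that the vanishing pattern of $B$ lines up correctly with the condition $j\in X\setminus Y$ is the only subtle bookkeeping point.
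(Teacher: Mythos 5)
Your proof is correct, but it follows a genuinely different route from the paper. The paper invokes a local characterization of valuated matroids (verifying \ref{item:VM} only for $\card{X \setminus Y} = \card{Y \setminus X} = 2$), reduces by row operations to a $2 \times 4$ matrix, and checks the resulting four-point condition by explicit case analysis — this is Hirai's argument. You instead verify the exchange axiom globally: the identity $BC = I$ with $B \defeq A[X]^{-1}A[Y]$, the observation that $b_{i,k}$ vanishes whenever $y_k \in X \cap Y$ (because that column of $B$ is a unit vector $e_l$ with $l \ne i$), and the pigeonhole from \ref{item:V2} applied to $\sum_k b_{i,k}c_{k,i} = 1$ produce the exchange partner $j'$; the Cramer-style identities $\vdet(A[X']) = \vdet(A[X]) + v(b_{i,k^\star})$ then follow because $A[X]^{-1}A[X']$ reduces by elementary operations (which lie in $\E_n(F)$ and are $\vdet$-neutral) to a diagonal matrix, and column permutations do not affect $\vdet$. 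All of this uses only \ref{item:V1}--\ref{item:V3} and \ref{item:VD1}--\ref{item:VD3}, so it is sound over a skew field. What your approach buys is a self-contained proof of the full exchange axiom that needs neither the local-characterization theorem nor a case analysis; what the paper's approach buys is that the only computation is on a $2 \times 4$ matrix. One shared, minor point of bookkeeping: both proofs implicitly restrict attention to $X, Y$ with $\omega(X), \omega(Y) \ne -\infty$ (your degenerate-case dismissal needs $Y \setminus X \ne \varnothing$, which holds exactly when $\card{X} = \card{Y}$ and $X \setminus Y \ne \varnothing$), consistent with the standard convention that \ref{item:VM} is required only on the effective domain.
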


\begin{proof}
  A local characterization~\cite[Theorem~5.2.25]{Murota2000} of valuated matroids claims that $\omega$ is a valuated matroid if and only if (i) $\set{X \subseteq \intset{n'}}[\omega(X) \ne -\infty]$ forms a base family of a matroid and (ii) $\omega$ satisfies~\ref{item:VM} for $X, Y \subseteq \intset{n'}$ with $\card{X \setminus Y} = \card{Y \setminus X} = 2$.
  The condition (i) holds since the linear independence of column vectors of $A$ defines a matroid.

  We show the condition (ii).
  Let $X, Y \subseteq E$ with $\omega(X), \omega(Y) \ne -\infty$ and $\card{X \setminus Y} = \card{Y \setminus X} = 2$.
  Put $A' \defeq A[X \cup Y]$.
  By a column permutation, we arrange columns of $X \cap Y$ in the left $n - 2$ columns of $A'$ without changing $\omega$.
  In addition, by elementary row operations, we can assume without changing $\omega$ that $A'$ is in the form of $\begin{psmallmatrix} S & T \\ O & U \end{psmallmatrix}$, where $S$ is a nonsingular $(n-2) \times (n-2)$ matrix, $T$ is an $(n-2) \times 4$ matrix, and $U$ is a $2 \times 4$ matrix.
  Assume that $X \setminus Y = \set{1, 2}$ and $Y \setminus X = \set{3, 4}$.
  For distinct $j, j' \in \set{1, 2, 3, 4}$, define $u_{j,j'}$ as the valuation of the Dieudonné determinant of the $2 \times 2$ submatrix of $U$ with column set $\set{j, j'}$.
  Then $\omega((X \cap Y) \cup \set{j, j'}) = -\vdet(S) - u_{j,j'}$ for any distinct $j, j' \in \set{1, 2, 3, 4}$.
  Hence~\ref{item:VM} is equivalent to the following:
  \begin{enumerate}[label={\upshape{(4PT)}}]
    \item\label{item:4PT}
      The minimum value of $u_{1,2} + u_{3,4}$, $u_{1,3} + u_{2,4}$, $u_{1,4} + u_{2,3}$ is attained at least twice.
  \end{enumerate}

  Now $u_{1,2} \ne -\infty$ by $\omega(X) \ne -\infty$.
  By a column permutation, we assume that the $(1,1)$st entry of $U$ is nonzero.
  In addition, we make the $(2,1)$st entry of $U$ zero using an elementary row operation.
  If the $(2,3)$rd entry is nonzero, make the $(1,3)$rd entry zero in the same way.
  Then $U$ is in form of either
  \begin{align}
    U = \begin{pmatrix}
      a & c & d & e \\
      0 & b & 0 & f
    \end{pmatrix}
    \quad\text{or}\quad
    \begin{pmatrix}
      a & c & 0 & e \\
      0 & b & d & f
    \end{pmatrix}.
  \end{align}

  In the left case, $u_{1,2} + u_{3,4} = u_{1,4} + u_{2,3} = v(a) + v(b) + v(d) + v(f)$ and $u_{1,3}+u_{2,4} = +\infty$, where $v$ is the valuation of $F$.
  In the right case, $u_{1,2} + u_{3,4} = v(a) + v(b) + v(d) + v(e)$, $u_{1,4} + u_{2,3} = v(a) + v(f) + v(c) + v(d)$ and $u_{1,3} + u_{2,4} = v(a) + v(d) + \vdet \begin{psmallmatrix} c & e \\ b & f \end{psmallmatrix} \ge v(a) + v(d) + \max \set{v(c) + v(f), v(b) + v(e)}$ by \cref{prop:matrix_valuation}~(3).
  The equality is attained if $v(c) + v(f) \ne v(b) + v(e)$.
  Hence~\ref{item:4PT} is satisfied for all cases.
\end{proof}

Let $R$ and $C$ be finite sets.
Murota~\cite{Murota1995c} introduced a \emph{valuated bimatroid} over $(R, C)$ as a function $\funcdoms{w}{2^R \times 2^C}{\setR \cup \set{-\infty}}$ satisfying the following conditions:

\begin{enumerate}[label={\upshape{(VBM\arabic*)}},itemindent=*]
  \item\label{item:VBM1}
    For any $i' \in I' \setminus I$, at least one of the following holds:
    \begin{enumerate}[label={\upshape{(\alph*1)}}]
      \item $\exists j' \in J' \setminus J$: $w(I, J) + w(I', J') \le w(I \cup \set{i'}, J \cup \set{j'}) + w(I' \setminus \set{i'}, J' \setminus \set{j'})$,
      \item $\exists i \in I \setminus I'$: $w(I, J) + w(I', J') \le w(I \cup \set{i'} \setminus \set{i}, J) + w(I \cup \set{i} \setminus \set{i'}, J')$.
    \end{enumerate}
  \item\label{item:VBM2}
    For any $j' \in J' \setminus J$, at least one of the following holds:
    \begin{enumerate}[label={\upshape{(\alph*2)}}]
      \item $\exists i \in I \setminus I'$: $w(I, J) + w(I', J') \le w(I \setminus \set{i}, J \setminus \set{j}) + w(I' \cup \set{i}, J' \cup \set{j})$,
      \item $\exists j' \in J' \setminus J$: $w(I, J) + w(I', J') \le w(I, J \cup \set{j'} \setminus \set{j}) + w(I', J \cup \set{j} \setminus \set{j'})$.
    \end{enumerate}
\end{enumerate}

The following is a noncommutative generalization of~\cite[Remark~2]{Murota1995c}.

\begin{proposition}\label{prop:deg_det_is_valuated_bimatroid}
  Let $A \in F^{n \times n'}$ be a matrix over a valuation skew field $F$.
  Define $\funcdoms{w}{2^{\intset{n}} \times 2^{\intset{n'}}}{\setR \cup \set{-\infty}}$ as
  \begin{align}\label{eq:valuated-bimatroid-from-val-det}
    w(I, J) \defeq \begin{cases}
      -\vdet(A[I, J]) & (\card{I} = \card{J}), \\
      -\infty         & (\text{otherwise})
    \end{cases}
  \end{align}
  for $I \subseteq \intset{n}$ and $J \subseteq \intset{n'}$.
  Then $w$ is a valuated bimatroid.
\end{proposition}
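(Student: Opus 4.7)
The approach is to reduce to \cref{prop:valuated-matroid-and-valuation-skew-field} via an augmented matrix construction. Form the $n \times (n+n')$ matrix $\tilde{A} \defeq \begin{pmatrix} I_n & A \end{pmatrix}$ and apply \cref{prop:valuated-matroid-and-valuation-skew-field} to it to obtain a valuated matroid $\omega$ on $\intset{n+n'}$. The key device is the bijection $S(I, J) \defeq (\intset{n} \setminus I) \cup \set{n + j}[j \in J]$ between pairs $(I, J)$ with $\card{I} = \card{J}$ and $n$-subsets of $\intset{n+n'}$, which lets us reinterpret $w$ as a pullback of $\omega$.

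First I would verify $\omega(S(I, J)) = w(I, J)$ for all $I, J$ with $\card{I} = \card{J}$. A suitable row and column permutation brings $\tilde{A}[S(I, J)]$ into the block upper-triangular form $\begin{psmallmatrix} I_{n-\card{I}} & * \\ O & A[I, J] \end{psmallmatrix}$, so \ref{item:D1}--\ref{item:D3} give $\vdet(\tilde{A}[S(I, J)]) = \vdet(A[I, J])$, whence $\omega(S(I, J)) = -\vdet(A[I,J]) = w(I,J)$. Pairs with $\card{I} \ne \card{J}$ fall outside the bijection and have $w(I, J) = -\infty$ by convention, so nothing further is needed there.

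Next I would derive \ref{item:VBM1} and \ref{item:VBM2} from the exchange axiom \ref{item:VM} on $\omega$ by case analysis on the ``type'' of the exchange partner. Put $X \defeq S(I, J)$ and $Y \defeq S(I', J')$. For \ref{item:VBM1}, an index $i' \in I' \setminus I$ corresponds to $i' \in X \setminus Y$ lying in the row part $\intset{n}$, and \ref{item:VM} supplies $j^{*} \in Y \setminus X$ lying either in the column part $\set{n+1, \dotsc, n+n'}$, giving $j^{*} = n + j'$ with $j' \in J' \setminus J$ (yielding case (a1)), or in the row part, giving $j^{*} \in I \setminus I'$ (yielding case (b1)). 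Axiom \ref{item:VBM2} follows symmetrically by taking $n + j' \in Y \setminus X$ and applying \ref{item:VM} to $(Y, X)$.

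The only technical work is to verify the set identities such as $S(I \cup \set{i'}, J \cup \set{j'}) = X \cup \set{n + j'} \setminus \set{i'}$ and $S((I \cup \set{i'}) \setminus \set{i}, J) = X \cup \set{i} \setminus \set{i'}$ for $i \in I \setminus I'$, together with their three symmetric counterparts, so that the inequality supplied by \ref{item:VM} pulls back precisely to the inequalities asserted in \ref{item:VBM1} and \ref{item:VBM2}. These verifications are routine set manipulations once the dictionary $(I, J) \leftrightarrow S(I, J)$ is fixed.
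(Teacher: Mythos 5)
Your proposal is correct and follows essentially the same route as the paper: augment $A$ to $\begin{pmatrix} I_n & A \end{pmatrix}$, identify pairs $(I,J)$ with $n$-subsets of the combined index set via $(I,J) \mapsto (\intset{n}\setminus I) \cup J$, check that $\vdet$ of the corresponding submatrices agree via the block-triangular form, and transport the valuated-matroid axiom from \cref{prop:valuated-matroid-and-valuation-skew-field}. The paper states the final equivalence between~\ref{item:VM} on $\omega$ and~\ref{item:VBM1}--\ref{item:VBM2} on $w$ without proof, whereas you spell out the case analysis explicitly; this is a harmless (indeed welcome) elaboration, not a different argument.
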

\begin{proof}
  To distinguish rows and columns of $A$, we identify the rows and columns of $A$ with distinct sets $R$ and $C$, respectively.
  Consider an $n \times (n + n')$ skew function matrix $B \defeq \begin{pmatrix} I_n & A \end{pmatrix}$ with row set $R$ and column set $E \defeq R \cup C$.
  Then there is a one-to-one correspondence between a submatrix of $A$ and a submatrix of $B$ with row set $R$ given by $2^R \times 2^C \ni (I, J) \mapsto (R, (R \setminus I) \cup J) \in 2^R \times 2^E$.
  In particular, if $\card{I} = \card{J} \eqdef k$, then $\card{R} = \card{(R \setminus I) \cup J}$ and
  \begin{align} \label{eq:valuated_matroid}
    \vdet(B[(R \setminus I) \cup J])
    = \vdet \begin{pmatrix}
      I_k & A[R \setminus I, J] \\
      O   & A[I, J]
    \end{pmatrix}
    = \vdet(A[I, J])
    = -w(I, J).
  \end{align}
  Define a map $\funcdoms{\omega}{E}{\setR \cup \set{-\infty}}$ by
  \begin{align}
    \omega(X) \defeq \begin{cases}
      -\vdet(B[X]) \; (= w(R \setminus X, X \cap C)) & (\card{X} = n), \\
      -\infty                                           & (\text{otherwise})
    \end{cases}
  \end{align}
  for $X \subseteq E$.
  Then $w$ satisfies~\ref{item:VBM1} and~\ref{item:VBM2} if and only if $\omega$ is a valuated matroid, which was already shown in \cref{prop:valuated-matroid-and-valuation-skew-field}.
\end{proof}

Let $w$ be a valuated bimatroid over $(R, C)$.
By a kind of greedy algorithm, one can obtain sequences $\varnothing = I_0 \subseteq I_1 \subseteq \dotsb I_{n^*} \subseteq R$ and $\varnothing = J_0 \subseteq J_1 \subseteq \dotsb J_{n^*} \subseteq C$ with $n^* \defeq \min\set{\card{R}, \card{C}}$ such that $(I_k, J_k)$ is a maximizer of the right-hand side in
\begin{align}\label{def:d_k}
  d_k \defeq \set{w(I, J)}[\card{I} = \card{J} = k]
\end{align}
for every $k \in \intset{0, n^*}$~\cite{Murota1995c}.
Therefore, from \cref{prop:deg_det_is_valuated_bimatroid}, any algorithm to compute valuations of the Dieudonné determinants can be applied to compute $\zeta_k(A)$ defined by~\eqref{def:zeta_k}.

\section{Combinatorial Relaxation Algorithm}\label{sec:combinatorial-relaxation-algorithm}
Let $F$ be a split DVSF with valuation $v$, uniformizer $\pi$, valuation ring $R$, coefficient skew subfield $K$, and associated higher $\delta_0$-derivations $(\delta_d)_{d \in \setN}$.
Let $A = \prn{A_{i,j}} \in F^{n \times n}$ be a square matrix given as the $\pi$-adic expansion
\begin{align}\label{eq:input_form}
  A = \sum_{d=0}^l A_d \pi^d,
\end{align}
where $l \in \setN$ and $A_0, \dotsc, A_l \in K^{n \times n}$.
Note that $A$ is a matrix over $R$.
This section describes the combinatorial relaxation algorithm for computing $\vdet(A)$.

\subsection{Truncating Higher-Valuation Terms}
By technical reasons, our algorithm requires an upper bound $M$ on $\vdet(A)$ (or $\vdet(A) = +\infty$).
Indeed, we can assume $l = \Order(M)$ by the following proposition:

\begin{proposition}\label{prop:truncation}
  Let $F$ be a DVSF with uniformizer $\pi$ and let $A = \sum_{d=0}^l A_d \pi^d \in F^{n \times n}$ be a matrix in form of~\eqref{eq:input_form}.
  For any $M \in \setN$ and $\tilde{A} \defeq \sum_{d=0}^M A_d \pi^d$, the following hold:
  \begin{enumerate}
    \item If $\vdet(A) \le M$, then $\vdet(A) = \vdet(\tilde{A})$.
    \item If $\vdet(A) > M$, then $\vdet(\tilde{A}) > M$.
  \end{enumerate}
\end{proposition}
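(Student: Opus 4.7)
The plan is to write $A = \tilde{A} + B$ with $B \defeq \sum_{d=M+1}^l A_d \pi^d$, so that every entry of $B$ has valuation at least $M+1$. The key preliminary observation, using~\ref{item:DVR1} and~\eqref{eq:any_ideal_of_DVSF}, is that each such entry can be expressed as $c\pi^{M+1}$ with $c$ in the valuation ring $R$; consequently there exists $C \in R^{n \times n}$ such that $B = C \cdot \prn{\pi^{M+1} I_n}$. This right-factorization of $\pi^{M+1}$ is the place where some care is needed in the noncommutative setting, but it follows directly from $J(R)^{M+1} = R\pi^{M+1}$.

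Next, I would iterate the determinantal sum axiom~\ref{item:MV2} column by column along the decomposition $A = \tilde{A} + B$. This produces, for each subset $S \subseteq \intset{n}$, a matrix $A^S$ whose $j$-th column is that of $\tilde{A}$ when $j \in S$ and that of $B$ otherwise, and yields
\begin{align}
  \vdet(A) \ge \min_{S \subseteq \intset{n}} \vdet\prn{A^S}.
\end{align}
The case $S = \intset{n}$ contributes $\vdet(\tilde{A})$. For $S \ne \intset{n}$, the factorization above lets me write $A^S = A' D$, where $D$ is the diagonal matrix with $j$-th diagonal entry $\pi^{M+1}$ for $j \notin S$ and $1$ for $j \in S$, and $A' \in R^{n \times n}$ carries the columns of $\tilde{A}$ on $S$ and those of $C$ on $\intset{n} \setminus S$. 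Then~\ref{item:VD1} and~\ref{item:VD3} give $\vdet\prn{A^S} = \vdet(A') + (M+1)\card{\intset{n} \setminus S}$, and $\vdet(A') \ge 0$ by \cref{prop:nonnegativity_of_zeta_k} since $A' \in R^{n \times n}$. Hence $\vdet\prn{A^S} \ge M+1$, and therefore $\vdet(A) \ge \min\set{\vdet(\tilde{A}), M+1}$.

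Applying the same argument to the decomposition $\tilde{A} = A + (-B)$, and noting that $-B$ also has every entry of valuation $\ge M+1$, gives the symmetric inequality $\vdet(\tilde{A}) \ge \min\set{\vdet(A), M+1}$. Both parts of the proposition then fall out: if $\vdet(A) \le M$, the two inequalities collapse to $\vdet(A) = \vdet(\tilde{A})$; if $\vdet(A) > M$, the second inequality alone yields $\vdet(\tilde{A}) \ge M+1 > M$.

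I expect the main technical obstacle to be the interplay between the iterated column-wise application of~\ref{item:MV2} and the noncommutative right-factorization of $\pi^{M+1}$ out of $B$: the factor $\pi^{M+1}$ must be pulled to the right so that it realises as the diagonal multiplier $D$, because only then does the multiplicativity of $\vdet$ cleanly deliver $\vdet\prn{A^S} \ge M+1$ via~\ref{item:VD1} and~\ref{item:VD3}. Once these two ingredients are in place, the remainder is routine bookkeeping using~\ref{item:MV2},~\ref{item:VD1},~\ref{item:VD3}, and \cref{prop:nonnegativity_of_zeta_k}.
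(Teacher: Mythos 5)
Your proof is correct, but it takes a genuinely different route from the paper's. The paper works through the Smith--McMillan form: it passes to the quotient $R/{J(R)}^{M+1}$ via the natural homomorphism $\phi$, argues that the diagonalization procedure of \cref{prop:smith_mcmillan} commutes with $\phi$ (so that $\phi(A) = \phi(\tilde{A})$ forces the images of the two Smith--McMillan forms to agree), and then reads off both claims from the valuations of the diagonal entries. You instead stay entirely at the level of the matrix-valuation axioms: the column-wise iteration of~\ref{item:MV2} over the splitting $A = \tilde{A} + B$, combined with the right-factorization $B = C\prn{\pi^{M+1}I_n}$ permitted by ${J(R)}^{M+1} = R\pi^{M+1}$, yields $\vdet(A) \ge \min\set{\vdet(\tilde{A}),\, M+1}$ via~\ref{item:VD1},~\ref{item:VD3}, and \cref{prop:nonnegativity_of_zeta_k}, and the symmetric inequality for $\tilde{A} = A + (-B)$ closes the argument. (The degenerate case $M \ge l$, where $B = O$, is harmless: every $A^S$ with $S \ne \intset{n}$ then has a zero column and $\vdet\prn{A^S} = +\infty$.) Your approach is more elementary and self-contained --- it avoids the somewhat delicate claim that the elimination steps producing the Smith--McMillan form are compatible with reduction modulo ${J(R)}^{M+1}$ --- while the paper's argument buys slightly more: it shows that each individual invariant $v(d_i)$, and hence each $\zeta_k$, is preserved by truncation whenever it is at most $M$, not just the top one $\vdet = \zeta_n$. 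For the proposition as stated, your pair of symmetric inequalities is exactly what is needed and the derivation of both items from them is correct.
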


\begin{proof}
  Let $v$ and $R$ be the valuation and the valuation ring of $F$, respectively.
  Recall $J(R) = \pi R = R \pi$ from~\ref{item:DVR1} and let $\funcdoms{\phi}{R}{R / {J(R)}^{M+1}}$ be the natural homomorphism.
  It is easily checked that $\phi(a) \ne 0$ if and only if $v(a) \le M$ and $\phi(a) = \phi(b) \ne 0$ implies $v(a) = v(b) \le M$ for $a, b \in R$.

  Let $P = \prn{P_{i,j}}, Q = \prn{Q_{i,j}} \in R^{n \times n}$ be any square matrices over $R$ with $\phi(P) = \phi(Q)$.
  Let $D$ and $E$ be the Smith--McMillan forms of $P$ and $Q$, respectively.
  We show $\phi(D) = \phi(E)$ by tracing the procedure to obtain the Smith--McMillan forms $D, E$ given in the proof of \cref{prop:smith_mcmillan}.
  First, we find a matrix entry having the minimum valuation of each $P$ and $Q$, and move it to the top-left.
  If the minimum valuation $\zeta_1(P)$ of an entry in $P$ is larger than $M$, then $\phi(P) = O$ and thus $\phi(Q) = O$ by $\phi(P) = \phi(Q)$.
  Thus $\phi(D) = \phi(E) = O$ in this case.
  Suppose $v(P_{i,j}) = \zeta_1(P) \le M$.
  By $\phi(P_{i,j}) = \phi(Q_{i,j}) \ne 0$, it holds $v(P_{i,j}) = v(Q_{i,j})$ and $\zeta_1(P) = \zeta_1(Q)$.
  Hence the top-left entries of $\phi(D)$ and $\phi(E)$ are the same.
  After moving the $(i, j)$th entries in $P$ and $Q$ to the top-left, we eliminate the first row and columns except for the top-left entries.
  Since $\phi$ is a homomorphism, $\phi(P)$ remains to be the same as $\phi(Q)$ after this elimination.
  Applying the above arguments to the bottom-right $(n-1) \times (n-1)$ submatrix recursively, we have $\phi(D) = \phi(E)$.

  Let $\diag\prn{d_1, \dotsc, d_n}$ and $\diag\prn[\big]{\tilde{d}_1, \dotsc, \tilde{d}_n}$ be the Smith--McMillan forms of $A$ and $\tilde{A}$, respectively.
  By $\phi(A) = \phi(\tilde{A})$ and the above arguments, the images of their Smith--McMillan forms by $\phi$ are the same, i.e., $\phi\prn{d_i} = \phi\prn[\big]{\tilde{d}_i}$ for $i \in \intset{n}$.

  Suppose that $\vdet(A) \le M$.
  From $\sum_{i=1}^n v(d_i) = \vdet(A) \le M$ and $v(d_i) \ge 0$ for $i \in \intset{n}$, it holds $v(d_i) \le M$ and thus $\phi\prn[\big]{\tilde{d}_i} = \phi(d_i) \ne 0$.
  This means $v(d_i) = v\prn[\big]{\tilde{d}_i}$ for $i \in \intset{n}$.
  Hence $\vdet(A) = \sum_{i=1}^n v(d_i) = \sum_{i=1}^n v\prn[\big]{\tilde{d}_i} = \vdet(\tilde{A})$

  Next, suppose that $\vdet(A) > M$.
  If $v(d_i) \le M$ for all $i \in \intset{n}$, then $v(d_i) = v(\tilde{d}_i)$ and $\vdet(\tilde{A}) = \vdet(A) > M$ in the same way as above.
  If $v(d_n) > M$, then $\phi\prn[\big]{\tilde{d}_n} = \phi(d_n) = 0$, which implies $\vdet(\tilde{A}) \ge v\prn[\big]{\tilde{d}_n} > M$.
\end{proof}

From \cref{prop:truncation}, we can compute $\vdet(A)$ by computing it for $\tilde{A} \defeq \sum_{d=0}^M A_d \pi^d$ instead of $A$.
Hence we can assume $l = \Order(M)$ by truncating higher-valuation terms in $A$.

\subsection{Faithful Algorithm}\label{sec:faithful-algorithm}
This section describes the combinatorial relaxation algorithm which is faithful to the original algorithm of Murota~\cite{Murota1995a}.
Recall from \cref{sec:weighted-bipartite-matching} that $A$ is associated with the bipartite graph $G(A)$ equipped with an integral edge weight and $\tvdet(A)$ denotes the minimum weight of a perfect matching in $G(A)$.
By \cref{prop:tvdet_le_vdet}, $\tvdet(A)$ serves as a lower bound on $\vdet(A)$.
We say that $A$ is \emph{upper-tight} if $\tvdet(A) = \vdet(A)$.
The combinatorial relaxation algorithm to compute $\vdet(A)$ is the following:

\paragraph{Faithful Combinatorial Relaxation Algorithm}
\begin{enumerate}[label={\upshape{\textbf{Phase~\arabic*a.}}},ref={Phase~\arabic*a}]
  \setcounter{enumi}{-1}
  \item\label{item:Phase0}
    Set $A^1 \gets A$ and $k \gets 1$.
  \item\label{item:Phase1}
    Compute $\tvdet\prn[\big]{A^k}$ by solving the minimum-weight perfect matching problem.
    If $\tvdet\prn[\big]{A^k} > M$, output $+\infty$ and halt.
  \item\label{item:Phase2}
    If $A$ is upper-tight, output $\tvdet\prn[\big]{A^k}$ and halt.
  \item\label{item:Phase3}
    Find $A^{k+1} \in F^{n \times n}$ such that $\vdet\prn[\big]{A^k} = \vdet\prn[\big]{A^{k+1}}$ and $\tvdet\prn[\big]{A^k} < \tvdet\prn[\big]{A^{k+1}}$.
    Set $k \gets k+1$ and go back to \ref{item:Phase1}.
\end{enumerate}

Since the input matrix $A$ is over $R$, each edge in $G(A)$ has a nonnegative weight, from which $\tvdet(A) \ge 0$ holds.
Therefore, the number of iterations is at most $\vdet(A) \le M$.
In the remaining of this section, we explain details of the upper-testing testing in \ref{item:Phase2} and the matrix modification in \ref{item:Phase3}.

First, we consider \ref{item:Phase2}.
Denote by $\D\prn[\big]{A^k}$ the dual problem of the minimum-weight perfect matching problem on $G\prn[\big]{A^k}$ given in \cref{sec:weighted-bipartite-matching}.
For $p, q \in \setZ^n$, put
\begin{align}\label{eq:properlized_A}
  B = \prn[\big]{B_{i,j}} \defeq D\prn{\pi^{-p}} A^k D\prn{\pi^{-q}}.
\end{align}
Then for every $i, j \in \intset{n}$, we have
\begin{align}
  v\prn[\big]{B_{i,j}}
  = v\prn[\big]{\pi^{-p_i} A^k_{i,j} \pi^{-q_j}}
  = v\prn[\big]{A^k_{i,j}} - p_i - q_j,
\end{align}
which is nonnegative if $(p, q)$ is feasible to $\D\prn[\big]{A^k}$.
In particular, if $(p, q)$ is feasible, then $B \in R^{n \times n}$.

The \emph{tight coefficient matrix} $A^\# = \prn[\big]{A_{i,j}^\#}$ of $A^k$ with respect to a feasible solution $(p, q)$ of $\D(A)$ is the coefficient matrix of $\pi^0$ in the $\pi$-adic expansion of $B$.
In particular, when $F$ is a field, $A_{i,j}^\#$ is equal to the coefficient of $\pi^{p_i + q_j}$ in the $\pi$-adic expansion of $A_{i,j}$ for $i, j \in \intset{n}$.
Note that $A^\#$ depends on $(p, q)$.
Then $A^\#$ can be used for characterizing the optimality of $(p, q)$ and the upper-tightness of $A^k$ as follows:

\begin{proposition}\label{prop:tcf_trank_optimal}
  Let $A^\#$ be the tight coefficient matrix of $A^k$ with respect to an integral feasible solution $(p, q)$ of $\D\prn[\big]{A^k}$.
  Then $(p, q)$ is optimal if and only if $\trank A^\# = n$.
\end{proposition}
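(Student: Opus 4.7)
The plan is to reduce the statement to the complementarity theorem for the weighted bipartite matching problem (\cref{prop:bipartite-matching-complementarity}) by showing that the support of $A^\#$ coincides with the set of tight edges in $G(A^k)$ with respect to $(p,q)$.

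First I would unwind the definition of $A^\#$. Setting $B = D(\pi^{-p}) A^k D(\pi^{-q})$, each entry satisfies $v(B_{i,j}) = v(A^k_{i,j}) - p_i - q_j$. Feasibility of $(p,q)$ makes this nonnegative, so $B \in R^{n \times n}$, and $A^\#_{i,j}$ is the image of $B_{i,j}$ under the natural map $R \to K = R/J(R)$. In particular, for $\{i,j\} \in E(A^k)$ we have $A^\#_{i,j} \neq 0$ if and only if $v(B_{i,j}) = 0$, i.e., $p_i + q_j = v(A^k_{i,j}) = w(\{i,j\})$, which is precisely the tightness condition~\eqref{def:tight-edges}; for $\{i,j\} \notin E(A^k)$ the entry $B_{i,j}$ is already zero. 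Hence the bipartite graph $G(A^\#)$ is exactly the tight subgraph $G^\#$ associated with $(p,q)$.

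Next I would observe that $\trank A^\# = n$ is by definition the existence of a perfect matching in $G(A^\#) = G^\#$. Applying \cref{prop:bipartite-matching-complementarity} directly yields the equivalence: $(p,q)$ is optimal for $\D(A^k)$ if and only if $G^\#$ admits a perfect matching, if and only if $\trank A^\# = n$.

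I do not foresee a substantive obstacle; the content of the statement is essentially a rephrasing of LP complementary slackness once one notices that the residue-field reduction $B \mapsto A^\#$ precisely kills the non-tight entries. The only care needed is to verify that the zero pattern in $A^\#$ captures exactly the tight edges (and not accidentally more, via cancellation of $\pi$-adic coefficients), which is immediate here because each $B_{i,j}$ is a single entry rather than a sum.
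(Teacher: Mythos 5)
Your proposal is correct and follows essentially the same route as the paper: identify the support of $A^\#$ with the tight subgraph $G^\#$ of $G\prn[\big]{A^k}$ via the valuation computation $v(B_{i,j}) = v\prn[\big]{A^k_{i,j}} - p_i - q_j$, then invoke \cref{prop:bipartite-matching-complementarity}. No gaps.
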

\begin{proof}
  For $i,j \in \intset{n}$, the element $A_{i,j}^\#$ is nonzero if and only if $v\prn[\big]{A_{i,j}^\#} = 0$, which is equivalent to $v\prn[\big]{A^k_{i,j}} = p_i + q_j$.
  Thus $G\prn[\big]{A^\#}$ coincides with the subgraph $G^\#$ of $G\prn[\big]{A^k}$ defined by~\eqref{def:tight-edges} with respect to $(p, q)$.
  By \cref{prop:bipartite-matching-complementarity}, having a perfect matching for $G\prn[\big]{A^\#}$ is equivalent to the optimality of $(p, q)$.
\end{proof}

\begin{proposition}\label{prop:tcf_upper-tight}
  Let $A^\#$ be the tight coefficient matrix of $A^k$ with respect to an integral optimal solution $(p, q)$ of $\D\prn[\big]{A^k}$.
  Then $A^k$ is upper-tight if and only if $A^\#$ is nonsingular.
\end{proposition}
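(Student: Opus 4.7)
The plan is to reduce the statement to \cref{prop:biproper_equivalence} applied to the properly-normalized matrix $B = D(\pi^{-p})A^k D(\pi^{-q})$ appearing in the definition of $A^\#$. First I would note that, since $(p,q)$ is feasible for $\D(A^k)$, we have $v(B_{i,j}) \ge 0$ for every $(i,j)$, so $B \in R^{n \times n}$ is proper; moreover, in the split setting the representative set $Q$ coincides with the coefficient skew subfield $K$, so the ``constant term'' of the $\pi$-adic expansion of $B$ is literally $A^\# \in K^{n \times n}$. In other words, $A^\#$ plays exactly the role of $B_0$ in \cref{prop:representation_on_valuation_ring_matrix}.

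Next I would compute $\vdet(B)$ in two different ways. On one hand, applying~\ref{item:VD1} and~\ref{item:VD3} to the product $B = D(\pi^{-p})A^k D(\pi^{-q})$ gives
\begin{align}
  \vdet(B) = \vdet(A^k) - \sum_{i=1}^n p_i - \sum_{j=1}^n q_j.
\end{align}
On the other hand, since $(p,q)$ is an optimal dual solution, strong LP duality for the weighted bipartite matching problem yields $\sum_i p_i + \sum_j q_j = \tvdet(A^k)$. Combining these two identities, $\vdet(B) = \vdet(A^k) - \tvdet(A^k)$, so $A^k$ is upper-tight precisely when $\vdet(B) = 0$.

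To conclude, I would invoke \cref{prop:biproper_equivalence}: for the proper matrix $B$, the condition $\vdet(B) = 0$ is equivalent to the nonsingularity of $\phi(B_0) = \phi(A^\#)$, where $\phi$ is the natural homomorphism from $R$ to the residue skew field $K$. Since $A^\#$ already lies in $K^{n \times n}$ (the coefficient skew subfield, identified with the residue skew field under $\phi$), $\phi$ acts as the identity on its entries, and so $\phi(A^\#)$ is nonsingular if and only if $A^\#$ itself is nonsingular over $K$. Chaining the equivalences then proves the proposition.

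I expect no serious obstacle; the only subtle point is the identification of $A^\#$ with the representative-set part $B_0$ that appears in \cref{prop:biproper_equivalence}. This is where the split hypothesis on $F$ is used essentially, because without $Q$ being a subring of $R$ one could not view the leading coefficient of $B$ directly as a matrix over $K$ and apply the biproperness characterization entrywise.
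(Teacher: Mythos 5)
Your proposal is correct and follows essentially the same route as the paper: both reduce to the identity $\vdet(B) = \vdet\prn[\big]{A^k} - \tvdet\prn[\big]{A^k}$ for the normalized matrix $B = D\prn{\pi^{-p}}A^k D\prn{\pi^{-q}}$ (via dual optimality) and then invoke \cref{prop:biproper_equivalence} to equate $\vdet(B) = 0$ with the nonsingularity of the leading coefficient $A^\#$. Your additional care in identifying $A^\#$ with the matrix $B_0$ of \cref{prop:representation_on_valuation_ring_matrix} via the split hypothesis is a useful elaboration of a step the paper leaves implicit, but it is the same argument.
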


\begin{proof}
  Since $\vdet(B) = \vdet\prn[\big]{A^k} - \tvdet\prn[\big]{A^k}$, the matrix $A$ is upper-tight if and only if $\vdet(C) = 0$.
  This is equivalent to the nonsingularity of $A^\#$ by \cref{prop:biproper_equivalence}.
\end{proof}

By \cref{prop:tcf_upper-tight}, we can check the upper-tightness of $A^k$ just by checking the nonsingularity of $A^\#$.

Modification in \ref{item:Phase3} is as follows.
Suppose that $A^k$ is not upper-tight.
Since the tight coefficient matrix $A^\#$ with respect to an integral dual optimal solution $(p, q)$ is singular by \cref{prop:tcf_upper-tight}, there exists $U \in \GL_n(K)$ such that
\begin{align}\label{eq:condition-of-U}
  \trank UA^\# = \rank UA^\# = \rank A^\# < n.
\end{align}
This $U$ can be obtained by the Gaussian elimination applied to $A^\#$.
We put $A^{k+1} \defeq U'A^k$, where $U' \defeq D\prn{\pi^p}UD\prn{\pi^{-p}}$.

\begin{lemma}\label{lem:update-A}
  It holds $\vdet\prn[\big]{A^k} = \vdet\prn[\big]{A^{k+1}}$ and $\tvdet\prn[\big]{A^k} < \tvdet\prn[\big]{A^{k+1}}$.
\end{lemma}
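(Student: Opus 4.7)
The plan is to prove the two assertions separately by exploiting the block structure $U' = D(\pi^p) U D(\pi^{-p})$ together with the identity $A^k = D(\pi^p) B D(\pi^q)$, which gives the convenient factorization $A^{k+1} = D(\pi^p)(UB)D(\pi^q)$.

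For the equation $\vdet(A^k) = \vdet(A^{k+1})$, I would apply \ref{item:VD1} to $A^{k+1} = U' A^k$ and show $\vdet(U') = 0$. Since $\vdet(D(\pi^{\pm p})) = \pm \sum_i p_i$ by \ref{item:VD3}, it suffices to prove $\vdet(U) = 0$. This follows from \cref{prop:biproper_equivalence}: $U \in \GL_n(K)$ and $U^{-1} \in \GL_n(K)$ both have entries in $K \subseteq R$, so $U$ is biproper.

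For the strict inequality $\tvdet(A^k) < \tvdet(A^{k+1})$, I would pass through the matrix $UB$. Because $v\bigl((A^{k+1})_{i,j}\bigr) = p_i + v\bigl((UB)_{i,j}\bigr) + q_j$ and $(A^{k+1})_{i,j} = 0$ iff $(UB)_{i,j} = 0$, the bipartite graphs $G(A^{k+1})$ and $G(UB)$ coincide as graphs, and the weight of any perfect matching in $G(A^{k+1})$ differs from its weight in $G(UB)$ by the constant $\sum_i p_i + \sum_j q_j$. Hence
\begin{equation}
  \tvdet(A^{k+1}) = \sum_{i=1}^n p_i + \sum_{j=1}^n q_j + \tvdet(UB).
\end{equation}
On the other hand, by LP duality (\cref{sec:weighted-bipartite-matching}) and the optimality of $(p,q)$, the right-hand constant equals $\tvdet(A^k)$, so $\tvdet(A^{k+1}) - \tvdet(A^k) = \tvdet(UB)$.

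It therefore remains to verify $\tvdet(UB) \geq 1$. Since $B \in R^{n \times n}$ (feasibility of $(p,q)$) and $U \in \GL_n(K) \subseteq \GL_n(R)$, all entries of $UB$ lie in $R$, so every edge of $G(UB)$ has nonnegative integer weight, and $\tvdet(UB) \geq 0$. The key observation is that the $\pi^0$-coefficient of $UB$ is exactly $UA^\#$, so an edge $(i,j)$ of $G(UB)$ has weight $0$ precisely when $(UA^\#)_{i,j} \neq 0$. If $\tvdet(UB) = 0$, then $G(UB)$ would admit a perfect matching using only weight-$0$ edges, which is a perfect matching of $G(UA^\#)$, contradicting $\trank(UA^\#) < n$ from~\eqref{eq:condition-of-U}. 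Hence $\tvdet(UB) \geq 1$, completing the proof. The main technical point requiring care is merely the clean bookkeeping of how the diagonal conjugations shift weights uniformly so that $\tvdet$ decomposes additively; once this is in place, the argument reduces to the elementary observation above.
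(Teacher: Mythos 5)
Your proposal is correct and follows essentially the same route as the paper: the factorization $A^{k+1} = D(\pi^p)(UB)D(\pi^q)$, the fact that $\vdet(U')=0$, and the observation that the $\pi^0$-coefficient of $UB$ is $UA^\#$ with $\trank UA^\# < n$. The only cosmetic difference is that you unfold the complementarity theorem (\cref{prop:bipartite-matching-complementarity}, used in the paper via \cref{prop:tcf_trank_optimal} to say $(p,q)$ is feasible but not optimal for $\D\prn[\big]{A^{k+1}}$) into a direct argument that a weight-zero perfect matching of $G(UB)$ would force $\trank UA^\# = n$.
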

\begin{proof}
  We have
  \begin{align}
    \vdet(U')
    = \vdet\prn{D\prn{\pi^p}} + \vdet(U) + \vdet\prn{D\prn{\pi^{-p}}}
    = \vdet(U)
    = 0
  \end{align}
  and hence $\vdet\prn[\big]{A^k} = \vdet\prn[\big]{A^{k+1}}$.

  To prove $\tvdet\prn[\big]{A^k} < \tvdet\prn[\big]{A^{k+1}}$, it suffices to show that $(p, q)$ is feasible but not optimal to $\D\prn[\big]{A^{k+1}}$.
  We first show the feasibility.
  Using $B$ defined by~\eqref{eq:properlized_A}, we can rewrite $A^{k+1}$ as
  \begin{gather}
    A^{k+1}
    = U'A^k
    = D\prn{\pi^p}UD\prn{\pi^{-p}}D\prn{\pi^p}BD\prn{\pi^{q}}
    = D\prn{\pi^p}CD\prn{\pi^{q}},\label{eq:bar_A_U_B}
  \shortintertext{where}
    C \defeq UB.\label{def:C}
  \end{gather}
  Since $U, B \in R^{n \times n}$, the matrix $C$ is also over $R$.
  Thus we have $v\prn[\big]{A_{i,j}^{k+1}} \ge p_i + q_j$.
  Hence $(p, q)$ is feasible to $\D\prn[\big]{A^{k+1}}$.

  By~\eqref{eq:bar_A_U_B}, the tight coefficient matrix of $A^{k+1}$ with respect to $(p, q)$ is $UA^\#$.
  Therefore, by \cref{prop:tcf_trank_optimal}, $(p, q)$ is not optimal to $\D\prn[\big]{A^{k+1}}$.
\end{proof}

\subsection{Improved Algorithm}\label{sec:improved-algorithm}

To compute $A^{k+1}$ in \ref{item:Phase3}, we need to multiply $D(\pi^{-p})$, $U$, and $D(\pi^p)$ in this order from left to $A^k$.
This operation includes the computation of the coefficients in the $\pi$-adic expansion of $\pi^{-1} a$ for $a \in R$.
This, however, is impossible for the computational model assumed in \cref{sec:computational-model} because the oracle of computing the inverse of $\delta_0$ is needed.

To avoid left-multiplying $\pi^{-1}$, we slightly improve the above faithful procedure of combinatorial relaxation.
The improved algorithm does not modify the input matrix $A$.
Instead, the algorithm keeps track of $\tvdet\prn[\big]{A^k}$ and the matrix $C \in R^{n \times n}$ defined by~\eqref{def:C}.
The improved algorithm is outlined as follows.

\paragraph{Improved Combinatorial Relaxation Algorithm over DVSFs}
\begin{enumerate}[label={\upshape{\textbf{Phase~\arabic*b.}}},ref={Phase~\arabic*b}]
  \setcounter{enumi}{-1}
  \item\label{item:Phase0_C}
    Set $\gamma^0 \defeq 0$, $C^0 \defeq A$, and $k \gets 0$.
  \item\label{item:Phase1_C}
    Compute an integral optimal solution $(\Delta p, \Delta q)$ of $\D\prn[\big]{C^k}$ such that $\Delta p$ is nonpositive.
    Set $\gamma^{k+1} \defeq \gamma^k + \tvdet\prn[\big]{C^k}$.
    If $\gamma^{k+1} > M$, report $\vdet(A) = +\infty$ and halt.
    Set
    \[
      B^{k+1} \defeq D\prn[\big]{\pi^{-\Delta p}} C^k D\prn[\big]{\pi^{-\Delta q}}.\label{def:B_k_plus_1}
    \]
  \item\label{item:Phase2_C}
    If the coefficient matrix $A^\# \defeq B_0^{k+1}$ of $\pi^0$ in the $\pi$-adic expansion of $B^{k+1}$ is nonsingular, report $\vdet(A) = \gamma^{k+1}$ and halt.
  \item\label{item:Phase3_C}
    Take $U \in \GL_n(K)$ satisfying~\eqref{eq:condition-of-U} and set $C^{k+1} \defeq UB^{k+1}$.
    Put $k \gets k + 1$ and go back to \ref{item:Phase1_C}.
\end{enumerate}

The validity of the improved algorithm is guaranteed by the following lemma.
We denote by $\Pi(p, q)$ the objective function of the dual of the bipartite matching problem, i.e.,
\begin{align}
  \Pi(p, q) \defeq \sum_{i=1}^n p_i + \sum_{j=1}^n q_j.
\end{align}

\begin{lemma}\label{lem:correspondence-of-original-and-improved-combrel}
  For $k \ge 1$, we have $\gamma^k = \Pi(p, q)$ and $B^k = D\prn{\pi^{-p}} A^k D\prn{\pi^{-q}}$ for some integral optimal solution $(p, q)$ of $\D\prn[\big]{A^k}$.
\end{lemma}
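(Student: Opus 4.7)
The plan is to prove the lemma by induction on $k$, showing that the runs of the improved algorithm and of the faithful algorithm can be aligned so that the factorization $B^k = D(\pi^{-p}) A^k D(\pi^{-q})$ holds with $(p, q)$ being a cumulative dual solution obtained by summing the ``incremental'' potentials $(\Delta p, \Delta q)$ computed at each iteration.

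For the base case $k = 1$, the initializations $C^0 = A = A^1$ in \ref{item:Phase0_C} and \ref{item:Phase0} imply $\D(C^0) = \D(A^1)$. Hence the integral optimal $(\Delta p, \Delta q)$ of $\D(C^0)$ selected in \ref{item:Phase1_C} also serves as an integral optimal solution of $\D(A^1)$. Setting $p \defeq \Delta p$ and $q \defeq \Delta q$, the definitions immediately yield $\gamma^1 = \gamma^0 + \tvdet(C^0) = \Pi(p, q)$ and $B^1 = D(\pi^{-p}) A^1 D(\pi^{-q})$.

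For the inductive step, assume the claim for $k$ with some integral optimal $(p, q)$ of $\D(A^k)$. The key observation is that, under the factorization $B^k = D(\pi^{-p}) A^k D(\pi^{-q})$, the constant term $B^k_0 = A^\#$ coincides with the tight coefficient matrix of $A^k$ with respect to $(p, q)$ in the sense of~\eqref{eq:properlized_A}. Consequently, the matrix $U \in \GL_n(K)$ chosen in \ref{item:Phase3_C} is a valid choice for the faithful algorithm's step \ref{item:Phase3}, and we may define $A^{k+1} \defeq D(\pi^p) U D(\pi^{-p}) A^k$ according to that algorithm. A direct computation using the inductive hypothesis gives $A^{k+1} = D(\pi^p) U B^k D(\pi^q) = D(\pi^p) C^k D(\pi^q)$, where $C^k = U B^k$.

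Let $(\Delta p, \Delta q)$ be the integral optimal of $\D(C^k)$ chosen in the subsequent iteration of \ref{item:Phase1_C}, and set $p' \defeq p + \Delta p$, $q' \defeq q + \Delta q$. From $A^{k+1}_{i,j} = \pi^{p_i} C^k_{i,j} \pi^{q_j}$ we see that $G(A^{k+1}) = G(C^k)$ and that the corresponding edge weights differ by the additive shift $p_i + q_j$. It follows that $(p', q')$ is integral and feasible for $\D(A^{k+1})$, and that $\tvdet(A^{k+1}) = \tvdet(C^k) + \Pi(p, q) = \Pi(p', q')$, whence $(p', q')$ is actually optimal. Finally, $\gamma^{k+1} = \gamma^k + \tvdet(C^k) = \Pi(p', q')$ and $B^{k+1} = D(\pi^{-\Delta p}) C^k D(\pi^{-\Delta q}) = D(\pi^{-p'}) A^{k+1} D(\pi^{-q'})$, completing the induction. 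The main obstacle is the nonuniqueness of $(p, q)$ and $U$ in both algorithms; the proof circumvents it by showing that the choices made by the improved algorithm can always be lifted to consistent choices of the faithful algorithm, using that $B^k_0$ is precisely the tight coefficient matrix the faithful algorithm would have computed from the same $(p, q)$.
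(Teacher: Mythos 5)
Your proposal is correct and follows essentially the same route as the paper's proof: induction on $k$, identifying $B^k_0$ with the tight coefficient matrix of $A^k$ with respect to $(p,q)$, lifting the choice of $U$ to define $A^{k+1}$ in the faithful algorithm, and using the additive shift of edge weights to transfer optimality from $(\Delta p,\Delta q)$ for $\D\prn[\big]{C^k}$ to $(p+\Delta p,\, q+\Delta q)$ for $\D\prn[\big]{A^{k+1}}$. The only cosmetic difference is that you certify optimality of $(p',q')$ by matching its objective value to the primal optimum, whereas the paper phrases it as an equivalence of optimality under the weight shift; these are the same argument.
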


\begin{proof}
  We show the claim by induction on $k$.
  The claim is clear when $k = 1$.
  Suppose that the claim holds for some $k \ge 1$.
  By the inductive assumption, $A^\# \defeq B_0^k$ is the tight coefficient matrix of $A^k$ with respect to an optimal solution $(p, q)$ of $\D\prn[\big]{A^k}$.
  Let $U \in \GL_n(K)$ be a matrix satisfying~\eqref{eq:condition-of-U}.
  We have $A^{k+1} = D(\pi^p) U D(\pi^{-p}) A^{k}$ and $C^k = UB^k$.
  Let $(\Delta p, \Delta q)$ be an optimal solution of $\D\prn[\big]{C^k}$ and put $\bar{p} \defeq p + \Delta p$ and $\bar{q} \defeq q + \Delta q$.
  Then we have
  \begin{align}
    C^k = UB^k = U D\prn{\pi^{-p}} A^k D\prn{\pi^{-q}} = D\prn[\big]{\pi^{-p}} A^{k+1} D\prn{\pi^{-q}}.
  \end{align}
  This means that $G\prn[\big]{C^k} = G\prn[\big]{A^{k+1}}$ and edge weights $w_{C^k}(e)$ and $w_{A^{k+1}}(e)$ for $e = \set{i, j} \in E\prn[\big]{C^k} = E\prn[\big]{A^{k+1}}$ satisfy
  \begin{align}
    w_{C^k}(e) = w_{A^{k+1}}(e) - p_i - q_j
  \end{align}
  for $i, j \in \intset{n}$.
  Therefore, $(\bar{p}, \bar{q})$ is optimal to $\D\prn[\big]{A^{k+1}}$ if and only if $(\Delta p, \Delta q)$ is optimal to $\D\prn[\big]{C^k}$.
  Thus we have
  \begin{align}
    \gamma^{k+1} &= \gamma^k + \tvdet\prn[\big]{C^k} = \gamma^k + \Pi(\Delta p, \Delta q) = \Pi(\bar{p}, \bar{q})
  \shortintertext{and}
    B^{k+1}
    &= D\prn[\big]{\pi^{-\Delta p}}C^kD\prn[\big]{\pi^{-\Delta q}} \\
    &= D\prn[\big]{\pi^{-\Delta p}} D\prn[\big]{\pi^{-p}} A^{k+1} D\prn{\pi^{-q}} D\prn[\big]{\pi^{-\Delta q}} \\
    &= D\prn[\big]{\pi^{-\bar{p}}} A^{k+1} D\prn[\big]{\pi^{-\bar{q}}},
  \end{align}
  as required.
\end{proof}

\begin{corollary}\label{cor:validity-of-improved-algorithm}
  The improved combinatorial relaxation algorithm correctly outputs $\vdet(A)$.
\end{corollary}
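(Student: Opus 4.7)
The plan is to reduce the analysis of the improved algorithm to the faithful one via the correspondence established in \Cref{lem:correspondence-of-original-and-improved-combrel}, and then invoke the correctness ingredients already developed, namely \Cref{prop:tcf_upper-tight} on upper-tightness, \Cref{lem:update-A} on invariance of $\vdet$, and \Cref{prop:tvdet_le_vdet} on the lower bound $\tvdet \le \vdet$.

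First I would exploit \Cref{lem:correspondence-of-original-and-improved-combrel} to identify, at the start of each iteration $k \ge 1$, an integral dual-optimal solution $(p,q)$ of $\D(A^k)$ for which $\gamma^k = \Pi(p,q) = \tvdet(A^k)$ and $B^k = D(\pi^{-p}) A^k D(\pi^{-q})$, where $A^k$ is the matrix the faithful algorithm would have produced. Consequently, the coefficient matrix $A^\# = B^{k+1}_0$ tested in Phase~2b is exactly the tight coefficient matrix of $A^{k+1}$ with respect to the dual-optimal solution supplied by the lemma, so \Cref{prop:tcf_upper-tight} applies verbatim.

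Next I would handle the two halting branches. If the algorithm terminates at Phase~2b with $A^\#$ nonsingular, \Cref{prop:tcf_upper-tight} yields that $A^{k+1}$ is upper-tight, whence $\vdet(A^{k+1}) = \tvdet(A^{k+1}) = \gamma^{k+1}$. Iterating the invariance $\vdet(A^{j}) = \vdet(A^{j+1})$ from \Cref{lem:update-A} back to $A^1 = A$ gives $\vdet(A) = \gamma^{k+1}$. If instead it halts in Phase~1b with $\gamma^{k+1} > M$, then $\tvdet(A^{k+1}) > M$, so \Cref{prop:tvdet_le_vdet} combined with the $\vdet$-invariance furnishes $\vdet(A) = \vdet(A^{k+1}) \ge \tvdet(A^{k+1}) > M$; the hypothesis on $M$ forces $A$ to be singular, so reporting $+\infty$ is correct.

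Finally, termination follows because \Cref{lem:update-A} gives $\tvdet(A^{k+1}) > \tvdet(A^k)$, and since edge weights of $G(A^k)$ are integers the strict increase is by at least one; together with $\tvdet(A^1) \ge 0$ (as $A \in R^{n \times n}$), the loop halts within $M+1$ iterations. The only genuinely new point beyond the faithful analysis is the requirement in Phase~1b that $\Delta p \le 0$, which is a purely computational normalization (so that the subsequent left-multiplication by $D(\pi^{-\Delta p})$ uses only nonnegative powers of $\pi$ and never calls the unavailable $\delta_0^{-1}$); it does not affect correctness, since shifting $(\Delta p, \Delta q)$ by $(-c, +c)$ for a sufficiently large integer $c$ preserves both dual feasibility and optimality, and leaves $B^{k+1}$ unchanged up to the commuting diagonal scalings.
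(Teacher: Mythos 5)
Your proof is correct and takes essentially the same route as the paper's, whose entire proof is the one-line citation of \Cref{prop:tvdet_le_vdet,prop:tcf_upper-tight,lem:update-A,lem:correspondence-of-original-and-improved-combrel} together with the assumption on $M$; you have simply written out those details. The closing remark about shifting $(\Delta p,\Delta q)$ by $(-c,+c)$ is unnecessary and slightly imprecise (this conjugates $B^{k+1}$ by $\pi^{c}$, which need not fix it in the noncommutative setting), but it is harmless, since \Cref{lem:correspondence-of-original-and-improved-combrel} already applies to whichever integral optimal dual solution the algorithm selects.
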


\begin{proof}
  Follows from \cref{prop:tvdet_le_vdet,prop:tcf_upper-tight,lem:update-A,lem:correspondence-of-original-and-improved-combrel}, and the assumption on $M$.
\end{proof}

We require $\Delta p$ in \ref{item:Phase1_C} to be nonpositive so that we can avoid left-multiplying $\pi^{-1}$ in the computation of~\eqref{def:B_k_plus_1}.
Here we describe how we can obtain such an optimal solution $(\Delta p, \Delta q)$ of $\D\prn[\big]{C^k}$.
First, we initialize $\Delta p$ and $\Delta q$ as zero vectors, which is feasible to $\D\prn[\big]{C^k}$ as the edge weight is nonnegative.
We then iterate the following procedure.
Construct the subgraph $G^\# = \prn[\big]{\intset{n} \sqcup \intset{n}, E^\#}$ of $G\prn[\big]{C^k}$ defined by~\eqref{def:tight-edges} with respect to $(\Delta p, \Delta q)$.
If $G^\#$ has a perfect matching, then $(\Delta p, \Delta q)$ is optimal from \cref{prop:bipartite-matching-complementarity} and we are done.
Otherwise, by \cref{thm:konig}, there exists $I, J \subseteq \intset{n}$ with $\card{I} + \card{J} < n$ such that $(i, j) \in E^\#$ implies $i \in I$ or $j \in J$.
We change $(\Delta p, \Delta q)$ into $(\Delta p', \Delta q')$ by
\begin{align}
  \Delta p'_i \defeq \begin{cases}
    \Delta p_i - 1 & (i \in I), \\
    \Delta p_i     & (i \in \intset{n} \setminus I),
  \end{cases}
  \quad
  \Delta q'_j \defeq \begin{cases}
    \Delta q_j     & (j \in J), \\
    \Delta q_j + 1 & (j \in \intset{n} \setminus J).
  \end{cases}\label{update-pq}
\end{align}
Note that $\Delta p'_i \le 0$ by $\Delta p_i \le 0$ for $i \in \intset{n}$.
The following lemma is well-known:

\begin{lemma}[{\cite{Kuhn1955}}]\label{lem:delta-p-q-objective-value-decrease}
  Let $(\Delta p, \Delta q)$ be a feasible but not optimal dual solution.
  Then $(\Delta p', \Delta q')$ given by~\eqref{update-pq} is also feasible and $\Pi(\Delta p, \Delta q) < \Pi(\Delta p', \Delta q')$.
\end{lemma}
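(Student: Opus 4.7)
My plan is to verify the two assertions directly by a case analysis on the four possible locations of an edge $e = \{i,j\}$ relative to the cover $(I,J)$, together with a counting argument for the objective value.

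For feasibility, I would fix any edge $e = \{i,j\} \in E\prn[\big]{C^k}$ and check $\Delta p'_i + \Delta q'_j \le w(e)$ in each of the four cases determined by whether $i \in I$ and whether $j \in J$. The cases $(i \in I, j \in J)$, $(i \in I, j \notin J)$, and $(i \notin I, j \in J)$ are immediate from the definition~\eqref{update-pq}: the new value of $\Delta p'_i + \Delta q'_j$ equals $\Delta p_i + \Delta q_j - 1$, $\Delta p_i + \Delta q_j$, and $\Delta p_i + \Delta q_j$ respectively, each of which is $\le w(e)$ because $(\Delta p, \Delta q)$ is feasible. The only nontrivial case is $(i \notin I, j \notin J)$, where $\Delta p'_i + \Delta q'_j = \Delta p_i + \Delta q_j + 1$. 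Here I would invoke the hypothesis that $(I, J)$ is a vertex cover of the tight subgraph $G^\#$ defined in~\eqref{def:tight-edges}: since neither $i$ nor $j$ is in the cover, we must have $e \notin E^\#$, hence $\Delta p_i + \Delta q_j < w(e)$. Integrality of all quantities involved then upgrades this strict inequality to $\Delta p_i + \Delta q_j + 1 \le w(e)$, which is exactly what is needed.

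For the objective value, I would compute the change directly:
\begin{align}
  \Pi(\Delta p', \Delta q') - \Pi(\Delta p, \Delta q)
  = \sum_{i \in I} (-1) + \sum_{j \in \intset{n} \setminus J} (+1)
  = (n - \card{J}) - \card{I}
  = n - \card{I} - \card{J}.
\end{align}
Since by construction $\card{I} + \card{J} < n$, the right-hand side is strictly positive, giving the desired strict increase.

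The only nontrivial ingredient is the upgrade from strict to nonstrict inequality in the $(i \notin I, j \notin J)$ case, which crucially relies on two facts stated earlier in the excerpt: edge weights take values $v(C^k_{i,j}) \in \setZ$ since $C^k$ is a matrix over $R$ and the valuation is discrete, and $(\Delta p, \Delta q)$ is integral throughout the iteration (initialized at zero and updated by $\pm 1$). Everything else is a routine bookkeeping argument, so I expect no real obstacle beyond making the case analysis explicit.
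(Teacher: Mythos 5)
Your proof is correct and complete: the four-way case analysis for feasibility (with the integrality of the weights $v(C^k_{i,j})$ and of $(\Delta p,\Delta q)$ used exactly where needed, in the uncovered case), and the count $\Pi(\Delta p',\Delta q') - \Pi(\Delta p,\Delta q) = n - \card{I} - \card{J} > 0$, are precisely the standard Hungarian-method argument. The paper itself gives no proof, delegating the lemma to the citation of Kuhn, so your write-up simply supplies the classical argument being referenced.
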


By \cref{lem:delta-p-q-objective-value-decrease}, the updated $(\Delta p, \Delta q)$ is an improved feasible solution of $\D\prn[\big]{C^k}$.
If $\gamma^k + \Pi(\Delta p, \Delta q) > M$, then report $\vdet(A) = +\infty$ and halt immediately.
Otherwise, go back to the construction of $G^\#$ with respect to the updated $(\Delta p, \Delta q)$.

One more implementation issue on computing~\eqref{def:B_k_plus_1} is left: since the $\pi$-adic expansions of entries in $B^{k+1}$ might have infinitely many terms, we cannot store all of them.
We thus truncate higher-valuation terms relying on \cref{prop:truncation}.
Let
\begin{align}\label{def:tilde_B_k_plus_1}
  \tilde{B}^{k+1} \defeq \sum_{d=0}^{M - \gamma^{k+1}} B^{k+1}_d \pi^d,
\end{align}
where $B^{k+1}_d \in K^{n \times n}$ is the coefficient matrix of $\pi^d$ in the $\pi$-adic expansion of $B^{k+1}$ for $d \in \setN$.
We replace $B^{k+1}$ with $\tilde{B}^{k+1}$ in \ref{item:Phase1_C}.
This operation is called the \emph{truncation}.

\begin{lemma}\label{lem:validity-of-truncation}
  The improved algorithm returns $\vdet(A)$ even if the above truncation procedure is executed.
\end{lemma}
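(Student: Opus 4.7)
The plan is to prove, by induction on $k$, the invariant
$\tilde{C}^k = \bigl[C^k\bigr]_{\le M - \gamma^k}$ and $\tilde{B}^{k+1} = \bigl[B^{k+1}\bigr]_{\le M - \gamma^{k+1}}$,
where $[\cdot]_{\le N}$ denotes truncation of the $\pi$-adic expansion beyond degree~$N$, and $C^k, B^{k+1}$ are the iterates of a hypothetical non-truncated run that makes exactly the same choices of optimal dual solution $(\Delta p, \Delta q)$ and elimination matrix $U$ as the truncated run at every step. Granted the invariant, the leading coefficient $\tilde{B}^{k+1}_0 = B^{k+1}_0$ is preserved whenever $\gamma^{k+1} \le M$ (that is, whenever the algorithm has not halted), so the nonsingularity check in \cref{item:Phase2_C} agrees with the non-truncated run and the same $U$ serves both. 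Moreover, since edge weights of $G(\tilde C^k)$ and $G(C^k)$ coincide on entries of valuation at most $M - \gamma^k$, the $\tvdet$-values, the halt condition $\gamma^{k+1} > M$, and the optimality of the Kuhn--Munkres dual all transfer between the two runs: any edge of $G(C^k)$ absent from $G(\tilde C^k)$ has weight $> M - \gamma^k \ge \tvdet(C^k) \ge \Delta p_i + \Delta q_j$, so the extra feasibility constraints are automatic. The parallel non-truncated run is therefore a legitimate execution, and \cref{cor:validity-of-improved-algorithm} gives that it outputs $\vdet(A)$, which is then also the output of the truncated run.

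The crux of the induction is the update $\tilde{C}^k \to \tilde{B}^{k+1}$. By definition, $\tilde{B}^{k+1} = \bigl[D(\pi^{-\Delta p}) \tilde{C}^k D(\pi^{-\Delta q})\bigr]_{\le M - \gamma^{k+1}}$, so it equals $\bigl[B^{k+1}\bigr]_{\le M - \gamma^{k+1}}$ provided every entry of
\[
  B^{k+1} - D\bigl(\pi^{-\Delta p}\bigr) \tilde{C}^k D\bigl(\pi^{-\Delta q}\bigr) = D\bigl(\pi^{-\Delta p}\bigr)\bigl(C^k - \tilde{C}^k\bigr) D\bigl(\pi^{-\Delta q}\bigr)
\]
has valuation at least $M - \gamma^{k+1} + 1$. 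Since $\bigl(C^k - \tilde{C}^k\bigr)_{ij}$ has valuation $\ge M - \gamma^k + 1$ by the inductive hypothesis and $v\bigl(\pi^{-\Delta p_i} x \pi^{-\Delta q_j}\bigr) = -\Delta p_i + v(x) - \Delta q_j$ for $x \in F$, this reduces to the entrywise inequality
\[
  \Delta p_i + \Delta q_j \le \tvdet\bigl(C^k\bigr) = \gamma^{k+1} - \gamma^k \qquad (i, j \in \intset{n}).
\]
This estimate is the main obstacle, since an arbitrary optimal dual solution need not satisfy it.

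The inequality will be obtained by tracking the Kuhn--Munkres-style procedure described before \cref{lem:delta-p-q-objective-value-decrease}. That procedure starts from $(\Delta p, \Delta q) = (0, 0)$, feasible because every entry of $C^k$ lies in~$R$; each update decreases some components of $\Delta p$ by~$1$ and increases some components of $\Delta q$ by~$1$, raising the dual objective $\Pi$ by at least~$1$. Since $\Pi$ terminates at $\tvdet(C^k)$, the total number of updates is at most $\tvdet(C^k)$, so $-\Delta p_i \le \tvdet(C^k)$ and $\Delta q_j \le \tvdet(C^k)$ for every $i, j$; together with $\Delta p_i \le 0$ this yields the desired bound. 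The second half of the invariant, $\tilde{C}^{k+1} = \bigl[C^{k+1}\bigr]_{\le M - \gamma^{k+1}}$, then follows from $\tilde{C}^{k+1} = U \tilde{B}^{k+1}$ and the fact that left-multiplication by $U \in K^{n \times n}$ commutes with truncation, closing the induction.
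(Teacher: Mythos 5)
Your proof is correct in substance but takes a genuinely different route from the paper's. The paper argues by a ``restart'': it views the computation after the truncation as a fresh run of the algorithm on input $\tilde{B}^{k+1}$ (so \cref{cor:validity-of-improved-algorithm} says the suffix of the run outputs $\vdet\prn[\big]{\tilde{B}^{k+1}} + \gamma^{k+1}$, or $+\infty$), and then invokes \cref{prop:truncation} with threshold $M - \gamma^{k+1}$ to conclude $\vdet\prn[\big]{\tilde{B}^{k+1}} = \vdet\prn[\big]{B^{k+1}}$ in the finite case and that both exceed the threshold in the infinite case; the discarded terms are irrelevant because $\vdet$ below the bound is determined by the residue modulo ${J(R)}^{M-\gamma^{k+1}+1}$, not because they never enter the computation. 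You instead couple the truncated run with an untruncated run making the same choices and show the two executions are \emph{literally identical} step by step, which requires the quantitative estimate $\Delta p_i + \Delta q_j \le \tvdet\prn[\big]{C^k}$; your derivation of it from the Hungarian-style dual updates (at most $\tvdet\prn[\big]{C^k}$ updates, each raising $\Delta q_j$ by at most $1$, with $\Delta p_i \le 0$) is valid, but note it depends on that particular construction of the optimal dual, whereas the paper's argument is insensitive to how $(\Delta p, \Delta q)$ is found. Your approach buys a stronger conclusion (identical intermediate states, no appeal to \cref{prop:truncation}) at the cost of more bookkeeping; the one place you compress too much is the transfer of the halt condition: you should spell out that $\tvdet\prn[\big]{\tilde{C}^k} = \tvdet\prn[\big]{C^k}$ whenever $\tvdet\prn[\big]{C^k} \le M - \gamma^k$ (a minimum-weight perfect matching of total weight at most $M-\gamma^k$ with nonnegative weights uses only surviving edges), so that the two runs halt with $+\infty$ at the same moment. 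With that sentence added, the argument is complete.
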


\begin{proof}
  We assume that the truncation is executed only at the $k$th iteration; the general statement follows from this by induction.
  From \cref{cor:validity-of-improved-algorithm}, this algorithm outputs $\vdet\prn[\big]{\tilde{B}^{k+1}} + \gamma^{k+1}$ if $\vdet\prn[\big]{\tilde{B}^{k+1}} + \gamma^{k+1} \le M$ and $+\infty$ otherwise.

  Suppose $\vdet(A) < M$.
  Since $\vdet(A) = \vdet\prn[\big]{C^{k+1}} + \gamma^{k+1} = \vdet\prn[\big]{B^{k+1}} + \gamma^{k+1}$ by \cref{lem:correspondence-of-original-and-improved-combrel}, it holds $\vdet\prn[\big]{B^{k+1}} \le M - \gamma^{k+1}$.
  This means $\vdet\prn[\big]{B^{k+1}} = \vdet\prn[\big]{\tilde{B}^{k+1}}$ by \cref{prop:truncation}.
  Thus, the output of the improved algorithm with truncation coincides with $\vdet(A)$.
  Conversely, suppose $\vdet(A) = +\infty$.
  Then we have $\vdet\prn[\big]{B^{k+1}} = +\infty > M - \gamma^{k+1}$, which implies $\vdet\prn[\big]{\tilde{B}^{k+1}} > M - \gamma^{k+1}$ by \cref{prop:truncation} again.
  Thus, the improved algorithm with truncation outputs $+\infty$.
\end{proof}

Now the first half of \cref{thm:complexity-of-combinatorial-relaxation} is proved as follows.
Recall that $\omega$ denotes the exponent in the time complexity to multiply two matrices over $K$.

\begin{proof}[{of the first half of \cref{thm:complexity-of-combinatorial-relaxation}}]
  The validity of the algorithm follows from \cref{lem:validity-of-truncation}.
  We analyze the running time.

  Suppose that the algorithm is implemented in a way that $C \in R^{n \times n}$ and $\gamma \in \setN$ is updated repeatedly.
  Let $m$ be the number of times the algorithm applied~\eqref{update-pq} in total.
  We have $m \le M$ because one application of~\eqref{update-pq} increases $\gamma$ at least by $1$.
  In each application, we solve the bipartite matching problem, which can be solved in $\Order\prn[\big]{n^{2.5}}$-time by the Hopcroft--Karp algorithm~\cite{Hopcroft1973}.
  Thus the total time complexity of this part is $\Order\prn[\big]{mn^{2.5}} = \Order\prn[\big]{Mn^{2.5}}$.

  For every $i, j \in \intset{n}$, the $(i, j)$th entry in $C$ is multiplied by $\pi$ from left at most $m$ times because one application of~\eqref{update-pq} increases $\Delta p_i$ by at most $1$.
  We compute the leading $\Order(M)$ coefficients in the $\pi$-adic expansion of each entry in $\pi C$.
  This can be done in $\Order(M^2)$-time by~\eqref{eq:coefficient-in-pi-a}.
  Since $C$ has $n^2$ entries, the total running time of this process is $\Order\prn{mM^2n^2} = \Order\prn{M^3n^2}$.

  Matrix computations in \ref{item:Phase2_C} and \ref{item:Phase3_C} can be done in $\Order\prn{Mn^\omega}$-time per each iteration as $B^{k+1}$ contains $\Order(M)$ terms due to the truncation.
  Summing it over $\Order(M)$ iterations, we obtain $\Order\prn{M^2n^\omega}$-time in total.
  Thus the desired time complexity is attained.
\end{proof}

\section{Matrix Expansion Algorithm}\label{sec:matrix-expansion-algorithm}
Let $F$ be a split DVSF with valuation $v$, uniformizer $\pi$, valuation ring $R$, coefficient skew subfield $K$, and associated higher $\delta_0$-derivations $(\delta_d)_{d \in \setN}$.
Let $A = \prn{A_{i,j}} \in F^{n \times n}$ be a square matrix given as the $\pi$-adic expansion~\eqref{eq:input_form} and suppose that $\vdet(A) \le M$ or $\vdet(A) = +\infty$.
This section describes the matrix expansion algorithm for computing $\vdet(A)$.

\subsection{Expanded Matrices}\label{sec:matrix_expansion}
For $i, d \in \setN$, let $A^{(i)}_d \in K^{n \times n}$ denote the coefficient matrix of $\pi^d$ in the $\pi$-adic expansion of $\pi^i A$.
Namely, for $i \in \setN$, the matrix $\pi^i A$ is written as
\begin{align}
  \pi^i A = \sum_{d=0}^{\infty} A^{(i)}_{d} \pi^d.
\end{align}
Note that $A_d^{(i)} = O$ for $d < i$ as the valuations of entries in $\pi^i A$ are at least $i$.
For $\mu \in \setN$, we define the \emph{$\mu$th-order expanded matrix} $\Omega_\mu(A)$ of $A$ as the following $\mu n \times \mu n$ block matrix
\begin{align}\label{def:expanded-matrix}
  \Omega_\mu(A) \defeq
  \begin{pNiceMatrix}
    A_0^{(0)} & A_1^{(0)} & \Cdots    & \Cdots    & \Cdots              & A_{\mu-1}^{(0)}     \\
    O         & A_1^{(1)} & A_2^{(1)} & \Cdots    & \Cdots              & A_{\mu-1}^{(1)}     \\
    \Vdots    & \Ddots    & \Ddots    & \Ddots    &                     & \Vdots              \\
    \Vdots    &           & \Ddots    & \Ddots    & \Ddots              & \Vdots              \\
    \Vdots    &           &           & \Ddots    & A_{\mu-2}^{(\mu-2)} & A_{\mu-1}^{(\mu-2)} \\
    O         & \Cdots    & \Cdots    & \Cdots    & O                   & A_{\mu-1}^{(\mu-1)}
  \end{pNiceMatrix}
  \in K^{\mu n \times \mu n}.
\end{align}
Expanded matrices satisfy the multiplicativity as follows (see also~\cite[Section~1.2]{Dumas1992}).
This is an extension of the result in~\cite{Vandooren1979} for rational function matrices over $\setC$.

\begin{lemma}\label{lem:omega_hom}
  Let $A \in R^{n \times n}$ and $B \in R^{n \times n}$ be matrices over a split DVR $R$.
  Then it holds
  \begin{align}
    \Omega_\mu(AB) = \Omega_\mu(A) \Omega_\mu(B)
  \end{align}
  for $\mu \in \setN$.
\end{lemma}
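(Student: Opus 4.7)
The plan is to compute the $(i,j)$-block of both sides of the claimed identity and verify they agree via a straightforward index manipulation that relies on the ``triangularity'' of the expansion.

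First I would observe that by the definition of $A_d^{(i)}$, one has $A_d^{(i)} = O$ whenever $d < i$, since every entry of $\pi^i A$ has valuation at least $i$. The same applies to $B_d^{(i)}$. This is the content behind the upper-triangular block pattern in the definition of $\Omega_\mu$.

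Next I would compute the $(i,j)$-block of $\Omega_\mu(AB)$ directly. By associativity we have $\pi^i(AB) = (\pi^i A) B$, and substituting the $\pi$-adic expansions of $\pi^i A$ and of $\pi^d B$ gives
\begin{align}
  \pi^i(AB)
  = \sum_{d \ge i} A_d^{(i)} \pi^d B
  = \sum_{d \ge i} A_d^{(i)} \sum_{e \ge d} B_e^{(d)} \pi^e
  = \sum_{e \ge i} \Biggl( \sum_{d=i}^{e} A_d^{(i)} B_e^{(d)} \Biggr) \pi^e.
\end{align}
By the uniqueness of $\pi$-adic expansions (\cref{prop:dvsf_as_power_series_matrix}), the coefficient of $\pi^j$ in $\pi^i(AB)$ is therefore $(AB)_j^{(i)} = \sum_{d=i}^{j} A_d^{(i)} B_j^{(d)}$ for $i \le j$, and is $O$ for $j < i$.

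Then I would compute the $(i,j)$-block of $\Omega_\mu(A) \Omega_\mu(B)$, which by the rule of block matrix multiplication equals $\sum_{k=0}^{\mu-1} A_k^{(i)} B_j^{(k)}$. Using $A_k^{(i)} = O$ for $k < i$ and $B_j^{(k)} = O$ for $k > j$, the range of nonzero summands collapses to $i \le k \le j$, giving exactly the same expression as $(AB)_j^{(i)}$. This matches what we computed for $\Omega_\mu(AB)$, so the two matrices agree block by block.

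There is no real obstacle here: the only subtlety is justifying the interchange of the two infinite sums in the computation of $\pi^i(AB)$, which is legitimate because, for each fixed $e$, only finitely many pairs $(d,e)$ with $i \le d \le e$ contribute, so the rearrangement takes place inside the $\pi$-adic topology on the completion of $R$ and causes no convergence issue.
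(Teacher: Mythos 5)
Your proof is correct and follows essentially the same route as the paper's: substitute the expansion of $\pi^d B$ into that of $\pi^i A$, rearrange, and identify the coefficient of $\pi^j$ with the $(i,j)$-block of the product of expanded matrices, using $B_j^{(d)} = O$ for $d > j$ (and $A_d^{(i)} = O$ for $d < i$) to truncate the sum to the common range. Your explicit remark justifying the interchange of sums is a small bonus the paper leaves implicit.
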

\begin{proof}
  Fix $i \in \intset{0, \mu - 1}$ and let $\pi^i A = \sum_{d=0}^{\infty} A_d^{(i)} \pi^d$ be the $\pi$-adic expansion of $\pi^i A$, where $\pi$ is a uniformizer of $R$.
  Similarly, for $d \in \intset{0, \mu - 1}$, let $\pi^d B = \sum_{j=0}^{\infty} B_j^{(d)} \pi^j$ be the $\pi$-adic expansion of $\pi^d B$.
  Then it holds
  \begin{align}
    \pi^i  AB
    = \prn{\sum_{d=0}^{\infty} A_d^{(i)} \pi^d} B
    = \sum_{d=0}^{\infty} A_d^{(i)} \prn{\sum_{j=0}^{\infty} B_j^{(d)} \pi^j}
    = \sum_{j=0}^{\infty} \prn{\sum_{d=0}^{j} A_d^{(i)} B_j^{(d)}} \pi^j, \label{eq:sd_AB}
  \end{align}
  where the inner sum of the last term stops at $d = j$ by $B_j^{(d)} = O$ for $j < d$.
  The equality~\eqref{eq:sd_AB} implies that the coefficient matrix of $\pi^j$ in the $\pi$-adic expansion of $\pi^i AB$ is
  \begin{align}
    \sum_{d=0}^j A_d^{(i)} B_j^{(d)} = \sum_{d=0}^{\mu-1} A_d^{(i)} B_j^{(d)}
  \end{align}
  for $j < \mu$, which is equal to the $(i+1, j+1)$st entry of $\Omega_\mu(A) \Omega_\mu(B)$.
\end{proof}

Let $\omega_\mu(A)$ denote the rank of $\Omega_\mu(A)$.
The following lemma claims that $\omega_\mu(A)$ coincides with that of the Smith--McMillan form (see \cref{prop:dvsf_smith_mcmillan}) of $A$.

\begin{lemma}\label{thm:rank_of_omega}
  Let $A \in R^{n \times n}$ be a matrix over a split DVR $R$.
  Then it holds $\omega_\mu(A) = \omega_\mu(D)$ for $\mu \in \setN$, where $D$ is the Smith--McMillan form of $A$.
\end{lemma}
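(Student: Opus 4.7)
The plan is to use the Smith--McMillan form together with the multiplicativity of $\Omega_\mu$ proved in Lemma~\ref{lem:omega_hom}. By \cref{prop:dvsf_smith_mcmillan}, there exist biproper matrices $S \in \GL_n(R)$ and $T \in \GL_n(R)$ such that $SAT = D$, equivalently $A = S^{-1}DT^{-1}$. Crucially, since $S$ and $T$ are biproper, all four matrices $S$, $S^{-1}$, $T$, $T^{-1}$ lie in $R^{n \times n}$, so Lemma~\ref{lem:omega_hom} applies to every pairwise product.

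First I would compute $\Omega_\mu(I_n)$. Since $\pi^i I_n$ has $\pi$-adic expansion with only the single nonzero coefficient matrix $I_n$ in degree $i$, the definition~\eqref{def:expanded-matrix} gives $\Omega_\mu(I_n) = I_{\mu n}$. Then applying \cref{lem:omega_hom} to $S \cdot S^{-1} = I_n$ and $T \cdot T^{-1} = I_n$, I obtain
\begin{align}
  \Omega_\mu(S)\,\Omega_\mu(S^{-1}) = \Omega_\mu(S^{-1})\,\Omega_\mu(S) = I_{\mu n},
\end{align}
and similarly for $T$. Hence $\Omega_\mu(S^{-1}), \Omega_\mu(T^{-1}) \in \GL_{\mu n}(K)$.

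Next, applying \cref{lem:omega_hom} twice to $A = S^{-1} D T^{-1}$ yields
\begin{align}
  \Omega_\mu(A) = \Omega_\mu(S^{-1})\,\Omega_\mu(D)\,\Omega_\mu(T^{-1}).
\end{align}
Since left and right multiplication by elements of $\GL_{\mu n}(K)$ preserves the rank of a matrix over $K$, we conclude $\omega_\mu(A) = \rank \Omega_\mu(A) = \rank \Omega_\mu(D) = \omega_\mu(D)$.

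There is no real obstacle here: the whole argument is a two-line reduction once one has the Smith--McMillan form (\cref{prop:dvsf_smith_mcmillan}) and the homomorphism property of $\Omega_\mu$ (\cref{lem:omega_hom}). The only thing to double-check is that biproperness of $S$ and $T$ ensures both the matrix and its inverse live in $R^{n \times n}$, which is precisely the hypothesis needed to invoke \cref{lem:omega_hom} and to conclude that $\Omega_\mu(S^{-1})$ and $\Omega_\mu(T^{-1})$ are two-sided inverses in $K^{\mu n \times \mu n}$.
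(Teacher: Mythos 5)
Your proof is correct, and the overall route --- decompose $A$ via the Smith--McMillan form of \cref{prop:dvsf_smith_mcmillan} and invoke the multiplicativity of $\Omega_\mu$ from \cref{lem:omega_hom} --- is the same as the paper's. The one place you genuinely diverge is the justification that the expanded matrices of the biproper factors are invertible. The paper works with $\Omega_\mu(S)$ and $\Omega_\mu(T)$ directly: it observes that $\Omega_\mu(S)$ is block upper triangular with diagonal blocks $S_i^{(i)}$, identifies $S_i^{(i)}$ as the constant term of the $\pi$-adic expansion of the biproper matrix $\pi^{-i}S\pi^i$, and then appeals to \cref{prop:biproper_equivalence} to get nonsingularity of each diagonal block. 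You instead apply \cref{lem:omega_hom} to $SS^{-1}=I_n$ together with the computation $\Omega_\mu(I_n)=I_{\mu n}$ to produce an explicit two-sided inverse $\Omega_\mu(S^{-1})$ of $\Omega_\mu(S)$, and correspondingly expand $A=S^{-1}DT^{-1}$ rather than $D=SAT$; this is legitimate because biproperness puts $S^{-1}$ and $T^{-1}$ in $R^{n\times n}$, which is exactly the hypothesis of \cref{lem:omega_hom}, and $D$ itself is over $R$ since its diagonal entries have nonnegative valuation. Your version is shorter and avoids both the block-triangular analysis and \cref{prop:biproper_equivalence}; the paper's version has the side benefit of exhibiting the diagonal blocks of $\Omega_\mu(S)$ explicitly, in the same structural spirit as the analysis of $\Omega_\mu(D)$ carried out in the subsequent \cref{lem:omega_mu_N_d}. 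Both arguments are complete.
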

\begin{proof}
  Let $S \in R^{n \times n}$ and $T \in R^{n \times n}$ be biproper matrices such that $SAT = D$.
  From \cref{lem:omega_hom}, we have
  \begin{align}
    \omega_\mu(D)
    = \rank \Omega_\mu(SAT)
    = \rank \Omega_\mu(S) \Omega_\mu(A) \Omega_\mu(T).
  \end{align}
  For $i \in \setN$, let $S_i^{(i)}$ be the coefficient matrix of $\pi^i$ in the $\pi$-adic expansion of $\pi^i S$, where $\pi$ is a uniformizer of $R$.
  Then $S_i^{(i)}$ is equal to the coefficient matrix of $\pi^0$ in the $\pi$-adic expansion of $\pi^{-i} S \pi^i$.
  Now $\pi^{-i} S \pi^i$ is biproper by $\prn{\pi^{-i} S \pi^i}^{-1} = \pi^{-i} S^{-1} \pi^i$.
  Thus, $S_i^{(i)}$ is nonsingular from \cref{prop:biproper_equivalence}.
  Since $\Omega_\mu(S)$ is a block triangular matrix having $S_i^{(i)}$ for the $(i+1)$st diagonal block, it is nonsingular.
  Similarly, $\Omega_\mu(T)$ is nonsingular.
  Therefore, we have $\omega_\mu(D) = \omega_\mu(A)$.
\end{proof}

Let $0 \le \alpha_1 \le \dotsb \le \alpha_r$ be the exponents of the Smith--McMillan form of $A \in R^{n \times n}$ with $r \defeq \rank A$.
Put
\begin{align}\label{def:N_d}
  N_d \defeq \card{\set{i \in \intset{r}}[\alpha_i \leq d]}
\end{align}
for $d \in \setN$.
\Cref{thm:rank_of_omega} leads us to the following lemma; a similar result based on the Kronecker canonical form is also known for matrix pencils over a field~\cite[Theorem~2.3]{Iwata2007}.

\begin{lemma}\label{lem:omega_mu_N_d}
  Let $A \in R^{n \times n}$ be a matrix over a split DVR $R$.
  For $\mu \in \setN$, it holds
  \begin{align}\label{eq:omega_mu_alpha}
    \omega_\mu(A) = \sum_{d=0}^{\mu-1} N_d,
  \end{align}
  where $N_d$ is defined by~\eqref{def:N_d}.
\end{lemma}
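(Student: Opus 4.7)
The plan is to reduce to the Smith--McMillan diagonal form via \cref{thm:rank_of_omega} and then analyze the expanded matrix of a diagonal matrix directly.

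First I would invoke \cref{thm:rank_of_omega} to replace $A$ by its Smith--McMillan form
\[
  D = \begin{pmatrix} \diag\prn{\pi^{\alpha_1}, \dotsc, \pi^{\alpha_r}} & O \\ O & O \end{pmatrix} \in R^{n \times n},
\]
so it suffices to compute $\omega_\mu(D)$. Since $D$ is diagonal, every power $\pi^i D$ is diagonal as well, and hence each coefficient matrix $D_d^{(i)}$ of $\pi^d$ in $\pi^i D$ is also diagonal.

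Next I would unpack the structure of $\Omega_\mu(D)$ entry by entry. For $k \le r$, the $(k,k)$ entry of $D$ is $\pi^{\alpha_k}$, so the $\pi$-adic expansion of $\pi^i \cdot \pi^{\alpha_k} = \pi^{i+\alpha_k}$ contributes the scalar $1$ in position $d = i + \alpha_k$ and $0$ elsewhere; for $k > r$ the diagonal entry is $0$ and contributes nothing. Re-indexing rows and columns of $\Omega_\mu(D)$ by pairs $(i,k)$ and $(d,k')$ with $i,d \in \intset{0,\mu-1}$ and $k,k' \in \intset{n}$, the only nonzero entries are the $1$'s in positions $((i,k),(i+\alpha_k,k))$ where $k \in \intset{r}$ and $0 \le i \le \mu - 1 - \alpha_k$ (the latter inequality ensuring that $i + \alpha_k$ is a valid column index). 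Because distinct choices of $(i,k)$ yield distinct row indices and distinct column indices, these $1$'s form a partial permutation pattern, so they are $K$-linearly independent and
\[
  \omega_\mu(D) = \sum_{k \in \intset{r}, \, \alpha_k \le \mu - 1} (\mu - \alpha_k).
\]

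Finally I would rewrite this sum by swapping the order of summation (a standard Fubini-style argument):
\[
  \sum_{k : \alpha_k \le \mu-1} (\mu - \alpha_k)
  = \sum_{k \in \intset{r}} \card{\set{d \in \intset{0,\mu-1}}[\alpha_k \le d]}
  = \sum_{d=0}^{\mu-1} \card{\set{k \in \intset{r}}[\alpha_k \le d]}
  = \sum_{d=0}^{\mu-1} N_d,
\]
which yields~\eqref{eq:omega_mu_alpha}. There is no substantial obstacle here; the only point requiring care is the bookkeeping with the shift $d = i + \alpha_k$ and the fact that the induced pattern of $1$'s lies on distinct rows and columns, which is where the nonnegativity $\alpha_k \ge 0$ (guaranteed by $A \in R^{n \times n}$ through \cref{prop:nonnegativity_of_zeta_k}) plays its role.
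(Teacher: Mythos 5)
Your proposal is correct and follows essentially the same route as the paper: reduce to the Smith--McMillan form via \cref{thm:rank_of_omega}, observe that $\Omega_\mu(D)$ is a partial permutation pattern (at most one nonzero entry per row and column), and count the $1$'s. The only cosmetic difference is that the paper tallies the nonzero entries block-row by block-row (obtaining $N_d$ in the $(\mu-d)$th block row), whereas you count $\mu - \alpha_k$ entries per diagonal index $k$ and then swap the order of summation; these are the same count.
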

\begin{proof}
  Let $D$ be the Smith--McMillan form of $A$ and $D^{(i)}_d \in R^{n \times n}$ the coefficient matrix of $\pi^d$ in the $\pi$-adic expansion of $\pi^i D$ for $i,d \in \setN$.
  Since entries of $D$ are powers of $\pi$, the matrix $D$ commutes with $\pi$.
  This implies $D^{(i)}_d = D^{(0)}_{d-i} \eqdef D_{d-i}$ for $d \ge i$.
  Now $\Omega_\mu(D)$ is in the form
  \begin{align}
    \Omega_\mu(D) =
    \begin{pNiceMatrix}
      D_0    & D_1    & \Cdots & D_{\mu-2} & D_{\mu-1} \\
      O      & \Ddots & \Ddots &           & D_{\mu-2} \\
      \Vdots & \Ddots & \Ddots & \Ddots    & \Vdots    \\
      \Vdots &        & \Ddots & \Ddots    & D_1       \\
      O      & \Cdots & \Cdots & O         & D_0
    \end{pNiceMatrix}.\label{eq:Omega_mu_D}
  \end{align}
  Let $\alpha_1, \dotsc, \alpha_r$ be the exponents of the Smith--McMillan form $D$, where $r \defeq \rank A$.
  The $i$th diagonal entry of $D_d$ is $1$ if $i \le r$ and $\alpha_i = d$, and 0 otherwise.
  Thus from~\eqref{eq:Omega_mu_D}, each row and column in $\Omega_\mu(D)$ has at most one nonzero entry.
  Hence $\omega_\mu(D)$, which is equal to $\omega_\mu(A)$ by \cref{thm:rank_of_omega}, is equal to the number of nonzero entries in $\Omega_\mu(D)$.
  It is easily checked that the $(\mu-d)$th block row of $\Omega_\mu(D)$ contains $N_d$ nonzero entries for $d \in \intset{0, \mu-1}$.
\end{proof}

The equality~\eqref{eq:omega_mu_alpha} is a key identity that connects $\omega_\mu(A)$ and the Smith--McMillan form of $A$.
We remark that~\eqref{eq:omega_mu_alpha} can be rewritten as
\begin{align} \label{eq:N_leq_mu_omega}
  N_d = \omega_{d+1}(A) - \omega_d(A)
\end{align}
for $d \in \setN$.

\subsection{Legendre Conjugacy}\label{sec:legendre_conjugacy}

Let $A \in R^{n \times n}$ be a matrix of rank $r$ and $\alpha_1 \le \dotsc \le \alpha_r$ the exponents of the Smith--McMillan form of $A$.
Put $\zeta_k \defeq \zeta_k(A)$ for $k = \intset{0, r}$, where $\zeta_k(A)$ is defined by~\eqref{def:zeta_k}.
From $\alpha_k \le \alpha_{k+1}$ and~\eqref{eq:dvsf_smith_mcmillan_vd_i}, an inequality $\zeta_{k-1} + \zeta_{k+1} \ge 2 \zeta_k$ holds for all $k \in \intset{r-1}$.
In addition, for $\mu \in \setN$ put $\omega_\mu \defeq \omega_\mu(A)$ and define $N_{\mu}$ by~\eqref{def:N_d}.
From $N_{\mu-1} \le N_{\mu}$ and~\eqref{eq:N_leq_mu_omega}, we have $\omega_{\mu-1} + \omega_{\mu+1} \ge 2 \omega_\mu$ for all $\mu \ge 1$.
These two inequalities for $d_k$ and $\omega_\mu$ indicate the \emph{convexity} of $\zeta_k$ and $\omega_\mu$ in the following sense.
A (discrete) function $\funcdoms{f}{\setZ}{\setZ \cup \set{+\infty}}$ is said to be \emph{convex} if
\begin{align}
  f(x-1) + f(x+1) \ge 2f(x)
\end{align}
for all $x \in \setZ$.
We call a function $\funcdoms{g}{\setZ}{\setZ \cup \set{-\infty}}$ \emph{concave} if $-g$ is convex.
An integer sequence $\prn{a_k}_{k \in K}$ indexed by $K \subseteq \setZ$ can be identified with a function $\funcdoms{\check{a}}{\setZ}{\setZ \cup \set{+\infty}}$ by letting $\check{a}(k)$ be $a_k$ if $k \in K$ and $+\infty$ otherwise.
We can also identify $a$ with $\funcdoms{\hat{a}}{\setZ}{\setZ \cup \set{-\infty}}$ defined by $\hat{a}(k) \defeq a_k$ if $k \in K$ and $\hat{a}(k) \defeq -\infty$ otherwise.
In this way, we identify $(\zeta_0, \zeta_1, \dotsc, \zeta_r)$ and $(\omega_0, \omega_1, \omega_2, \dotsc)$ with discrete functions $\funcdoms{\check{\zeta}}{\setZ}{\setZ \cup \set{-\infty}}$ and $\funcdoms{\hat{\omega}}{\setZ}{\setZ \cup \set{+\infty}}$, respectively.
From the argument in the previous paragraph, both $(\zeta_0, \zeta_1, \dotsc, \zeta_r)$ and $(\omega_0, \omega_1, \omega_2, \dotsc)$ are convex.
Let $\funcdoms{f}{\setZ}{\setZ \cup \set{+\infty}}$ be a function such that $f(x) \in \setZ$ for some $x \in \setZ$.
The \emph{concave conjugate} of $f$ is a function $\funcdoms{f^\circ}{\setZ}{\setZ \cup \set{-\infty}}$ defined by
\begin{align}
  f^\circ(y) \defeq \inf_{x \in \setZ} (f(x) - xy)
\end{align}
for $y \in \setZ$.
Similarly for a function $\funcdoms{g}{\setZ}{\setZ \cup \set{-\infty}}$ with $g(y) \in \setZ$ for some $y \in \setZ$, the \emph{convex conjugate} of $g$ is a function $\funcdoms{g^\bullet}{\setZ}{\setZ \cup \set{+\infty}}$ given by
\begin{align}
  g^\bullet(x) \defeq \sup_{y \in \setZ} (g(y) + xy)
\end{align}
for $x \in \setZ$.
The maps $f \mapsto f^\circ$ and $g \mapsto g^\bullet$ are referred to as the \emph{concave} and \emph{convex discrete Legendre transform}, respectively.
In general $f^\circ$ is concave and $g^\bullet$ is convex.
If $f$ is convex and $g$ is concave,
\begin{align} \label{eq:legendre_of_legendre}
  \prn{f^\circ}^\bullet = f,
  \quad
  \prn{g^\bullet}^\circ = g
\end{align}
hold.
Hence the Legendre transformation establishes a one-to-one correspondence between discrete convex and concave functions.
See~\cite{Murota2003} for details of discrete convex/concave functions and their Legendre transform.

Indeed, the sequences of $\zeta_k$ and $-\omega_\mu$ are in the relation of Legendre conjugate.
This can be shown from the key identities~\eqref{eq:zeta-as-sum-of-alpha} and~\eqref{eq:omega_mu_alpha} that connect $\zeta_k(A)$ and $\omega_\mu(A)$ through the Smith--McMillan form of $A$.

\begin{theorem}\label{thm:legendre}
  Let $A \in R^{n \times n}$ be a matrix of rank $r$ over a split DVR $R$.
  Then it holds
  \begin{alignat}{2}
    \zeta_k(A)    &= \max_{\mu \ge 0}     (k\mu - \omega_\mu(A)) &\quad& (0 \le k \le r), \label{eq:d_inf_omega} \\
    \omega_\mu(A) &= \max_{0 \le k \le r} (k\mu - \zeta_k(A))    &\quad& (\mu \ge 0). \label{eq:omega_sup_d}
  \end{alignat}
\end{theorem}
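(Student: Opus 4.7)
The plan is to use the two established identities to reduce \cref{thm:legendre} to a combinatorial statement about the sorted sequence $0 \le \alpha_1 \le \dotsb \le \alpha_r$ of Smith--McMillan exponents of $A$. From~\eqref{eq:zeta-as-sum-of-alpha} we have $\zeta_k(A) = \sum_{i=1}^k \alpha_i$, and from \cref{lem:omega_mu_N_d} together with the definition $N_d = \card{\set{i}[\alpha_i \le d]}$ I would first derive the more convenient formula
\[
  \omega_\mu(A) = \sum_{d=0}^{\mu-1} N_d = \sum_{d=0}^{\mu-1}\sum_{i=1}^r [\alpha_i \le d] = \sum_{i=1}^r \max\set{0, \mu - \alpha_i}
\]
by swapping the order of summation.

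To prove~\eqref{eq:d_inf_omega}, I would fix $k \in \intset{0, r}$ and study the forward difference
\[
  \Delta(\mu) \defeq (k\mu - \omega_\mu(A)) - (k(\mu-1) - \omega_{\mu-1}(A)) = k - N_{\mu-1}.
\]
Since $N_{\mu-1}$ is nondecreasing in $\mu$, ranging from $0$ to $r$, the quantity $k\mu - \omega_\mu(A)$ is nondecreasing while $N_{\mu-1} \le k$ and nonincreasing once $N_{\mu-1} \ge k$, so its maximum over $\mu \in \setN$ is attained at any $\mu^\star$ with $N_{\mu^\star - 1} = k$, equivalently $\alpha_k < \mu^\star \le \alpha_{k+1}$ (with the conventions $\alpha_0 = 0$ and $\alpha_{r+1} = +\infty$). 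At such a $\mu^\star$, the formula for $\omega_\mu(A)$ gives
\[
  k\mu^\star - \omega_{\mu^\star}(A) = k\mu^\star - \sum_{i=1}^k (\mu^\star - \alpha_i) = \sum_{i=1}^k \alpha_i = \zeta_k(A),
\]
which establishes~\eqref{eq:d_inf_omega}; the boundary cases $k=0$ (maximum attained at $\mu=0$) and $k=r$ (maximum attained at any $\mu \ge \alpha_r$) are immediate.

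For~\eqref{eq:omega_sup_d} there are two routes. The clean one is to invoke the involution~\eqref{eq:legendre_of_legendre} of the discrete Legendre transform, using the convexity of both $(\zeta_k)$ and $(\omega_\mu)$ already observed at the start of \cref{sec:legendre_conjugacy}: \eqref{eq:d_inf_omega} identifies $-\zeta_k(A)$ as the concave conjugate of $\omega_\mu(A)$, and taking the convex conjugate recovers $\omega_\mu(A)$. Alternatively, one can repeat the direct argument symmetrically: for fixed $\mu$, the difference $(k\mu - \zeta_k(A)) - ((k-1)\mu - \zeta_{k-1}(A)) = \mu - \alpha_k$ is nonincreasing in $k$, so the maximum over $k \in \intset{0, r}$ is attained at $k^\star = N_\mu$, and at this $k^\star$ one computes $k^\star \mu - \zeta_{k^\star}(A) = \sum_{i=1}^{N_\mu}(\mu - \alpha_i) = \omega_\mu(A)$.

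The whole argument is essentially bookkeeping once the formula $\omega_\mu(A) = \sum_i (\mu - \alpha_i)^+$ is in hand; the only mildly delicate point is handling the boundary values of $k$ and $\mu$ and making sure the ``$\max$'' is actually attained (it is, by the piecewise-linear nature of the expressions). No additional structural results are needed beyond the two key identities~\eqref{eq:zeta-as-sum-of-alpha} and~\eqref{eq:omega_mu_alpha}, so the main obstacle is purely notational: choosing the right reindexing that exposes the Legendre duality between the partial sums of $(\alpha_i)$ and the counting function $(N_d)$.
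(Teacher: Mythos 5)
Your proposal is correct and is essentially the mirror image of the paper's proof: both arguments reduce everything to $\zeta_k(A)=\sum_{i=1}^k\alpha_i$ together with the identity $\omega_\mu(A)=\sum_{i=1}^r\max\{0,\mu-\alpha_i\}=r\mu-\sum_{i=1}^r\min\{\alpha_i,\mu\}$ (which the paper obtains by the area argument of \cref{thm:legendre}'s figure and you obtain by exchanging the order of summation), and both use the involution~\eqref{eq:legendre_of_legendre} to deduce one conjugate formula from the other — the paper proves~\eqref{eq:omega_sup_d} directly at the value $k^*=\max\{k:\alpha_k\le\mu\}$, whereas you prove~\eqref{eq:d_inf_omega} directly. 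One small repair is needed in your maximizer analysis: when $\alpha_k=\alpha_{k+1}$ (repeated exponents) the set $\{\mu: N_{\mu-1}=k\}=\{\mu:\alpha_k<\mu\le\alpha_{k+1}\}$ is empty, so you cannot evaluate there; your own monotonicity computation shows $k\mu-\omega_\mu(A)$ is nondecreasing for $\mu\le\alpha_{k+1}$ and nonincreasing for $\mu\ge\alpha_k$, so the maximum is attained on the (always nonempty) closed range $\alpha_k\le\mu^\star\le\alpha_{k+1}$, e.g.\ at $\mu^\star=\alpha_k$, where the evaluation $\omega_{\mu^\star}(A)=\sum_{i=1}^k(\mu^\star-\alpha_i)$ and hence $k\mu^\star-\omega_{\mu^\star}(A)=\zeta_k(A)$ goes through unchanged.
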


\begin{figure}[tbp]
  \centering
  \begin{tikzpicture}[>=latex, x=0.8cm, y=0.7cm]
    \draw[thick, ->] (0, 0) -- (6.5, 0) node[right] {$x$};
    \draw[thick, ->] (0, 0) -- (0, 6.5) node[above] {$y$};
    \fill[pattern=dots] (0, 0) -- (0, 1) -- (1, 1) -- (1, 2) -- (2, 2) -- (2, 3) -- (3, 3) -- (3, 4) -- (6, 4) -- (6, 0) -- cycle;
    \draw[thin] (0, 1) -- (1, 1) -- (1, 2) -- (2, 2) -- (2, 3) -- (3, 3) -- (3, 4.5) -- (4, 4.5) --  (4, 5) -- (5, 5) -- (5, 6) -- (6, 6) -- (6, 0);
    \draw[thin] (3, 4) -- (0, 4) node[left] {$\mu$};
    \draw[dashed] (3, 4) -- (6, 4);
    \draw[dotted] (0, 1) -- (0, 1) node[left] (a1) {$\alpha_1$};
    \draw[dotted] (1, 2) -- (0, 2) node[left] (a2) {$\alpha_2$};
    \draw[dotted] (4, 5) -- (0, 5) node[left] (ar1) {$\alpha_{r-1}$};
    \draw[dotted] (5, 6) -- (0, 6) node[left] (ar) {$\alpha_{r}$};
    \node[below left] at (0, 0) {$\mathrm{O}$};
    \node[below] (1) at (1, 0) {$1\vphantom{mathrm{O}}$};
    \node[below] (2) at (2, 0) {$2\vphantom{mathrm{O}}$};
    \node[below] (r1) at (5, 0) {$r-1\vphantom{mathrm{O}}$};
    \node[below] (r) at (6, 0) {$r\vphantom{mathrm{O}}$};
    \node at ($(2)!0.5!(r1)$) {$\cdots\vphantom{mathrm{O}}$};
    \node at (1, 3) {$\omega_{\mu}$};
    \node[ellipse, fill=white, minimum width=3.5cm, minimum height=0.9cm] at (3.5, 1) {};
    \node at (3.5, 1) {$\sum_{i = 1}^r \min \set{\alpha_i, \mu}$};
  \end{tikzpicture}
  \caption{%
    Graphic explanation of~\eqref{eq:omega_mu_and_alpha}.
  }\label{fig:delta_r_and_omega}
\end{figure}
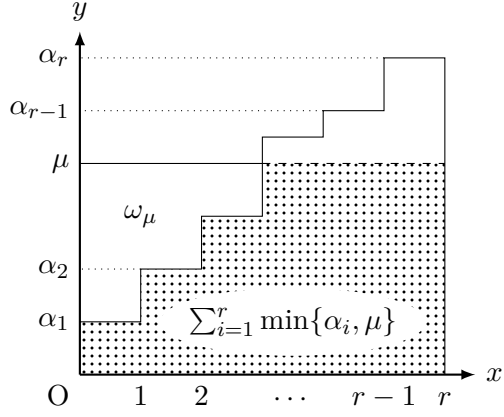

\begin{proof}
  Put $\zeta_k \defeq \zeta_k(A)$ for $k \in \intset{0, r}$ and $\omega_\mu \defeq \omega_\mu(A)$ for $\mu \in \setN$.
  Since $(\zeta_0, \zeta_1, \dotsc, \zeta_r)$ is convex and $(-\omega_0, -\omega_1, -\omega_2, \dotsc)$ is concave,~\eqref{eq:d_inf_omega} and~\eqref{eq:omega_sup_d} are equivalent by~\eqref{eq:legendre_of_legendre}.
  We show~\eqref{eq:omega_sup_d} as follows.

  First we give an equality
  \begin{align} \label{eq:omega_mu_and_alpha}
    \omega_\mu = r\mu - \sum_{i = 1}^r \min \set{\alpha_i, \mu}
  \end{align}
  for $\mu \in \setN$, where $\alpha_1 \le \dotsb \le \alpha_r$ are the exponents of the Smith--McMillan form of $A$.
  Figure~\ref{fig:delta_r_and_omega} graphically shows this equality.
  Let $x$ and $y$ be the coordinates along the horizontal and vertical axes in Figure~\ref{fig:delta_r_and_omega}, respectively.
  For $i \in \intset{r}$, the height of the dotted rectangle with $i-1 \leq x < i$ is $\min \set{\alpha_i, \mu}$.
  Hence the area of the dotted region is equal to $\sum_{i = 1}^r \min \set{\alpha_i, \mu}$.
  In addition, the width of the white rectangle with $d \leq y < d+1$ is equal to $N_d$ for $d = 0, \dotsc, \mu - 1$, where $N_d$ is defined by~\eqref{def:N_d}.
  Hence the area of the white stepped region is equal to $N_0 + \dotsb + N_{\mu-1} = \omega_{\mu}$ by~\eqref{eq:omega_mu_alpha}.
  Now we have~\eqref{eq:omega_mu_and_alpha} since the sum of the areas of these two regions is $r \mu$.

  Substituting~\eqref{eq:zeta-as-sum-of-alpha} into the right hand side of~\eqref{eq:omega_sup_d}, we have
  \begin{align} \label{eq:legendre_proof_1}
    \max_{0 \le k \le r} (k\mu - \zeta_k)
    = \max_{0 \le k \le r} \sum_{i=1}^k (\mu - \alpha_i)
    = k^* \mu - \sum_{i=1}^{k^*} \alpha_i,
  \end{align}
  where $k^*$ is the maximum $0 \le k \le r$ such that $\alpha_k \le \mu$.
  Since $\min \set{\alpha_i, \mu}$ is $\alpha_i$ if $i \le k^*$ and $\mu$ if $i > k^*$, it holds
  \begin{align} \label{eq:legendre_proof_2}
    \sum_{i=1}^r \min \set{\alpha_i, \mu}
    = (r - k^*)\mu + \sum_{i=1}^{k^*} \alpha_i.
  \end{align}
  From~\eqref{eq:legendre_proof_1} and~\eqref{eq:legendre_proof_2}, we have
  \begin{align}
    \max_{0 \le k \le r} (k\mu - \zeta_k)
    = r\mu - \sum_{i = 1}^r \min \set{\alpha_i, \mu},
  \end{align}
  in which the right hand side is equal to $\omega_\mu$ by~\eqref{eq:omega_mu_and_alpha}.
\end{proof}

\subsection{Reduction and Algorithm}\label{sec:reductions-and-algorithms}
We finally apply \cref{thm:legendre} to the computation of $\vdet(A)$ via the following lemma.

\begin{lemma}\label{lem:rank_and_d_r_formulas}
  Let $A \in F^{n \times n}$ be a matrix~\eqref{eq:input_form} of rank $r$ over a split DVSF $F$ such that $\vdet(A) \le M$ or $\vdet(A) = +\infty$.
  Then $A$ is nonsingular if and only if $\omega_{M+1}(A) - \omega_M(A) = n$.
  Furthermore, if $A$ is nonsingular, then it holds
  \begin{align}
    \vdet(A) = Mn - \omega_M(A).\label{eq:d_r_formula}
  \end{align}
\end{lemma}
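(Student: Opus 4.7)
The plan is to deduce both parts of the lemma from the Legendre conjugacy of \cref{thm:legendre} together with the auxiliary identity~\eqref{eq:N_leq_mu_omega}. Writing $\alpha_1 \le \dotsb \le \alpha_r$ for the exponents of the Smith--McMillan form of $A$, recall from \cref{sec:matrix_expansion} that $\omega_{M+1}(A) - \omega_M(A) = N_M = \card{\set{i \in \intset{r}}[\alpha_i \le M]}$, and that $\alpha_i \ge 0$ for every $i$ by \cref{prop:nonnegativity_of_zeta_k}. These basic facts already make the first assertion almost immediate, and set up the second via Legendre duality.

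For the equivalence $A$ nonsingular $\iff$ $\omega_{M+1}(A) - \omega_M(A) = n$, I will argue both directions using the bound $N_M \le r$. If $A$ is nonsingular, then $r = n$, so by the hypothesis $\vdet(A) \le M$ (the case $\vdet(A) = +\infty$ is excluded by nonsingularity); combining this with $\alpha_n \le \sum_{i=1}^n \alpha_i = \zeta_n(A) = \vdet(A) \le M$, we obtain $\alpha_i \le M$ for all $i \in \intset{n}$, hence $N_M = n$. Conversely, if $N_M = n$, then $r \ge N_M = n$ forces $r = n$, so $A$ is nonsingular.

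For the formula~\eqref{eq:d_r_formula}, assume $A$ is nonsingular, so $r = n$ and $\alpha_i \le M$ for all $i \in \intset{n}$ as above. I will apply the Legendre conjugacy~\eqref{eq:omega_sup_d} at $\mu = M$:
\begin{align}
  \omega_M(A) = \max_{0 \le k \le n} \prn{kM - \zeta_k(A)}.
\end{align}
The plan is to show that the maximizer is $k = n$. For any $k \in \intset{0, n}$,
\begin{align}
  \prn{nM - \zeta_n(A)} - \prn{kM - \zeta_k(A)}
  = (n-k)M - \sum_{i=k+1}^n \alpha_i \ge 0,
\end{align}
where the inequality holds since each $\alpha_i \le M$. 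Therefore $\omega_M(A) = nM - \zeta_n(A) = Mn - \vdet(A)$, which rearranges to~\eqref{eq:d_r_formula}.

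The only point requiring any care is confirming that the upper bound $M$ on $\vdet(A)$ actually propagates to a uniform upper bound $\alpha_i \le M$ on the individual exponents; this is where the nonnegativity of the $\alpha_i$ (ensured because $A \in R^{n \times n}$) is essential. Everything else is a direct substitution into the Legendre formula and the identity~\eqref{eq:N_leq_mu_omega}.
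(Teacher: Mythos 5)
Your proof is correct and follows essentially the same route as the paper: both arguments reduce the first claim to the identity $\omega_{M+1}(A)-\omega_M(A)=N_M$ together with $0\le\alpha_i\le M$, and both derive~\eqref{eq:d_r_formula} from the Legendre conjugacy of \cref{thm:legendre} by locating the maximizer at the boundary. The only cosmetic difference is that you instantiate~\eqref{eq:omega_sup_d} at $\mu=M$ and maximize over $k$, whereas the paper uses the conjugate formula~\eqref{eq:d_inf_omega} and maximizes over $\mu$; these are interchangeable by the biconjugacy~\eqref{eq:legendre_of_legendre}.
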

\begin{proof}
  It holds $\omega_{M+1}(A) - \omega_M(A) = N_M \le n$ by~\eqref{eq:N_leq_mu_omega}.
  If $A$ is singular, then $N_M$ must be less than $n$.
  If $A$ is nonsingular, then $\alpha_i$ is at most $M$ for all $i \in \intset{r}$, which means $N_M = n$.

  Suppose that $A$ is nonsingular.
  From~\eqref{eq:omega_mu_alpha} and~\eqref{eq:d_inf_omega}, it holds
  \begin{align} \label{eq:d_r_min_2}
    \vdet(A) = \max_{\mu \ge 0} \sum_{d=0}^{\mu-1} (n - N_d).
  \end{align}
  Since $N_0 \le N_1 \le \dotsb \le N_M = N_{M+1} = \dotsb = n$, the maximum value of the right hand side of~\eqref{eq:d_r_min_2} is attained by $\mu = M$.
  Thus we have~\eqref{eq:d_r_formula}.
\end{proof}

From \cref{lem:rank_and_d_r_formulas}, we can compute $\vdet(A)$ just by calculating $\omega_M(A)$ and $\omega_{M+1}(A)$; we call this the \emph{matrix expansion algorithm}.
These matrices can be constructed in $\Order\prn{M^3 n^2}$-time by repeatedly applying~\eqref{eq:coefficient-in-pi-a} and the rank computation can be done in $\Order\prn{M^\omega n^\omega}$ arithmetic operations on $K$.
Thus we have the last half of \cref{thm:complexity-for-deg-Det-of-skew-polynomial-matrices}.

\section{Estimating Upper Bounds}\label{sec:estimating-upper-bounds}
\subsection{Bounds for Skew Polynomial Rings}\label{sec:bounds-for-skew-polynomial-rings}
Let $R$ be a split DVR with coefficient skew subfield $K$.
In the algorithms presented in \cref{sec:combinatorial-relaxation-algorithm,sec:matrix-expansion-algorithm}, we assume that an upper bound $M$ of $\vdet(A)$ is known beforehand (or $\vdet(A) = +\infty$) for $A \in R^{n \times n}$.
How can we know such $M$?
Recall that entries in the input matrix $A \in R^{n \times n}$ in~\eqref{eq:input_form} contain terms having valuations at most $l$.
One optimistic estimation of the upper bound is $ln$.
From the definition of the determinant, this is valid when $R$ is commutative, or equivalently, $R$ is isomorphic to a subring of $K\ssqbr{s}$.
This can be extended to the case of skew polynomial rings as follows.

Let $K$ be a skew field equipped with an automorphism $\sigma$ and a left $\sigma$-derivation $\delta$.
As stated in \cref{ex:skew_laurent_series}, the skew inverse Laurent series field $K\pprn[\big]{s^{-1}; \sigma, \delta}$ forms a complete split DVR with valuation $-\deg$ and uniformizer $s^{-1}$.
We denote by $K\ssqbr[\big]{s^{-1}; \sigma, \delta}$ the valuation ring of $K\pprn[\big]{s^{-1}; \sigma, \delta}$.
From \cref{ex:higher-derivations}, $K\ssqbr[\big]{s^{-1}; \sigma, \delta}$ is isomorphic to $K\ssqbr[\big]{t; (\delta_d)}$ by an isomorphism $s^{-1} \mapsto t$, where $\delta_d$ is given by~\eqref{eq:higher_derivation_for_skew_inverse_laurent} for $d \in \setN$.

\begin{proposition}\label{prop:skew-polynomials-bound}
  Let $F \defeq K\pprn[\big]{s^{-1}; \sigma, \delta}$ be a skew inverse Laurent field over a skew field $K$.
  For a nonsingular matrix $A = \sum_{d=0}^l A_d s^{-d} \in F^{n \times n}$ with $A_0, \dotsc, A_l \in K^{n \times n}$, we have $\vdet(A) = -\deg \Det A \le ln$.
\end{proposition}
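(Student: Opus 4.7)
The plan is to reduce the bound to a degree estimate for skew polynomial matrices. Let $B \defeq A s^l$. Because $s^{-d} \cdot s^l = s^{l-d}$ in $F$ for $0 \le d \le l$, we have
\[
  B = \sum_{d=0}^{l} A_d s^{l-d} \in {K[s; \sigma, \delta]}^{n \times n},
\]
a skew polynomial matrix whose entries have degree at most $l$. By multiplicativity~\ref{item:VD1} of $\vdet$, together with $\vdet(s^l I_n) = n \cdot v(s^l) = -nl$, we get $\vdet(B) = \vdet(A) - nl$, so $\vdet(A) \le nl$ is equivalent to $\vdet(B) \le 0$, i.e., $\deg \Det B \ge 0$. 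I would prove the latter.

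To bound $\deg \Det B$ from below, apply the Jacobson normal form (\cref{prop:jacobson}) to the nonsingular matrix $B$ over $K[s; \sigma, \delta]$, which is a noncommutative PID: there exist $U, V \in \GL_n(K[s; \sigma, \delta])$ and nonzero skew polynomials $e_1, \dotsc, e_n$ with $U B V = \diag(e_1, \dotsc, e_n)$. Taking Dieudonné determinants in $\abel{\mult{K(s;\sigma,\delta)}}$ and using multiplicativity gives
\[
  \deg \Det B = \sum_{i=1}^{n} \deg e_i - \deg \Det U - \deg \Det V.
\]
Each $e_i \in K[s;\sigma,\delta] \setminus \set{0}$ satisfies $\deg e_i \ge 0$, yielding $\sum_i \deg e_i \ge 0$.

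The hardest step is to show $\deg \Det U = 0$ for $U \in \GL_n(K[s; \sigma, \delta])$, and likewise for $V$. The key facts are that (i)~$K[s; \sigma, \delta]$ is a left and right Euclidean domain with the degree as Euclidean function, and (ii)~its units coincide with $\mult{K}$ by additivity of degree. Using (i), $U$ can be reduced to upper triangular form by elementary and permutation row operations over $K[s; \sigma, \delta]$; the resulting triangular matrix is still in $\GL_n(K[s; \sigma, \delta])$, so its diagonal entries must lie in $\mult{K[s; \sigma, \delta]} = \mult{K}$ and hence have degree $0$. Since elementary matrices and permutations contribute degree $0$ to the Dieudonné determinant, we obtain $\deg \Det U = 0$, and similarly $\deg \Det V = 0$. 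Combining everything yields $\deg \Det B = \sum_i \deg e_i \ge 0$, so $\vdet(A) = \vdet(B) + nl \le nl$, completing the proof.
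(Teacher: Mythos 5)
Your proof is correct and follows essentially the same route as the paper's: pass to the skew polynomial matrix $B = As^l$, apply the Jacobson normal form $UBV = \diag(e_1,\dotsc,e_n)$ over the PID $K[s;\sigma,\delta]$, and conclude from the fact that the nonzero polynomials $e_i$ have nonnegative degree while the unimodular factors contribute nothing to $\deg\Det$. The only point of divergence is the justification that $\deg \Det U = \deg \Det V = 0$: the paper disposes of this by asserting that $U, V$ are biproper and invoking \cref{prop:biproper_equivalence}, whereas your Euclidean triangularization of $U$ into elementary, permutation, and unit-diagonal factors proves the same fact directly --- and is arguably the more careful argument, since a unimodular polynomial matrix can have entries of positive degree and hence need not be proper entrywise.
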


\begin{proof}
  Consider
  \begin{align}
    B \defeq As^l = \sum_{d=0}^l A_{l-d} s^d \in {K[s; \sigma, \delta]}^{n \times n}.
  \end{align}
  Since $\vdet(B) = \vdet(A) + \vdet(I_n s^l) = \vdet(A) + nl$, it suffices to show $-\vdet(B) = \deg \Det B$ is nonnegative.

  The skew polynomial ring $K[s; \sigma, \delta]$ is known to be a (left and right) PID~\cite[Theorem~2.8]{Goodearl2004} as the usual polynomial ring $K[s]$.
  Let $D = UBV$ be the Jacobson normal form of $B$ (see \cref{prop:jacobson}).
  Here, $U, V \in \GL_n(K[s; \sigma, \delta]) \subseteq \GL_n(K\ssqbr[\big]{s^{-1}; \sigma, \delta})$ are biproper matrices.
  By \cref{prop:biproper_equivalence}, we have $\vdet(D) = \vdet(U) + \vdet(B) + \vdet(V) = \vdet(B)$.
  Since diagonal entries in $D$ are nonzero skew polynomials, they have nonnegative degrees.
  Thus we have $\vdet(B) = \vdet(D) \ge 0$.
\end{proof}

A \emph{skew polynomial matrix} over $K$ refers to a matrix over a skew polynomial ring over $K$.
As we have shown in the proof of \cref{prop:skew-polynomials-bound}, for a skew polynomial matrix $A = \sum_{d=0}^l A_{l-d} s^l \in {K[s; \sigma, \delta]}^{n \times n}$, we can reduce the computation of $\deg \Det A$ into that of $-\det \Det As^{-l}$, where
\begin{align}
  As^{-l} = \sum_{d=0}^l A_d s^{-d} \in {K\pprn[\big]{s^{-1}; \sigma, \delta}}^{n \times n}.
\end{align}
From \cref{prop:skew-polynomials-bound}, we can set $M \defeq ln$ for $As^{-l}$.
The coefficients of $s^{-1} a$ satisfy the following recursion formula.

\begin{lemma}\label{lem:inverse-skew-polynomial-coefficient-formula}
  Let $a = \sum_{d=0}^\infty a_d s^{-d} \in K\ssqbr[\big]{s^{-1}; \sigma, \delta}$ with $a_d \in K$ for $d \in \setN$.
  The coefficient $b_d$ of $s^{-d}$ in $s^{-1} a$ satisfies
  \begin{align}\label{eq:skew-polynomial-coefficient-in-pi-a}
    b_d = \begin{cases}
      \sigma^{-1}\prn{a_{d-1} - \delta(b_{d-1})} & (d \ge 1), \\
      0 & (d = 0).
    \end{cases}
  \end{align}
\end{lemma}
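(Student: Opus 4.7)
The plan is to verify the claimed recursion by checking that the tentative series $c \defeq \sum_{d=0}^\infty b_d s^{-d}$, with $b_0 \defeq 0$ and $b_d \defeq \sigma^{-1}(a_{d-1} - \delta(b_{d-1}))$ for $d \ge 1$, satisfies $sc = a$. Since $s$ is a unit in $K\pprn[\big]{s^{-1}; \sigma, \delta}$, this uniquely identifies $c$ with $s^{-1}a$, proving the formula.

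First I would apply the skew polynomial commutation rule~\eqref{eq:skew_polynomial_commutation_rule} to each term of $c$: for $d \ge 0$,
\begin{align}
  s \, b_d \, s^{-d} = (\sigma(b_d) s + \delta(b_d)) s^{-d} = \sigma(b_d) s^{-(d-1)} + \delta(b_d) s^{-d}.
\end{align}
Summing over $d$ (the $d=0$ contribution vanishes since $b_0 = 0$), the coefficient of $s^{-m}$ in $sc$ for $m \ge 0$ is $\sigma(b_{m+1}) + \delta(b_m)$. Setting this equal to $a_m$ gives $\sigma(b_{m+1}) = a_m - \delta(b_m)$, i.e., $b_{m+1} = \sigma^{-1}(a_m - \delta(b_m))$, which is exactly the claimed recursion after reindexing $d = m+1$. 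Hence the recursion produces precisely the $\pi$-adic coefficients of $s^{-1}a$.

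There is no serious obstacle; the only subtlety is to justify that the formal manipulations above are valid in the skew inverse Laurent series field. Specifically, one must note that the orders $v(b_d s^{-d}) \ge d$ tend to $+\infty$, so $c$ is a well-defined element of $K\ssqbr[\big]{s^{-1}; \sigma, \delta}$, and that the rearrangement of the doubly-indexed sum for $sc$ is legitimate in the $s^{-1}$-adic topology (each coefficient of $s^{-m}$ is a finite sum). Alternatively one could derive~\eqref{eq:skew-polynomial-coefficient-in-pi-a} directly from the commutation formula~\eqref{eq:inverse-skew-polynomial-commutation-formula} together with the definition $\delta_d = \sigma^{-1}(-\delta \sigma^{-1})^d$, by an induction on $d$ that rewrites $\delta_d = \sigma^{-1}(-\delta) \delta_{d-1}$; but the multiplicative verification above is shorter and more transparent.
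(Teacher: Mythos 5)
Your proposal is correct and follows essentially the same route as the paper: both multiply the series by $s$, apply the commutation rule $sa = \sigma(a)s + \delta(a)$ termwise, and equate coefficients of $s^{-d}$ to obtain $\sigma(b_{d+1}) + \delta(b_d) = a_d$. The only cosmetic difference is direction — you define the $b_d$ by the recursion and verify $sc = a$, while the paper starts from $a = s(s^{-1}a)$ and derives the recursion — but the computation is identical.
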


\begin{proof}
  By~\eqref{eq:skew_polynomial_commutation_rule}, we have
  \begin{align}\label{eq:proof-of-inverse-skew-polynomial-coefficient-formula}
    a
    &= s \prn[\big]{s^{-1}a} \\
    &= s \sum_{d=0}^\infty b_d s^{-d} \\
    &= \sum_{d=0}^\infty\prn{\sigma(b_d)s + \delta(b_d)} s^{-d} \\
    &= \sigma(b_0)s + \sum_{d=0}^\infty \prn{\sigma\prn{b_{d+1}} + \delta(b_d)} s^{-d}.
  \end{align}
  The equation~\eqref{eq:proof-of-inverse-skew-polynomial-coefficient-formula} means $\sigma(b_0) = 0$ and $\sigma\prn{b_{d+1}} + \delta(b_d) = a_d$ for $d \in \setN$, which imply~\eqref{eq:skew-polynomial-coefficient-in-pi-a}.
\end{proof}

From~\eqref{eq:skew-polynomial-coefficient-in-pi-a}, we can compute the leading $M$ coefficients of $s^{-1} a$ by $\Order(M)$ applications of $\sigma^{-1}$ and $\delta$.
This is improved from $\Order\prn{M^2}$ based on~\eqref{eq:coefficient-in-pi-a}.
Applying this improvement and plugging $ln$ into $M$ in the time complexities in \cref{thm:complexity-of-combinatorial-relaxation}, we obtain \cref{thm:complexity-for-deg-Det-of-skew-polynomial-matrices}.
We can compute $\ord \Det$ of matrices over $K[s; \sigma]$ in the same way.
See \cref{sec:application2} for an application of these computations to differential equations.

\subsection{Characterizing Split DVSFs with Bounds}
In \cref{sec:bounds-for-skew-polynomial-rings}, we described that the valuation of the Dieudonné determinant of nonsingular $A = \sum_{d=0}^l A_d \pi^d \in F^{n \times n}$ is bounded by $ln$ when $F$ is a skew inverse Laurent series field.
Indeed, the converse also holds in the following sense.

\begin{theorem}\label{thm:characeterizing-skew-polynomials}
  Let $F$ be a complete split DVSF with coefficient skew subfield $K$ and uniformizer $\pi$.
  Then every $A = \sum_{d=0}^l A_d \pi^l \in \GL_n(F)$ with $A_0, \dotsc, A_d \in K^{n \times n}$ satisfies $\vdet(A) \le ln$ if and only if $F$ is isomorphic to $K\pprn[\big]{s^{-1}; \sigma, \delta}$ with some automorphism $\sigma$ and left $\sigma$-derivation $\delta$ on $K$.
\end{theorem}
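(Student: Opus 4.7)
The ``if'' direction is immediate from \cref{prop:skew-polynomials-bound} applied along any isomorphism $F \cong K\pprn[\big]{s^{-1}; \sigma, \delta}$ that sends $\pi$ to $s^{-1}$; such an isomorphism exists because, given the higher $\delta_0$-derivation data, the canonical identification $F \cong K\pprn[\big]{t; (\delta_d)}$ sends $\pi$ to $t$, and when $\delta_d = \delta_0 \partial^d$ (with $\partial := \delta_0^{-1} \delta_1$) the target is precisely $K\pprn[\big]{s^{-1}; \delta_0^{-1}, -\partial\delta_0^{-1}}$ under $t = s^{-1}$.

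For the ``only if'' direction I argue contrapositively. Every complete split DVSF with coefficient skew subfield $K$ and uniformizer $\pi$ is determined up to isomorphism by its higher $\delta_0$-derivation $(\delta_d)_{d \ge 0}$, where $\pi a = \sum_d \delta_d(a)\pi^{d+1}$. Comparing the description of skew inverse Laurent series fields in \cref{ex:skew_laurent_series} with the higher-derivation formalism of \cref{ex:higher-derivations}, $F$ is isomorphic to $K\pprn[\big]{s^{-1}; \sigma, \delta}$ for some $\sigma, \delta$ if and only if $\delta_d = \delta_0 \partial^d$ for every $d \ge 0$, with $\partial := \delta_0^{-1} \delta_1$. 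Since this is automatic at $d = 0, 1$, failure furnishes a smallest $d^* \ge 2$ with $\delta_{d^*} \neq \delta_0\partial^{d^*}$; fix $a_0 \in K$ with $\eta := \delta_{d^*}(a_0) - \delta_0\partial^{d^*}(a_0) \neq 0$ and set $l := d^* - 1$.

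I will construct a nonsingular $A = \sum_{d=0}^l A_d \pi^d \in F^{n \times n}$ with $A_d \in K^{n \times n}$ and $\vdet(A) > ln$, contradicting the hypothesis. By \cref{thm:legendre} and \cref{lem:rank_and_d_r_formulas}, this reduces to finding some $\mu$ with $\omega_\mu(A) < n(\mu - l)$, together with the invertibility criterion $\omega_{M+1}(A) - \omega_M(A) = n$ for large $M$. A direct blockwise expansion of $\pi^i A$ shows that $\delta_{d^*}$ first enters the expanded matrix $\Omega_\mu(A)$ at $\mu = d^* + 2$, namely in block $(1, d^*+1) = \sum_{d=0}^l \delta_{d^*-d}(A_d)$. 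Since $\delta_d = \delta_0\partial^d$ already holds for $d < d^*$ by minimality, this block differs from its ``fictitious skew-inverse-Laurent'' counterpart exactly by the matrix $[\delta_{d^*} - \delta_0\partial^{d^*}](A_0)$, which is nonzero as long as $a_0$ appears as an entry of $A_0$.

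The plan is then to choose $A$ so that, were the identity $\delta_{d^*} = \delta_0\partial^{d^*}$ to hold, $\Omega_{d^*+2}(A)$ would have rank exactly $3n$ (saturating the bound of \cref{prop:skew-polynomials-bound}), and to place $a_0$ inside $A_0$ so that the genuine deviation $\eta$ knocks out one pivot of $\Omega_{d^*+2}(A)$. This yields $\omega_{d^*+2}(A) < 3n$ and hence $\vdet(A) \ge n(d^*+2) - \omega_{d^*+2}(A) > ln$. The main obstacle is executing this construction concretely while preserving invertibility: a natural attempt is a companion-style $2 \times 2$ or $d^* \times d^*$ matrix with $A_0$ placing $a_0$ at a single entry and $A_1, \dotsc, A_l$ chosen to align the remaining blocks of $\Omega_{d^*+2}$ into a bound-saturating configuration; if direct verification proves delicate, one can reduce by induction on $d^*$ to the base case $d^* = 2$, where a $2 \times 2$ matrix with $l = 1$ can be analyzed explicitly via the Bruhat decomposition.
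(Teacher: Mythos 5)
Your ``if'' direction and your reduction of the ``only if'' direction to a minimal index $d^* \ge 2$ at which $(\delta_d)$ deviates from the family induced by $\sigma = \delta_0^{-1}$ and $\delta = -\delta_0^{-1}\delta_1\delta_0^{-1}$ both agree with the paper. The genuine gap is that you never actually produce the counterexample matrix: you describe a ``plan'' for making $\Omega_{d^*+2}(A)$ drop rank and then explicitly defer ``the main obstacle'' of ``executing this construction concretely while preserving invertibility,'' falling back on an unverified induction on $d^*$. That construction \emph{is} the content of the ``only if'' direction, so as written the proof is incomplete. There is also a concrete error in the one computation you do offer: the first block row of $\Omega_\mu(A)$ consists of the coefficient matrices of $\pi^0 A = A$ itself and involves no derivations whatsoever, so $\delta_{d^*}$ cannot ``first enter'' in block $(1, d^*+1)$; its first appearance is in the second block row, in $A^{(1)}_{d^*+1} = \sum_{d=0}^{\min\{l,\,d^*\}} \delta_{d^*-d}(A_d)$. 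Finally, your choice $l := d^* - 1$ makes the target bound $ln$ grow with $d^*$, which is exactly what makes your rank bookkeeping delicate.

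The paper avoids all of this machinery. Working with $\pi^{-1} a \pi = \sum_{d \ge 0} a'_d \pi^d$ rather than with $\pi a$, it shows that $F \cong K\pprn[\big]{s^{-1}; \sigma, \delta}$ precisely when $a'_d = 0$ for all $a \in K$ and $d \ge 2$; if not, it takes $a$ with $a'_k \ne 0$ for minimal $k \ge 2$ and sets
\begin{align}
  A = \begin{pmatrix} \pi & a\pi \\ 1 & a'_0 + a'_1\pi \end{pmatrix},
\end{align}
a fixed $2 \times 2$ matrix with $l = 1$ independent of $k$. A single elementary row operation (subtracting $\pi^{-1}$ times the first row from the second) turns the $(2,2)$ entry into $-\sum_{d \ge k} a'_d \pi^d$, giving nonsingularity and $\vdet(A) = 1 + k > 2 = ln$ immediately, with no appeal to \cref{thm:legendre}, expanded matrices, or induction. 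If you want to salvage your route, you must either carry out the rank computation for $\Omega_{d^*+2}(A)$ in full for an explicit $A$, or switch to a direct valuation computation of the paper's kind.
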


\begin{proof}
  The ``if'' part was shown in \cref{prop:skew-polynomials-bound}.
  We show the ``only if'' part.
  Let $(\delta_d)_{d \in \setN}$ be the higher $\delta_0$-derivatives corresponding to a complete split DVSF $F$.
  We put $\sigma \defeq {\delta_0}^{-1}$ and $\delta \defeq -{\delta_0}^{-1} \delta_1 {\delta_0}^{-1}$.
  The motivation of these notations is the following: if $F$ is isomorphic to $K\pprn[\big]{s^{-1}; \sigma', \delta'}$, then $\sigma' = \sigma$ and $\delta' = \delta$ by~\eqref{eq:higher_derivation_for_skew_inverse_laurent}.
  We can check that $\sigma$ is an automorphism and $\delta$ is a left $\sigma$-derivation.

  For $a \in K$, we put $\pi^{-1} a \pi \eqdef a' = \sum_{d=0}^\infty a'_d \pi^d$ with $a'_0, a'_1, \dotsc \in K$.
  We first show that if $a'_d = 0$ for any $a \in K$ and $d \ge 2$, then $F$ is isomorphic to $K\pprn[\big]{s^{-1}; \sigma, \delta}$.
  Suppose that $F$ satisfies this assumption and put $s \defeq \pi^{-1}$.
  Then it holds
  \begin{align}\label{eq:proof-of-dvsf-charcterization-sa}
    sa
    = \pi^{-1}a
    = a'\pi^{-1}
    = a'_0 \pi^{-1} + a'_1
    = a'_0 s + a'_1
  \end{align}
  for $a \in K$.
  From $a = \pi a' \pi^{-1}$ and~\eqref{eq:coefficient-in-pi-a} for $d = 0, 1$, we have $a = \delta_0\prn[\big]{a_0'}$ and $0 = \delta_0\prn[\big]{a_1'} + \delta_1\prn[\big]{a_0'}$.
  Solving these qualities for $a'_0$ and $a'_1$, we obtain
  \begin{align}
    a'_0 &= {\delta_0}^{-1}(a) = \sigma(a),\label{eq:proof-of-dvsf-charcterization-a0}\\
    a'_1 &= {\delta_0}^{-1}\prn[\big]{-\delta_1\prn[\big]{a_0'}} = -\prn[\big]{{\delta_0}^{-1} \delta_1 {\delta_0}^{-1}}(a) = \delta(a).\label{eq:proof-of-dvsf-charcterization-a1}
  \end{align}
  Substituting~\eqref{eq:proof-of-dvsf-charcterization-a0} and~\eqref{eq:proof-of-dvsf-charcterization-a1} into~\eqref{eq:proof-of-dvsf-charcterization-sa}, we have
  \begin{align}
    sa = \sigma(a)s + \delta(a),
  \end{align}
  which is nothing but the commutation rule~\eqref{eq:skew_polynomial_commutation_rule} of the skew polynomial ring $K[s; \sigma, \delta]$.
  Hence the ring generated by $\pi^{-1}$ over $K$, its Ore quotient skew field, and its completion $F$ with respect to the $\pi$-adic topology are isomorphic to $K[s; \sigma, \delta]$, $K(s; \sigma, \delta)$, and $K\pprn[\big]{s^{-1}; \sigma, \delta}$, respectively.

  Next, suppose that $F$ is not isomorphic to $K\pprn[\big]{s^{-1}; \sigma, \delta}$.
  From the contraposition of the above proof, there exists $a \in K$ such that $a'_d \ne 0$ for some $d \ge 2$; take such $a$ and let $k \ge 2$ be the minimum number with $a'_k \ne 0$.
  Consider
  \begin{align}
    A
    \defeq
    \begin{pmatrix}
      0 & 0 \\
      1 & a'_0
    \end{pmatrix}
    +
    \begin{pmatrix}
      1 & a \\
      0 & a'_1
    \end{pmatrix}
    \pi
    =
    \begin{pmatrix}
      \pi & a\pi \\
      1   & a'_0 + a'_1 \pi
    \end{pmatrix}
    \in F^{2 \times 2}.
  \end{align}
  The values of $l$ and $n$ for $A$ are $l = 1$ and $n = 2$.
  Multiplying an elementary matrix, we can transform $A$ into
  \begin{align}
    B \defeq
    \begin{pmatrix}
              1 & 0 \\
      -\pi^{-1} & 1
    \end{pmatrix}
    A
    =
    \begin{pmatrix}
      \pi & a\pi \\
      0   & a'_0 + a'_1 \pi - \pi^{-1} a \pi
    \end{pmatrix}
    =
    \begin{pmatrix}
      \pi & a\pi \\
      0   & - \sum_{d=k}^\infty a'_d \pi^d
    \end{pmatrix}.
  \end{align}
  Thus, $A$ is nonsingular and it holds
  \begin{align}
    \vdet(A) = \vdet(B) = v(\pi) + v\prn{\sum_{d=k}^\infty a'_d \pi^d} = 1 + k > 2 = ln,
  \end{align}
  where $v$ is the valuation on $F$.
\end{proof}

\Cref{thm:characeterizing-skew-polynomials} means that the condition ``$\vdet(A) \le ln$ for any $A = \sum_{d=0}^l A_d \pi^d \in \GL_n(F)$'' serves as a characterization of skew inverse Laurent series fields.
In this way, skew polynomials arise not only from an algebraic abstraction of linear differential/difference equations but also from the most natural condition for which the combinatorial relaxation and the matrix expansion algorithms are applicable.

\section{Application 1: Weighted Edmonds' Problem}\label{sec:application1}
This section describes applications of our algorithm to (commutative/noncommutative) weighted Edmonds' problem (WEP).
Throughout this section, we assume the arithmetic model on a field $K$.

Let $A = \sum_{d=0}^l A_{d-l} s^d$ be a square commutative or noncommutative linear polynomial matrix~\eqref{eq:linear_polynomial_matrix} over $K$.
That is, $A$ is in ${L(s)}^{n \times n}$, where $L \defeq K(x_1, \dotsc, x_m)$ in the commutative case and $L \defeq K\fsbr{x_1, \dotsc, x_m}$ in the noncommutative case.
Note that $L(s)$ is a split DVSF with valuation $-\deg$.
Instead of $A$, we deal with the following matrix
\begin{align}
  As^{-l} = \sum_{d=0}^l A_d s^{-d}.
\end{align}
Then we can compute $\vdet(A) = -\deg \Det A$ from $\vdet(As^{-l})$ by $\vdet(A) = \vdet(As^{-l}) - ln$.
Since $L(s)$ is a special case of skew rational function fields over $L$, i.e., $L(s) = L(s; \id, 0)$, we have $\vdet(As^{-l}) \le ln$ when $A$ is nonsingular by \cref{prop:skew-polynomials-bound}.

First, consider the combinatorial relaxation algorithm presented in \cref{sec:combinatorial-relaxation-algorithm}.
Since one cannot perform arithmetic operations on $L$ efficiently, it is not immediate to apply the combinatorial relaxation algorithm to $As^{-l}$.
In particular, the procedure of finding the matrix $U \in \GL_n(L)$ in \ref{item:Phase3_C} based on the Gaussian elimination on $L$ requires exponential number of arithmetic operations on $K$.
Nevertheless, in the noncommutative case, we can make use of the following property on nc-linear matrices given by Fortin--Reutenauer~\cite{Fortin2004}.

\begin{theorem}[{\cite[Theorem~1]{Fortin2004}}]\label{thm:MVSP}
  For an nc-linear matrix $B \in {K\fsbr{x_1, \dotsc, x_m}}^{n \times n'}$ over a field $K$, there exist $U \in \GL_n(K)$ and $V \in \GL_{n'}(K)$ such that $\trank UBV = \rank B$.
\end{theorem}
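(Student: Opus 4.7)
The plan is to prove the theorem via the equivalent min-max identity
\begin{align*}
  \rank B = \min \set*{\trank(UBV)}[U \in \GL_n(K),\ V \in \GL_{n'}(K)],
\end{align*}
so it suffices to exhibit a pair $(U, V)$ attaining this minimum. One inequality is immediate: for any invertible $U, V$ we have $\rank(UBV) = \rank B$ by invariance of rank under multiplication by invertible matrices, and $\rank(UBV) \le \trank(UBV)$ by \cref{prop:upper_bound_on_rank}; hence $\trank(UBV) \ge \rank B$ for every feasible $(U, V)$.

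For the nontrivial inequality I would invoke the following hollow-form theorem: if an nc-linear matrix $B$ has $\rank B = r$ over $L \defeq K\fsbr{x_1, \dotsc, x_m}$, then there exist $U \in \GL_n(K)$ and $V \in \GL_{n'}(K)$ such that
\begin{align*}
  UBV = \begin{pmatrix} * & * \\ O & * \end{pmatrix}
\end{align*}
with a zero block $O$ of size $(n - p) \times (n' - q)$ satisfying $p + q = r$. Given this, all nonzero entries of $UBV$ are covered by $p$ rows and $q$ columns, so by the Kőnig--Egerváry theorem (\cref{thm:konig}) we obtain $\trank(UBV) \le p + q = r = \rank B$, which combined with the previous inequality yields equality.

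The heart of the argument is establishing the hollow-form theorem. The approach I would take is to pick a minimal rank factorization $B = PQ$ with $P \in L^{n \times r}$ and $Q \in L^{r \times n'}$, which exists by the characterization of rank over the skew field $L$. A linearization argument exploiting the fact that $B$ is nc-linear in the $x_i$'s allows one to replace $P, Q$ by nc-linear matrices (with entries in $K + \sum_i K x_i$) after possibly enlarging the inner dimension $r$ and then extracting $K$-invertible factors; here one uses that $L$ is the Amitsur universal skew field of fractions of the free ring $K\agbr{x_1, \dotsc, x_m}$. By minimality of the factorization and by acting with $\GL_r(K)$ on the middle index, one arranges that a maximal block of rows of $P$ (resp.\ columns of $Q$) vanishes, which translates directly into the zero block of the stated shape in $UBV$.

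The main obstacle is establishing the hollow-form structural result itself, as it relies on deep properties of the free skew field---most crucially the coincidence of the rank of $B$ over $L$ with Cohn's inner rank of $B$ as a matrix over the free ring $K\agbr{x_1, \dotsc, x_m}$, and the associated theorem that minimal factorizations of linear matrices admit a ``constant'' linearization. An alternative route avoiding Cohn's machinery is to apply the shrunk-subspace min-max formula proved by Hamada--Hirai~\cite{Hamada2020} and Ivanyos et al.~\cite{Ivanyos2018}, from which the desired $U, V$ can be read off directly by choosing bases adapted to an optimal shrunk pair of subspaces.
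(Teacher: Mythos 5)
The paper does not prove this statement at all: it is imported verbatim as Theorem~1 of Fortin--Reutenauer~\cite{Fortin2004}, so there is no internal argument to compare yours against. Judged on its own terms, your proposal gets the easy inequality right ($\trank(UBV) \ge \rank(UBV) = \rank B$ by invariance of rank under invertible multiplication and \cref{prop:upper_bound_on_rank}), and it correctly identifies that everything hinges on producing one pair $(U,V)$ with a zero block of size $(n-p)\times(n'-q)$, $p+q=\rank B$. But note that by the K\H{o}nig--Egerv\'ary theorem this ``hollow-form theorem'' is \emph{equivalent} to the statement being proved, so invoking it is not a reduction --- it is a restatement. The entire content of Fortin--Reutenauer's theorem lives in that step, and your sketch of it leaves the two decisive ingredients unestablished: (i) that the rank of $B$ over $K\fsbr{x_1,\dotsc,x_m}$ coincides with Cohn's inner rank of $B$ over the free ring (this uses that the free field is the \emph{universal} field of fractions, and is a genuine theorem, not a formality); and (ii) that a minimal factorization of a \emph{linear} matrix can be normalized, by $\GL_r(K)$-action on the inner index, into a form from which constant $U,V$ and the zero block can be read off. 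The phrase ``one arranges that a maximal block of rows of $P$ vanishes'' is where the actual combinatorial/algebraic work happens, and as written it does not explain why the vanishing block of $P$ and the corresponding block of $Q$ assemble into a single $(n-p)\times(n'-q)$ zero block of $UBV$ rather than merely into separate degeneracies of the two factors.

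Your alternative route --- reading $U$ and $V$ off from an optimal vanishing/shrunk subspace pair furnished by the min-max theorems of Ivanyos et al.\ and Hamada--Hirai --- is sound and is in fact closer in spirit to how the surrounding text of the paper uses the result (the MVSP formulation): given subspaces $S \le K^n$, $T \le K^{n'}$ with $u^{\top} B v = 0$ for all $u \in S$, $v \in T$ and $\dim S + \dim T = n + n' - \rank B$, choosing bases adapted to $S$ and $T$ immediately yields $U, V \in \GL(K)$ and the zero block, whence $\trank(UBV) \le \rank B$ by K\H{o}nig--Egerv\'ary. If you want a self-contained argument, that is the path to flesh out; as it stands, your write-up is a correct proof \emph{plan} with the central theorem assumed rather than proved.
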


The problem of finding $U$ and $V$ satisfying $\trank UBV = \rank B$, which is a variant of nc-Edmonds' problem by \cref{thm:MVSP}, is called the \emph{maximum vanishing subspace problem} (MVSP) due to Hamada--Hirai~\cite{Hamada2020}.
The MVSP can be solved in deterministic polynomial-time~\cite{Hamada2020,Ivanyos2018}.
Therefore, by using the algorithms in~\cite{Hamada2020,Ivanyos2018} as oracles, we obtain a deterministic polynomial-time algorithm for the nc-WEP\@.
This algorithm indeed coincides with the steepest gradient descent algorithm given by Hirai~\cite{Hirai2019}.

\begin{theorem}[{\cite[Theorem~4.4]{Hirai2019}}]\label{thm:complexity-of-weighted-nc-edmonds-problem-combinatorial-relaxation}
  The nc-WEP for over a field $K$ can be solved in deterministic $\Order\prn{l^2mn^{\omega+2} + T_{\mathrm{MVSP}}(n, m)ln}$-time, where $T_{\mathrm{MVSP}}(n, m)$ denotes the time needed to solve the MVSP for an $n \times n$ nc-linear matrix with $m$ symbols over $K$.
\end{theorem}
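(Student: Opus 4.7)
The plan is to apply the improved combinatorial relaxation algorithm of \cref{sec:improved-algorithm} to the shifted matrix $As^{-l} \in {L(s)}^{n \times n}$, where $L \defeq K\fsbr{x_1, \dotsc, x_m}$. Observe that $L(s)$ is a split DVSF with uniformizer $s^{-1}$ and coefficient skew subfield $L$, and since $s$ commutes with every element of $L$, the associated higher $\delta_0$-derivation is trivial ($\delta_0 = \id$ and $\delta_d = 0$ for $d \ge 1$). Because $As^{-l}$ arises from a skew polynomial matrix with $l+1$ coefficients, \cref{prop:skew-polynomials-bound} yields $\vdet(As^{-l}) \le ln$ whenever $A$ is nonsingular, so we set $M \defeq ln$ and recover $\deg \Det A = ln - \vdet(As^{-l})$.

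The key structural invariant is that throughout all iterations, every coefficient matrix of $C^k$ in its $s^{-1}$-adic expansion is an nc-linear matrix over $K$, that is, an element of ${K\agbr{x_1, \dotsc, x_m}}^{n \times n}$ of degree at most $1$ in the noncommutative indeterminates, rather than an arbitrary element of $L^{n \times n}$. Initially this holds by definition of an nc-linear polynomial matrix. It is preserved because (i) the conjugations by $D(s^{-\Delta p})$ and $D(s^{-\Delta q})$ merely relabel coefficients, as $s^{-1}$ commutes with $L$, so the tight coefficient matrix $A^\# = B_0^{k+1}$ inherits nc-linearity from $C^k$; and (ii) in \ref{item:Phase3_C} we invoke Fortin--Reutenauer's \cref{thm:MVSP} via an MVSP oracle on $A^\#$ to obtain $U, V \in \GL_n(K)$ satisfying $\trank(UA^\# V) = \rank A^\# < n$, and a mild bilateral extension of the improved algorithm sets $C^{k+1} \defeq UB^{k+1}V$, which preserves both nc-linearity and $\vdet$ while strictly increasing $\tvdet$ by the same reasoning as \cref{lem:update-A,lem:correspondence-of-original-and-improved-combrel}.

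For the complexity, the number of outer iterations is $\Order(M) = \Order(ln)$ since $\gamma^{k+1} - \gamma^k \ge 1$ at every step. Per iteration, the MVSP oracle is called once at cost $T_{\mathrm{MVSP}}(n, m)$; the Hopcroft--Karp bipartite matching runs in $\Order(n^{2.5})$; and the bilateral update of the truncated coefficient matrices of $B^{k+1}$ and $C^{k+1}$ requires multiplying $U$ and $V$, both in $K^{n \times n}$, against each of the $\Order(M) = \Order(ln)$ nc-linear coefficient matrices in the truncation. A single product of a matrix in $K^{n \times n}$ with an $n \times n$ nc-linear matrix having $m$ symbols costs $\Order(m n^\omega)$, so the per-iteration arithmetic cost is $\Order(l m n^{\omega+1})$, which dominates the matching cost. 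Summing over $\Order(ln)$ outer iterations yields the stated bound $\Order\prn{T_{\mathrm{MVSP}}(n, m) \cdot ln + l^2 m n^{\omega+2}}$.

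The main obstacle is point (ii): the improved algorithm of \cref{sec:improved-algorithm} is stated with single-sided left multiplication, whereas MVSP naturally furnishes simultaneous transformations $U, V \in \GL_n(K)$. Verifying that the bilateral extension remains correct---i.e., that the analogues of \cref{lem:correspondence-of-original-and-improved-combrel,lem:validity-of-truncation} continue to hold when $V$ acts on the right, and that the condition~\eqref{eq:condition-of-U} translates cleanly into the MVSP guarantee $\trank(UA^\# V) = \rank A^\# < n$---is the crux of the argument. Once these bookkeeping issues are settled, the nc-linear invariant and the complexity accounting follow in a direct manner.
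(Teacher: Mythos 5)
Your proposal is correct and follows essentially the same route as the paper: the paper's (very terse) proof likewise runs the improved combinatorial relaxation on $As^{-l}$, replaces the Gaussian-elimination step in \ref{item:Phase3_C} by an MVSP oracle call yielding $U, V \in \GL_n(K)$, sets $C^{k+1} \defeq UB^{k+1}V$, and charges $\Order\prn{lmn^{\omega+1}}$ per iteration over $\Order(ln)$ iterations. Your explicit attention to the bilateral update and to the preservation of nc-linearity of the coefficient matrices is a point the paper leaves implicit, and your verification that the analogues of \cref{lem:update-A,lem:correspondence-of-original-and-improved-combrel} go through with a right factor $V$ is indeed the routine bookkeeping needed.
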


\begin{proof}
  In \ref{item:Phase3_C} of each iteration, we solve the MVSP to obtain $U,V \in \GL_n(K)$ and put $C^{k+1} \defeq UB^{k+1}V$.
  This matrix multiplication can be done in $\Order\prn{lmn^{\omega+1}}$ arithmetic operations on $K$.
  Since the number of iterations is $\Order\prn{ln}$, we obtain the desired time complexity.
\end{proof}

We remark that the time complexity in \cref{thm:complexity-of-weighted-nc-edmonds-problem-combinatorial-relaxation} is in terms of the arithmetic model on $K$.
In case of $\setK = \setQ$, the bit-lengths of intermediate numbers are not bounded, even if an algorithm for MVSP guarantees the bounded bit-length.
In addition, since \cref{thm:complexity-of-weighted-nc-edmonds-problem-combinatorial-relaxation} relies on \cref{thm:MVSP}, we cannot apply the combinatorial relaxation for the commutative problem.

We next apply the matrix expansion algorithm in \cref{sec:matrix-expansion-algorithm} to the WEP\@.
This application is rather immediate than that of the combinatorial relaxation algorithm.
Namely, if $A$ is a commutative (noncommutative) linear polynomial matrix over a field $K$, then the expanded matrix $\Omega_\mu(As^{-l})$ given by~\eqref{def:expanded-matrix} is a commutative (resp.\ noncommutative) linear matrix.
Hence the rank computation of $\Omega_\mu(As^{-l})$ is nothing but solving the commutative (resp.\ noncommutative) Edmonds' problem.
By \cref{lem:rank_and_d_r_formulas} and \cref{prop:skew-polynomials-bound}, we obtain the following:

\begin{theorem}\label{thm:complexity-of-weighted-nc-edmonds-problem-matrix-expansion}
  The commutative (noncommutative) WEP over a field $K$ can be solved in deterministic $\Order\prn{T_{\mathrm{EP}}(ln^2, m)}$-time, where $T_{\mathrm{EP}}(n, m)$ denotes the time needed to solve commutative (resp.\ noncommutative) Edmonds' problem for an $n \times n$ commutative (resp.\ noncommutative) linear matrix with $m$ symbols over $K$.
\end{theorem}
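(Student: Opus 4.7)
The plan is to apply the matrix expansion algorithm of \cref{sec:matrix-expansion-algorithm} to the shifted matrix $B \defeq As^{-l} = \sum_{d=0}^l A_d s^{-d}$. Writing $L \defeq K(x_1, \dotsc, x_m)$ in the commutative case and $L \defeq K\fsbr{x_1, \dotsc, x_m}$ in the noncommutative case, $B$ is a proper matrix over the split DVSF $L(s)$ with uniformizer $\pi \defeq s^{-1}$ and discrete valuation $-\deg$. By \cref{prop:skew-polynomials-bound}, the value $M \defeq ln$ satisfies the hypothesis required by the matrix expansion algorithm: $\vdet(B) \le M$ whenever $A$ is nonsingular, and $\vdet(B) = +\infty$ otherwise. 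Once $\vdet(B)$ is known, the sought value is recovered via $\deg \Det A = -\vdet(A) = ln - \vdet(B)$.

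The crucial observation is that the expanded matrix $\Omega_\mu(B)$ is itself a linear matrix in $x_1, \dotsc, x_m$ over $K$. Since $s$ commutes with every $x_j$ by the definition of (nc-)linear polynomial matrices, so does $\pi$; hence $\pi^i B = \sum_{d=0}^l A_d \pi^{d+i}$. This yields $A_d^{(i)} = A_{d-i}$ under the convention $A_e \defeq O$ for $e \notin \intset{0, l}$, so by \eqref{def:expanded-matrix} the $(i,j)$-block of $\Omega_\mu(B)$ is simply $A_{j-i}$; in other words, $\Omega_\mu(B)$ is a block upper-triangular Toeplitz arrangement of the coefficient matrices $A_0, \dotsc, A_l$. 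Since each $A_d$ is already linear in $x_1, \dotsc, x_m$ with coefficients in $K$ and no products of symbols are introduced, $\Omega_\mu(B)$ remains a linear matrix in the same $m$ symbols over $K$.

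With $M = ln$, I form the two linear matrices $\Omega_M(B)$ and $\Omega_{M+1}(B)$, each of size $\Order(ln^2) \times \Order(ln^2)$, and invoke the (commutative, resp.\ noncommutative) Edmonds' problem oracle once on each. By \cref{lem:rank_and_d_r_formulas}, $A$ is singular if and only if $\omega_{M+1}(B) - \omega_M(B) < n$; otherwise $\vdet(B) = Mn - \omega_M(B)$ and $\deg \Det A = ln - \vdet(B)$. Constructing the expanded matrices takes $\Order(l^2 n^4)$ time, which is dominated by $T_{\mathrm{EP}}(ln^2, m)$ since any oracle solving Edmonds' problem on an $N \times N$ linear matrix must at least read its input. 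I do not foresee any real obstacle; the whole content of the theorem reduces to the preservation of linearity under matrix expansion, which is automatic because $s$ commutes with each $x_j$.
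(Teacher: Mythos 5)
Your proposal is correct and follows essentially the same route as the paper: shift to $B = As^{-l}$, use \cref{prop:skew-polynomials-bound} to justify $M = ln$, observe that $\Omega_\mu(B)$ is again a (nc-)linear matrix so its rank computation is an instance of Edmonds' problem on a matrix of order $\Order(ln^2)$, and recover $\deg\Det A$ via \cref{lem:rank_and_d_r_formulas}. The only difference is that you spell out explicitly why $\Omega_\mu(B)$ stays linear (the block upper-triangular Toeplitz structure $A_d^{(i)} = A_{d-i}$ coming from $s$ commuting with the symbols), a point the paper asserts without elaboration.
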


The algorithms of Gurvits~\cite{Gurvits2004} and Ivanyos et al.~\cite{Ivanyos2018} deterministically solve nc-Edmonds' problem with polynomially bounded bit complexity when $K = \setQ$.
Using these algorithm as oracles, we obtain \cref{thm:nc-WEP-complexity}.

\begin{remark}
  In view of combinatorial optimization, the algorithm given in \cref{thm:nc-WEP-complexity} is regarded as pseudo-polynomial time algorithms since the running time depends on a polynomial of the maximum exponent $l$ of $s$ instead of $\poly(\log l)$.
  Recently, Hirai--Ikeda~\cite{Hirai2020b} presented algorithms to solve the nc-WEP over $K$ for an nc-linear polynomial matrix in form of
  \begin{align}\label{eq:nc-linear-polynomial-matrix-ikeda}
    A = \sum_{k=0}^m A_k x_k s^{w_k},
  \end{align}
  where $A_1, \dotsc, A_m \in K^{n \times n}$ and $w_1, \dotsc, w_m \in \setZ$.
  The nc-WEP for~\eqref{eq:nc-linear-polynomial-matrix-ikeda} includes the weighted linear matroid intersection problem.
  An algorithm of Hirai--Ikeda runs in strongly polynomial time, i.e., it runs in time polynomial of $n$ and $m$.

  As an extension of a different direction, it is natural to try to solve the (commutative) WEP for
  \begin{align}\label{eq:nc-linear-polynomial-matrix-kronecker}
    A = \sum_{k=0}^m A_k s^{w_k},
  \end{align}
  where $A_1, \dotsc, A_m \in K^{n \times n}$ and $w_1, \dotsc, w_m \in \setZ$.
  However, setting $w_k \defeq \prn{n+1}^k$ for $k \in \intset{m}$ would make the rank of~\eqref{eq:nc-linear-polynomial-matrix-kronecker} the same as that of a linear matrix $\sum_{k=0}^m A_k x_k \in {K[x_1, \dotsc, x_m]}^{n \times n}$ (the \emph{Kronecker substitution}).
  Since giving a deterministic polynomial-time algorithm for Edmonds' problem has been open for more than half a century, computing $\deg \det$ of~\eqref{eq:nc-linear-polynomial-matrix-kronecker} is also quite challenging.
\end{remark}

\section{Application 2: Linear Differential/Difference Equations}\label{sec:application2}
In this section, we explain that dimensions of solution spaces of linear differential and difference equations can be characterized as valuations of the Dieudonné determinants.
These formulas provide applications of our algorithms to analyses of linear time-varying systems.

\subsection{\texorpdfstring{$\sigma$}{σ}-Differential Equations}

Let $R$ be a commutative ring endowed with a ring automorphism $\funcdoms{\sigma}{R}{R}$ and a left $\sigma$-derivation $\funcdoms{\delta}{R}{R}$.
A \emph{$\sigma$-differential ring} is the triple $(R, \sigma, \delta)$, or $R$ itself when $\sigma$ and $\delta$ are clear.
A \emph{$\sigma$-differential field} is a $\sigma$-differential ring which is a field.
If $\sigma = \id$, then $\sigma$-differential rings and fields are simply called \emph{differential} rings and fields.
Similarly, $\sigma$-differential rings and fields with $\delta = 0$ are called \emph{difference} rings and fields.

A \emph{constant} of a $\sigma$-differential ring $(R, \sigma, \delta)$ is an element $a \in R$ such that $\sigma(a) = a$ and $\delta(a) = 0$.
The set of all constants of $(R, \sigma, \delta)$ is denoted by $\Const_{\sigma,\delta}(R)$ or by $\Const(R)$.
It is easily checked that $\Const(R)$ is a subring of $R$, and if $R$ is a field, so is $\Const(R)$.

An additive map $\funcdoms{\theta}{R}{R}$ is said to be \emph{pseudo-linear} if it satisfies
\begin{align}\label{def:pseudo_linear}
  \theta(ab) = \sigma(a)\theta(b) + \delta(a)b
\end{align}
for all $a,b \in R$.
Recall from \cref{ex:skew_laurent_series} that $R[s; \sigma, \delta]$ denotes the skew polynomial ring over $(R, \sigma, \delta)$.
Then $\theta$ induces a left $R[s; \sigma, \delta]$-module structure on $R$, where the action $\funcdoms{\bullet}{R[s; \sigma, \delta] \times R}{R}$ is defined by
\begin{align}\label{def:pseudo_linear_action}
  \prn{\sum_{d=0}^l a_d s^d} \bullet b
  \defeq
  \sum_{d=0}^l a_d \theta^d(b)
\end{align}
for $a_0, \dotsc, a_l, b \in R$.
It can be checked that $\bullet$ satisfies the axioms of actions; for example, by~\eqref{eq:skew_polynomial_commutation_rule} and~\eqref{def:pseudo_linear}, it holds
\begin{align}
  (sa) \bullet b
  = (\sigma(a)s + \delta(a)) \bullet b
  = \sigma(a)\theta(b) + \delta(a)b
  = \theta(ab)
  = s \bullet (ab)
\end{align}
for $a,b \in R$.
Abusing notations, we represent by $\theta$ in place of $s$ the indeterminate of the skew polynomial ring that acts on $R$ by~\eqref{def:pseudo_linear_action}.
We also write $p \bullet b$ as $p(b)$ for $p \in R[\theta; \sigma, \delta]$.

An $l$th-order (scalar) \emph{linear $\sigma$-differential equation} over $R$ is an equation for $y \in R$ in the form of
\begin{align}\label{eq:scalar_linear_sigma_differential_equation}
  a_0 y + a_1 \theta(y) + \dotsb + a_{l-1}\theta^{l-1}(y) + a_l \theta^l(y) = f,
\end{align}
where $a_0, \dotsc, a_l, f \in R$.
The equation~\eqref{eq:scalar_linear_sigma_differential_equation} can be written as $p(y) = f$ by using a skew polynomial $p \defeq a_0 + a_1 \theta + \dotsb + a_l \theta^l \in R[\theta; \sigma, \delta]$.
We call $\theta$ in~\eqref{eq:scalar_linear_sigma_differential_equation} the \emph{$\sigma$-differential operator}.
If $\sigma = \id$ and $\theta = \delta$, then $\sigma$-differential equations are called \emph{linear differential equations}.
Similarly, if $\delta = 0$ and $\theta = \sigma$, then $\sigma$-differential equations are said to be \emph{linear difference equations}.
The equation~\eqref{eq:scalar_linear_sigma_differential_equation} is said to be \emph{homogeneous} when $f = 0$ and \emph{inhomogeneous} when $f \ne 0$.

Let $\theta(y)$ denotes $\prn{\theta(y_i)}_{i \in \intset{n}}$ for $y = \prn{y_i}_{i \in \intset{n}} \in R^n$.
An $l$th-order $n$-dimensional (matrix) \emph{linear $\sigma$-differential equation} over $R$ is an equation for $y \in R^n$ in form of
\begin{align}\label{eq:matrix_linear_sigma_differential_equation}
  A_0 y + A_1 \theta(y) + \dotsb + A_{l-1}\theta^{l-1}(y) + A_l \theta^l(y) = f,
\end{align}
where $A_0, \dotsc, A_l \in R^{n \times n}$ and $f \in R^n$.
Using a skew polynomial matrix $A \defeq A_0 + A_1 \theta + \dotsb + A_l \theta^l \in {R[\theta; \sigma, \delta]}^{n \times n}$, the equation~\eqref{eq:matrix_linear_sigma_differential_equation} is simply expressed as
\begin{align}\label{eq:matrix_linear_difference_equation_operator}
  A(y) = f.
\end{align}
The \emph{solution space} of~\eqref{eq:matrix_linear_difference_equation_operator} is defined as $V \defeq \set{y \in R^n}[A(y) = f]$.
It is easily checked that $V$ forms an affine module\footnote{
  Affine modules are a generalization of affine spaces obtained by replacing tangent vector spaces with modules.
  They are nothing but affine spaces if $\Const(R)$ is a field.
} over $\Const(R)$ unless $V = \varnothing$.

Suppose that $R$ is a field $K$.
Indeed, any $\sigma$-differential equation over a $\sigma$-differential field is essentially either a (usual) differential or difference equation.
This follows from the following facts.

\begin{proposition}[{\cite[Lemma~5]{Bronstein2000},~\cite[Lemma~1]{Bronstein1996}}]\label{prop:pseudo_linear_properties}
  Let $(K, \sigma, \delta)$ be a $\sigma$-differential field.
  Then the following hold:
  \begin{enumerate}
    \item An additive map $\funcdoms{\theta}{K}{K}$ is pseudo-linear if and only if it is in the form of $\gamma \sigma + \delta$ for some $\gamma \in K$.
    \item If $\sigma \neq \id$, then there exists $\alpha \in K$ such that $\delta = \alpha(\sigma - \id)$.
  \end{enumerate}
\end{proposition}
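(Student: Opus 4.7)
The plan is to handle the two parts essentially independently, using only the defining identity~\eqref{def:pseudo_linear} for part~(1) and the commutativity of~$K$ combined with the left $\sigma$-derivation axiom for part~(2).

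For part~(1), the backward direction is a direct verification: if $\theta = \gamma\sigma + \delta$, then using the fact that $\sigma$ is a ring homomorphism and $\delta$ is a left $\sigma$-derivation, we compute
\begin{align}
  \theta(ab) = \gamma\sigma(a)\sigma(b) + \sigma(a)\delta(b) + \delta(a)b = \sigma(a)\bigl(\gamma\sigma(b) + \delta(b)\bigr) + \delta(a)b = \sigma(a)\theta(b) + \delta(a)b,
\end{align}
which is~\eqref{def:pseudo_linear}. For the forward direction, I would set $b = 1$ in~\eqref{def:pseudo_linear}. A preliminary remark is that $\delta(1) = 0$, which follows from $\delta(1) = \delta(1\cdot 1) = \sigma(1)\delta(1) + \delta(1)\cdot 1 = 2\delta(1)$. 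Then $\theta(a) = \theta(a\cdot 1) = \sigma(a)\theta(1) + \delta(a)$, so $\theta = \gamma\sigma + \delta$ with $\gamma \defeq \theta(1)$.

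For part~(2), the key observation is that $K$ is commutative, so $\delta(ab) = \delta(ba)$ for all $a, b \in K$. Expanding both sides through the $\sigma$-derivation rule yields
\begin{align}
  \sigma(a)\delta(b) + \delta(a)b = \sigma(b)\delta(a) + \delta(b)a,
\end{align}
which rearranges to the symmetric identity
\begin{align}
  \delta(a)\bigl(\sigma(b) - b\bigr) = \delta(b)\bigl(\sigma(a) - a\bigr) \qquad (a, b \in K).
\end{align}
Since $\sigma \ne \id$, fix $c \in K$ with $\sigma(c) \ne c$ and set $\alpha \defeq \delta(c)/(\sigma(c) - c) \in K$. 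Specializing the identity with $b = c$ gives $\delta(a) = \alpha\bigl(\sigma(a) - a\bigr)$ for every $a \in K$, i.e., $\delta = \alpha(\sigma - \id)$.

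There is no substantive obstacle here; the only point worth care is invoking $\delta(1) = 0$ in part~(1) (which is automatic from the $\sigma$-derivation axiom) and exploiting the commutativity of $K$ in part~(2). Both steps are small but essential: without commutativity, the symmetric identity $\delta(a)(\sigma(b)-b) = \delta(b)(\sigma(a)-a)$ would fail, and the statement itself would no longer hold in the noncommutative setting.
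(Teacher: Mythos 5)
Your proof is correct: the paper itself gives no proof of this proposition (it defers to the cited lemmas of Bronstein), and your argument is exactly the standard one from those references — substituting $b=1$ into~\eqref{def:pseudo_linear} for part~(1) and exploiting $\delta(ab)=\delta(ba)$ on the commutative field $K$ for part~(2). The only cosmetic point is that the remark $\delta(1)=0$ is not actually needed in the forward direction of part~(1), since setting $b=1$ already yields $\theta(a)=\sigma(a)\theta(1)+\delta(a)\cdot 1$ directly.
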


By \cref{prop:pseudo_linear_properties}, a pseudo-linear map $\theta$ can be written as $\theta = \delta + \gamma$ if $\alpha = \id$ and as $\theta = (\alpha + \gamma)\sigma + \alpha$ if $\sigma \neq \id$.
Expanding $\theta^d$ for $d = 1, \dotsc, l$ using these equations, any $\sigma$-differential equation $p(y) = 0$ with $p \in K[\theta; \sigma, \delta]$ is represented as $q(y) = 0$ for some $q \in K[\delta; \id, \delta]$ if $\sigma = \id$ and as $q'(y) = 0$ for some $q' \in K[\sigma; \sigma, 0]$ if $\sigma \neq \id$.
A typical example of this reduction is the replacement of the difference operator in a difference equation by the shift operator.
Therefore, it essentially suffices to consider only differential equations ($\theta = \delta$) over a differential field and difference equations ($\theta = \sigma$) over a difference field.
Nonetheless, we make use of the notion of $\sigma$-differential equations whenever possible since it provides a useful framework unifying differential and difference equations.

\subsection{Dimensions of Solution Spaces}
Let $(K, \sigma, \delta)$ be a differential ($\sigma = \id$) or difference ($\delta = 0$) field.
We put $\theta \defeq \delta$ in the differential case and $\theta \defeq \sigma$ in the difference case.
Consider a differential or difference equation~\eqref{eq:matrix_linear_difference_equation_operator} over $K$ and suppose that~\eqref{eq:matrix_linear_difference_equation_operator} has at least one solution.
The solution space $V$ of~\eqref{eq:matrix_linear_difference_equation_operator} forms an affine space over $C \defeq \Const(K)$ as stated above.
Now our question is how large the dimension $\dim_C V$ of $V$ over $C$ is.
This quantity is rephrased as the number of values we must designate to determine a solution of~\eqref{eq:matrix_linear_difference_equation_operator} uniquely.
An upper bound on $\dim_C V$ is given in terms of $\deg \Det$ and $\ord \Det$ of $A$ as follows.
This is partially given in~\cite[Lemma~1.10]{VanderPut2003},~\cite[Corollary~4.9]{Singer2016},~\cite[Theorem~6]{Abramov2014b}, and~\cite[Corollary~2.2]{Taelman2006}, whereas they assume $\ch(K) = 0$ which is not needed to show the following.
Here, we describe complete a proof based on their proofs.

\begin{proposition}\label{prop:upper_bound_of_dim}
  Let $(K, \sigma, \delta)$ be a differential or difference field with $C \defeq \Const(K)$.
  Let $V$ be the solution space of $A(y) = f$ with $A \in {K[\theta; \sigma, \delta]}^{n \times n}$ and $f \in K^n$ and suppose $V \ne \varnothing$.
  Then the following hold:
  \begin{enumerate}
    \item If the field extension $K \extends C$ is infinite, then $\dim_C V$ is finite if and only if $A$ is nonsingular.\label{item:upper_bound_of_dim_1}
    \item If $A$ is nonsingular, it holds $\dim_C V \le \deg \Det A$ in the differential case and $\dim_C V \le \deg \Det A - \ord \Det A$ in the difference case.\label{item:upper_bound_of_dim_2}
  \end{enumerate}
\end{proposition}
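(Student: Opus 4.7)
My plan is to diagonalize $A$ via the Jacobson normal form (\cref{prop:jacobson}) and reduce the problem to scalar $\sigma$-differential equations. Since $V \ne \varnothing$, translating by a particular solution reduces to the homogeneous case $f = 0$ without changing $\dim_C V$. The ring $K[\theta; \sigma, \delta]$ is a PID, so I write $A = UDV'$ with $U, V' \in \GL_n\prn{K[\theta; \sigma, \delta]}$ and $D = \diag(d_1, \dotsc, d_r, 0, \dotsc, 0)$, where $r = \rank A$. Pseudo-linearity makes $\theta$ a $C$-linear map (since $\sigma$ fixes $C$ and $\delta$ kills $C$), so the componentwise action $\bullet$ of~\eqref{def:pseudo_linear_action} turns any skew polynomial matrix into a $C$-linear endomorphism of $K^n$; for invertible matrices the action is a bijection whose inverse is furnished by the inverse matrix. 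Hence $y \mapsto V'(y)$ gives a $C$-linear isomorphism from the solution space of $A(y) = 0$ onto that of $D(z) = 0$, which decomposes coordinatewise into $\ker d_i$ for $i \in \intset{r}$ together with free entries $z_{r+1}, \dotsc, z_n \in K$. This yields
\[
  \dim_C V = \sum_{i=1}^r \dim_C \ker d_i + (n - r)\dim_C K.
\]

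If $A$ is singular, the last term forces $\dim_C V = \infty$ whenever $K/C$ is infinite, giving one direction of~(\ref{item:upper_bound_of_dim_1}). The remaining assertions reduce to the scalar claim: for a nonzero $d \in K[\theta; \sigma, \delta]$, $\dim_C \ker d \le \deg d$ in the differential case and $\dim_C \ker d \le \deg d - \ord d$ in the difference case. I plan to prove this by induction on the order. In the differential case, if $\ker d \ne 0$, I fix $y_0 \in \ker d \setminus \set{0}$ and form the degree-$1$ operator $M \defeq \delta - \delta(y_0)y_0^{-1}$, which satisfies $M(y_0) = 0$. Since $K[\delta; \id, \delta]$ is Euclidean, right division yields $d = QM$ with $\deg Q = \deg d - 1$. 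The $C$-linear exact sequence $0 \to \ker M \to \ker d \to \ker Q$ induced by $y \mapsto M(y)$, together with the direct identity $\ker M = C\,y_0$ (since $M(y) = 0$ iff $\delta(y/y_0) = 0$) and the inductive bound on $\ker Q$, gives $\dim_C \ker d \le 1 + (\deg d - 1) = \deg d$. The difference case is analogous after substituting $w = \sigma^{\ord d}(y)$ (a $C$-linear bijection, as $\sigma$ is an automorphism) to reduce to $\ord d = 0$; the same right-division induction then applies with $M = \sigma - \sigma(y_0)y_0^{-1}$.

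Summing the scalar bounds over $i$ and noting that $U, V' \in \GL_n(K[\theta; \sigma, \delta])$ are biproper over the valuation rings $K\ssqbr[\big]{\theta^{-1}; \sigma, \delta}$ and, in the difference case, also over $K\ssqbr[\big]{\theta; \sigma}$, \cref{prop:biproper_equivalence} yields $\deg \Det U = \deg \Det V' = 0$ and, analogously, $\ord \Det U = \ord \Det V' = 0$. Hence $\sum_i \deg d_i = \deg \Det D = \deg \Det A$ and likewise for $\ord$. This establishes the bounds in~(\ref{item:upper_bound_of_dim_2}) and, together with their finiteness, the remaining direction of~(\ref{item:upper_bound_of_dim_1}). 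The main obstacle is the scalar bound in positive characteristic, where the classical Wronskian-vanishing criterion for $C$-linear dependence can fail; the right-division induction above circumvents Wronskians entirely and works uniformly in all characteristics, which is crucial for lifting the characteristic-zero hypothesis of Taelman's original formulation.
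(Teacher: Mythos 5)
Your argument is correct, and while it shares the paper's outer skeleton (translate by a particular solution to reduce to the homogeneous case, diagonalize via the Jacobson normal form of \cref{prop:jacobson}, and conclude $\sum_i \deg d_i = \deg\Det A$ from $\deg\Det U = \deg\Det V' = 0$), your treatment of the scalar case is genuinely different from the paper's. The paper converts a scalar operator $p$ of degree $l$ into an $l$-dimensional \emph{first-order} system via the companion matrix, and then proves $\dim_C V = \dim_K V \le n$ for first-order systems $\theta(y) = -A_0y$ by showing directly that solutions linearly dependent over $K$ are already dependent over $C$. You instead stay scalar and induct on $\deg d$: pick $y_0 \in \ker d \setminus \set{0}$, right-divide $d = QM + r$ by the degree-one operator $M$ annihilating $y_0$ (so $r y_0 = d(y_0) = 0$ forces $r = 0$ in the domain $K[\theta;\sigma,\delta]$), and use $\dim_C\ker d \le \dim_C\ker M + \dim_C\ker Q$ with $\ker M = Cy_0$. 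Both arguments are Wronskian-free and characteristic-independent; yours has the advantage of never leaving the scalar setting (in particular you avoid the paper's extra normalization $\ord p = 0$ needed to make the companion matrix invertible in the difference case, since the crude bound $\deg Q - \ord Q \le \deg Q$ suffices in your induction), while the paper's first-order lemma is reusable information in its own right ($\dim_K = \dim_C$ for solution spaces of first-order systems). One shared imprecision worth flagging: a matrix in $\GL_n(K[\theta;\sigma,\delta])$ is not literally biproper for $-\deg$ (entries of positive degree lie outside the valuation ring), so \cref{prop:biproper_equivalence} does not apply directly; the clean justification of $\deg\Det U = 0$ is that $\deg\Det U + \deg\Det U^{-1} = 0$ while both summands are nonnegative by the Jacobson-form argument of \cref{prop:skew-polynomials-bound}. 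The paper's own Step~3 makes the identical loose claim, so this is not a defect specific to your proof, but it is worth repairing in either version.
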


\begin{proof}
  For any $v \in V$, the $C$-vector space $V - v \defeq \set{y - v}[y \in V]$ is the solution space of $A(y) = 0$.
  Hence it suffices to consider only homogeneous equations.
  Our proof consists of three steps: we show the claims for first-order homogeneous equations in Step~1, for scalar homogeneous equations in Step~2, and for general homogeneous equations in Step~3.

  (Step~1)
  Consider the case when $A = A_0 + I_n\theta$ and $f = 0$, i.e., the corresponding linear $\sigma$-differential equation is
  \begin{align}\label{eq:upper_bound_of_dim_step_1_eq}
    \theta(y) = -A_0 y.
  \end{align}
  We further require $A_0$ to be nonsingular only in the difference case.
  Since $A$ is nonsingular, it suffices to show only~\ref{item:upper_bound_of_dim_2}.
  Then $A\theta^{-1} = A_0\theta^{-1} + I_n$ is proper as a matrix over $K(\theta; \sigma, \delta)$ with valuation $-\deg$.
  Since $I_n$ is nonsingular, it holds $\deg \Det A \theta^{-1} = 0$ by \cref{prop:biproper_equivalence} and thus $\deg \Det A = n$.
  Similarly, in the difference case, it holds $\ord \Det A = 0$ by the nonsingularity of $A_0$.
  Therefore, our goal is to show $\dim_C V \le n$ in both cases.
  Since $\dim_K V \le n$ is clear, it suffices to prove $\dim_K V = \dim_C V$.

  Let $v_1, \dotsc, v_m \in V$ be solutions of~\eqref{eq:upper_bound_of_dim_step_1_eq} that are linearly dependent over $K$.
  We show that they are also dependent over $C$, which implies $\dim_K V = \dim_C V$.
  Without loss of generality, we assume that $v_2, \dotsc, v_m$ are linearly independent over $K$.
  Then there uniquely exists $c_2, \dotsc, c_m \in K$ such that $v_1 = \sum_{i=2}^m c_i v_i$.
  Then it holds
  \begin{align}
    0
    &= \theta\prn{v_1 - \sum_{i=2}^m c_i v_i}
     = \theta\prn{v_1} - \sum_{i=2}^m \theta\prn{c_i v_i} \\
    &= -A_0v_1 - \sum_{i=2}^m \prn{\sigma\prn{c_i}\theta\prn{v_i} + \delta\prn{c_i}v_i} \\
    &= -A_0\sum_{i=2}^m c_i v_i - \sum_{i=2}^m \prn{-\sigma\prn{c_i} A_0 v_i + \delta\prn{c_i}v_i}\\
    &= A_0\sum_{i=2}^m \prn{\sigma\prn{c_i} - c_i} v_i - \sum_{i=2}^m \delta(c_i) v_i.
  \end{align}
  In the differential case, we have $0 = -\sum_{i=2}^m \delta\prn{c_i} v_i$ by $\sigma = \id$.
  From the independence of $v_2, \dotsc, v_m$, it must holds $\delta\prn{c_i} = 0$, which means $c_i \in C$ for $i = 2,\dotsc, m$.
  In the difference case, we have $0 = \sum_{i=2}^m \prn{\sigma\prn{c_i} - c_i} v_i$ from $\delta = 0$ and the assumption that $A_0$ is nonsingular.
  Hence we obtain $\sigma\prn{c_i} = c_i$ and thus $c_i \in C$ for $i = 2,\dotsc, m$.
  Thus $v_1, \dotsc, v_m$ are also linearly dependent over $C$ in both cases.

  (Step~2)
  Consider a scalar homogeneous linear differential or difference equation $p(y) = 0$ with $p = \sum_{d=0}^l a_d \theta^d \in K[\theta; \sigma, \delta]$.
  When $p = 0$, the solution space $V$ coincides with $K$.
  Thus $\dim_C V = \dim_C K$ is infinite when $K \extends C$ is infinite.
  Suppose that $p \ne 0$ and $\deg p = l$, i.e., $a_l \ne 0$.
  In the difference case, as $\theta = \sigma$ is bijective, $p(y) = 0$ and $p'(y) = 0$ with $p' \defeq \theta^{-\ord p}p$ have the same solution spaces.
  Moreover, by $\deg p' = \deg p - \ord p$ and $\ord p' = 0$, it holds $\deg p' - \ord p' = \deg p - \ord p$.
  Therefore, in the difference case, we can assume $\ord p = 0$ (i.e., $a_0 \ne 0$) without loss of generality.

  We construct the following $l$-dimensional matrix linear differential or difference equation:
  \begin{align}\label{eq:companion_matrix_equation}
    \theta \begin{pmatrix}
      y_0 \\
      y_1 \\
      \vdots \\
      y_{l-2} \\
      y_{l-1}
    \end{pmatrix}
    =
    \begin{pNiceMatrix}
                     0 &                1 &      0 &               \Cdots &      0 \\
                \Vdots &           \Ddots & \Ddots &               \Ddots & \Vdots \\
                \Vdots &                  & \Ddots &               \Ddots &      0 \\
                     0 &           \Cdots & \Cdots &                    0 &      1 \\
      -\frac{a_0}{a_l} & -\frac{a_1}{a_l} & \Cdots & -\frac{a_{l-2}}{a_l} & -\frac{a_{l-1}}{a_l}
    \end{pNiceMatrix}
    \begin{pNiceMatrix}
      y_0 \\
      y_1 \\
      \Vdots \\
      y_{l-2} \\
      y_{l-1}
    \end{pNiceMatrix}.
  \end{align}
  If $y \in K$ is a solution of $p(y) = 0$, then $\trsp{\prn[\big]{y, \theta(y), \dotsc, \theta^{l-1}(y)}} \in K^n$ is a solution of~\eqref{eq:companion_matrix_equation}.
  Conversely, any solution of~\eqref{eq:companion_matrix_equation} is obtained in this way.
  Therefore, the solution space $W$ of~\eqref{eq:companion_matrix_equation} is isomorphic to $V$ as $C$-vector spaces.
  In the differential case, $\dim_C W = l = \deg p$ by the above proof of Step~1.
  In the difference case, the matrix in the right-hand side of~\eqref{eq:companion_matrix_equation} is nonsingular by $a_0 \ne 0$.
  Hence $\dim_C W = l = \deg p - \ord p$ again from Step~1.

  (Step~3)
  Consider a matrix homogeneous differential or difference equation $A(y) = 0$ with $A \in {K[\theta; \sigma, \delta]}^{n \times n}$.
  Let $D = UAW = \diag(d_1, \dotsc, d_n)$ be the Jacobson normal form of $A$ over $K[\theta; \sigma, \delta]$.
  Putting $z = \prn{z_1, \dotsc, z_n} \defeq W(y)$, the solution space sof $A(y) = 0$ and $D(z) = 0$ are isomorphic as $C$-vector spaces.
  Since $D$ is diagonal, the solution space of $D(z) = 0$ is the direct sum of the solution space $V_i$ of $d_i(z_i) = 0$ for $i \in \intset{n}$.
  Namely, it holds
  \begin{align}\label{eq:dim_C_V_decompose}
    \dim_C V = \sum_{i=1}^n \dim_C V_i.
  \end{align}

  If $A$ (and thus $D$) is singular, there exists $i \in \intset{n}$ such that $d_i = 0$.
  Thus $\dim_C V$ is infinite when $K \extends C$ is infinite by the above Step~2 and~\eqref{eq:dim_C_V_decompose}.
  Suppose that $A$ is nonsingular.
  Since $U$ and $W$ are invertible over $K[\theta; \sigma, \delta]$, they are biproper over $K(\theta; \sigma, \delta)$ with valuation $\deg$ and over $K(\theta; \sigma, 0)$ with valuation $\ord$ in the difference case.
  Thus $\deg \Det$ of $U$ and $W$ are $0$, which means $\deg \Det A = \deg \Det D = \sum_{i=1}^n \deg d_i$.
  Therefore, by Step~2 and~\eqref{eq:dim_C_V_decompose}, we have $\dim_C V \le \deg \Det A$ in the differential case, as desired.
  The completely analog holds in the difference case by replacing $\deg \Det$ with $\deg \Det - \ord \Det$.
\end{proof}

The upper bound on $\dim_C V$ given in \cref{prop:upper_bound_of_dim} may not be attained on some equations.
For example, consider a first-order linear differential equation $y' + y = 0$ over $\setC(t)$ with the usual differentiation $'$.
The solution of this equation over $\setC(t)$ is only $y = 0$ and thus the dimension of the solution space is $0$.
However, if the differential field $\setC(t)$ is extended to $\setC(t, \e^t)$, the solution space becomes $V \defeq \set{c\e^{-t}}[c \in \setC]$, which has dimension $1$ over $\setC$.
This is analogous to the situation of extending a field to its algebraic closure in order for $n$th-order algebraic equations to have $n$ solutions.
We explain such an extension briefly.

Let $(K, \sigma, \delta)$ be a differential or difference field.
A differential or difference ring $(R, \bar{\sigma}, \bar{\delta})$ is called a \emph{differential} or \emph{difference extension} of $K$ if $K$ is a subring of $R$ and $\bar{\sigma}$ and $\bar{\delta}$ coincides with $\sigma$ and $\delta$ on $K$, respectively.
A differential or difference equation $A(y) = f$ over $K$ is naturally extended to that over $R$.
Following~\cite{Abramov2014b}, we call an extension $R$ of $K$ \emph{adequate} if it satisfies the following:
\begin{enumerate}[label={\upshape{(AE\arabic*)}}]
  \item $C \defeq \Const(R)$ is a field.\label{item:AE1}
  \item Any scalar homogeneous differential or difference equation $p(y) = 0$ with $p \in K[\theta; \sigma, \delta] \setminus \set{0}$ has the solution space $V$ over $R$ such that $\dim_C V = \deg p$ in the differential case and $\dim_C V = \deg p - \ord p$ in the difference case.\label{item:AE2}
\end{enumerate}

Let $K$ be a differential field.
If $\Const(K)$ is algebraically closed, then there exists an adequate extension $R$ of $K$ such that $\Const(R) = \Const(K)$, called the \emph{universal} (\emph{differential}) \emph{Picard--Vessiot ring} of $K$~\cite[Section~3.2]{VanderPut2003}.
In addition, any differential field $K$ of characteristic $0$ has a difference extension whose constant field is the algebraic closure of $\Const(K)$~\cite{Abramov2014b}; see also~\cite[Exercise~1.5, 2:(c),(d), 3:(c)]{VanderPut2003}.
Therefore, there always exists an adequate extension of any differential field of characteristic $0$.

Next, suppose that $K$ is a difference field.
If $\Const(K)$ is algebraically closed, there exists an adequate extension $R$ of $K$ such that $\Const(R) = \Const(K)$, called the \emph{universal} (\emph{difference}) \emph{Picard--Vessiot ring} of $K$~\cite[Section~1.4]{VanderPut1997}.
Indeed, for any difference field $K$ of characteristic $0$, an adequate difference extension $R$ can be easily constructed~\cite[Proposition~4]{Abramov2014b}, while $\Const(R) = \Const(K)$ is no longer guaranteed.

We then turn to matrix, inhomogeneous equations.
As we will see below,~\ref{item:AE2} is indeed equivalent to the following:

\begin{enumerate}[label={\upshape{(AE2')}}]
  \item Any matrix differential or difference equation $A(y) = f$ with $A \in \GL_n\prn{K[\theta; \sigma, \delta]}$ and $f \in K^n$ has the solution space $V$ over $R$ such that $\dim_C V = \deg \Det A$ in the differential case and $\dim_C V = \deg \Det A - \ord \Det A$ in the difference case.\label{item:AE2_modified}
\end{enumerate}

\begin{lemma}\label{lem:AE2_equivalence}
  \ref{item:AE2} and~\ref{item:AE2_modified} are equivalent.
\end{lemma}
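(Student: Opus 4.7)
The proof splits into the two directions of the stated equivalence. The direction (AE2') implies (AE2) is obtained by specialization: for any nonzero $p \in K[\theta;\sigma,\delta]$, the scalar equation $p(y) = 0$ coincides with the matrix equation $A(y) = 0$ for $A \defeq (p) \in {K[\theta;\sigma,\delta]}^{1 \times 1}$ with $f = 0$. Since $\Det A = p$, the dimension formula in (AE2') reduces exactly to that in (AE2).

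For the converse, the plan is to diagonalize $A$ via its Jacobson normal form (\cref{prop:jacobson}) over the noncommutative PID $K[\theta;\sigma,\delta]$: there exist $U, W \in \GL_n(K[\theta;\sigma,\delta])$ and nonzero $d_1, \dotsc, d_n \in K[\theta;\sigma,\delta]$ (nonzero by nonsingularity of $A$) such that $UAW = \diag(d_1, \dotsc, d_n)$. The substitution $z \defeq W^{-1}(y)$, $g \defeq U(f)$ yields a bijection of $R^n$ and transports $A(y) = f$ into the decoupled system $d_i(z_i) = g_i$ for $i \in \intset{n}$. Since the solution space $V$ of $A(y) = f$ is nonempty, each scalar equation $d_i(z_i) = g_i$ inherits a particular solution from a component of $W^{-1}(y^{\ast})$ for any $y^{\ast} \in V$; hence its solution set is an affine translate of the solution space of the homogeneous scalar equation $d_i(z_i) = 0$, whose dimension over $C$ equals $\deg d_i$ in the differential case or $\deg d_i - \ord d_i$ in the difference case by (AE2). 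Summing over $i$ yields $\dim_C V$.

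To match this sum with $\deg \Det A$ (and with $\deg \Det A - \ord \Det A$ in the difference case), I would use that $U, W$ are invertible matrices over $K[\theta;\sigma,\delta]$, which is contained in the valuation ring of $K(\theta;\sigma,\delta)$ under $-\deg$, and also under $\ord$ when $\delta = 0$. Hence $U, W$ are biproper, and \cref{prop:biproper_equivalence} gives $\vdet(U) = \vdet(W) = 0$, so $\deg \Det A = \sum_i \deg d_i$, and likewise for $\ord \Det A$ in the difference case.

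The main subtlety is the verification that the substitution $z = W^{-1}(y)$ correctly transports solutions, namely that $(UAW)(y) = U(A(W(y)))$ holds for $y \in R^n$. This reduces componentwise to the module-action associativity $(pq) \bullet r = p \bullet (q \bullet r)$ for $p, q \in K[\theta;\sigma,\delta]$ and $r \in R$, which follows from the commutation rule~\eqref{eq:skew_polynomial_commutation_rule} together with the pseudo-linearity of the extension of $\theta$ to $R$ afforded by the adequate extension hypothesis. Once this module-theoretic compatibility is in hand, the remainder of the argument closely parallels Step~3 in the proof of \cref{prop:upper_bound_of_dim}.
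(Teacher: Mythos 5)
Your reduction to scalar equations via the Jacobson normal form, the dimension count for the homogeneous parts using~\ref{item:AE2}, and the identification $\deg \Det A = \sum_i \deg d_i$ via biproperness are all correct and match what the paper does (it simply points back to Step~3 of the proof of \cref{prop:upper_bound_of_dim} for this part). However, there is a genuine gap at the phrase ``Since the solution space $V$ of $A(y) = f$ is nonempty.'' Nonemptiness of $V$ is not a hypothesis here --- it is part of what~\ref{item:AE2_modified} asserts. Since $\deg \Det A \ge 0$ for $A \in \GL_n(K[\theta;\sigma,\delta])$, the equality $\dim_C V = \deg\Det A$ forces $V \ne \varnothing$, and indeed \cref{thm:dof_over_adequate_extension}, which is deduced from this lemma, carries no nonemptiness hypothesis (in contrast to \cref{prop:upper_bound_of_dim}, which explicitly assumes $V \ne \varnothing$). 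By assuming a particular solution exists, you have only proved the conditional statement ``if $V \ne \varnothing$ then $\dim_C V = \deg\Det A$,'' which \cref{prop:upper_bound_of_dim} together with~\ref{item:AE2} already essentially gives.

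The missing idea --- and the actual content of the paper's proof --- is the existence of a solution of the scalar inhomogeneous equation $p(y) = f$ with $p \ne 0$ and $f \ne 0$ over the adequate extension $R$. The paper gets this from~\ref{item:AE2} alone by a dimension-comparison trick: set $q \defeq \theta f^{-1} p$ in the differential case (resp.\ $q \defeq (\theta - 1) f^{-1} p$ in the difference case), so that $q(y) = 0$ exactly means $f^{-1}p(y) \in C$, i.e., $p(y) = cf$ for some constant $c$. By~\ref{item:AE2} the solution space of $q(y)=0$ has dimension $\deg p + 1$, strictly larger than that of $p(y)=0$, so some solution $v$ satisfies $p(v) = cf$ with $c \ne 0$, and $c^{-1}v$ solves $p(y) = f$. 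You would need to supply this step (or some substitute for it) to complete the direction~\ref{item:AE2}~$\Rightarrow$~\ref{item:AE2_modified}. Your concluding remark about the associativity $(pq)\bullet r = p \bullet (q \bullet r)$ of the module action is a legitimate point but is routine; the real work of the lemma is the existence argument above.
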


\begin{proof}
  It is clear that~\ref{item:AE2_modified} implies~\ref{item:AE2}; we show the converse holds.
  Let $(K, \sigma, \delta)$ be a differential or difference field and $R$ its extension satisfying~\ref{item:AE1} and~\ref{item:AE2}.
  As stated in the proof of \cref{prop:upper_bound_of_dim}, a matrix differential and difference equation is essentially reduced to $n$ scalar equations by considering the Jacobson normal form.
  This means that it suffices to consider only a scalar inhomogeneous equation $p(y) = f$ with $p \in K[\theta; \sigma, \delta] \setminus \set{0}$ and $f \in K \setminus \set{0}$.
  In addition, the solution space of $p(y) = f$ over $R$ is the translation of the solution space of $p(y) = 0$ over $R$ by any solution of $p(y) = f$.
  Therefore, our goal is to show that $p(y) = f$ has at least one solution over $R$.

  We first deal with the differential case.
  Let $q \defeq \theta f^{-1} p$.
  Then any solution $y \in R$ of $q(y) = 0$ is also a solution of $p(y) = cf$ for some $c \in C \defeq \Const(R)$ (see~\cite[Exercise~1.14,~1]{VanderPut2003}).
  By~\ref{item:AE2}, the dimension of the solution space $W$ of $q(y) = 0$ is $\deg q = \deg p + 1$, whereas that of $p(y) = 0$ is $\deg p < \deg q$.
  Therefore, there exists $v \in W$ that is not a solution of $p(v) = 0$, i.e., $p(v) = cf$ for some nonzero $c \in \mult{C}$.
  Then $c^{-1}v$ is a solution of $p(y) = f$, as required.
  The difference case can be in the same way by considering $q \defeq \prn{\theta - 1}\prn{f^{-1}p} = \theta f^{-1} p - f^{-1} p$.
\end{proof}

\Cref{prop:upper_bound_of_dim} and \cref{lem:AE2_equivalence} lead us to the following consequence.

\begin{theorem}\label{thm:dof_over_adequate_extension}
  Let $(K, \sigma, \delta)$ be a differential or difference field, $R$ its adequate extension, and $C \defeq \Const(R)$.
  Let $V$ be the solution space of $A(y) = f$ over $R$ with $A \in \GL_n\prn{K[\theta; \sigma, \delta]}$ and $f \in K^n$.
  Then it holds $\dim_C V = \deg \Det A$ in the differential case and $\dim_C V = \deg \Det A - \ord \Det A$ in the difference case.
\end{theorem}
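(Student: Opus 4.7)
The plan is direct: the conclusion of the theorem is exactly condition \ref{item:AE2_modified} applied to the adequate extension $R$, so it will follow immediately from \cref{lem:AE2_equivalence}. Concretely, since $R$ is adequate by hypothesis, by definition it satisfies \ref{item:AE1}---so $C = \Const(R)$ is a field and $\dim_C V$ is meaningful as the dimension of an affine space over $C$---together with \ref{item:AE2}. Invoking \cref{lem:AE2_equivalence} upgrades \ref{item:AE2} to \ref{item:AE2_modified}, which asserts precisely that for every $A \in \GL_n(K[\theta; \sigma, \delta])$ and $f \in K^n$, the solution space of $A(y) = f$ over $R$ has $C$-dimension $\deg \Det A$ in the differential case and $\deg \Det A - \ord \Det A$ in the difference case. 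This is the theorem.

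It is worth pointing out that all the genuinely substantive work has already been carried out upstream. The upper bound $\dim_C V \le \deg \Det A$ (resp.\ $\le \deg \Det A - \ord \Det A$) is the content of \cref{prop:upper_bound_of_dim}, whose proof reduces the matrix problem to scalar problems via the Jacobson normal form $D = UAW$ with diagonal entries $d_i$, using biproperness of $U$ and $W$ to conclude $\deg \Det A = \sum_i \deg d_i$ (and similarly for $\ord$). The matching lower bound is exactly what adequacy asserts for each scalar homogeneous equation $d_i(z_i) = 0$. Finally, the inhomogeneous half of \cref{lem:AE2_equivalence} produces a particular solution of $p(y) = f$ through the auxiliary skew polynomial $q \defeq \theta f^{-1} p$ in the differential case (resp.\ $q \defeq (\theta - 1)(f^{-1} p)$ in the difference case), so nonemptiness of $V$ is automatic and the affine dimension is well-defined.

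Consequently, no genuinely new obstacle remains: the proof I would write is a one-line invocation of \cref{lem:AE2_equivalence} combined with the definition of an adequate extension. The only point that requires a moment's care is verifying that the reduction in \cref{lem:AE2_equivalence} from matrix to scalar equations respects the hypothesis $A \in \GL_n(K[\theta; \sigma, \delta])$ in the appropriate sense (so that each diagonal block $d_i$ of the Jacobson normal form is nonzero, and the sum $\sum_i \deg d_i$ genuinely equals $\deg \Det A$), but this was already handled inside the proof of that lemma.
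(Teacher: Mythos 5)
Your proposal is correct and follows exactly the paper's intended route: the theorem is precisely condition (AE2') for the adequate extension $R$, which \cref{lem:AE2_equivalence} derives from (AE2), with the substantive content (Jacobson normal form reduction, biproperness, and the upper bound) already established in \cref{prop:upper_bound_of_dim}. Nothing is missing.
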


Since $\deg$ and $\ord$ are discrete valuations, we can apply our algorithms to compute the dimension of solution spaces of linear differential or difference equations over an adequate extension.

\addcontentsline{toc}{section}{Acknowledgments}
\section*{Acknowledgments}
The author thanks Hiroshi Hirai for teaching me about valuation theory.
This work was supported by JST ACT-I Grant Number JPMJPR18U9, Japan, and Grant-in-Aid for JSPS Research Fellow Grant Number JP18J22141, Japan.

\addcontentsline{toc}{section}{References}
\bibliographystyle{abbrv}
\bibliography{references}

\end{document}